\let\doendproof\endproof
\renewcommand\endproof{~\hfill$\qed$\doendproof}
\newcommand{\Z}{\mathbb{Z}}
\newcommand{\N}{\mathbb{N}}
\newcommand{\ZVASS}{\text{$\mathbb{Z}$-$\mathsf{VASS}$}}
\newcommand{\PRBCA}{\mathsf{PRBCA}}
\newcommand{\RBCA}{\mathsf{RBCA}}
\newcommand{\PDA}{\mathsf{PDA}}
\newcommand{\NFA}{\mathsf{NFA}}
\newcommand{\cA}{\mathcal{A}}
\newcommand{\cV}{\mathcal{V}}
\newcommand{\cC}{\mathcal{C}}
\newcommand{\cM}{\mathcal{M}}
\newcommand{\cT}{\mathcal{T}}
\newcommand{\bigO}{\mathcal{O}}
\renewcommand{\vec}[1]{\bm{#1}}
\newcommand{\var}[1]{\mathtt{#1}}
\newcommand{\oset}[3][0ex]{%
  \mathrel{\mathop{#3}\limits^{
    \vbox to#1{\kern-2\ex@
    \hbox{$\scriptstyle#2$}\vss}}}}
\DeclareDocumentCommand{\autstep}{O{}}{\oset{#1}{\rightarrow}}
\DeclareDocumentCommand{\autsteps}{O{}}{\oset{#1}{\rightarrow}}
\DeclareDocumentCommand{\deriv}{O{}}{\,\Rightarrow_{#1}\,}
\DeclareDocumentCommand{\derivs}{O{}}{\,\oset{*}{\Rightarrow}_{#1}\,}
\DeclareDocumentCommand{\derivv}{O{} m}{\,\Rightarrow_{#1}^{#2}\,}
\DeclareDocumentCommand{\derivvs}{O{} m}{\,\oset{*}{\Rightarrow}_{#1}^{#2}\,}
\newcommand{\yield}{\mathsf{yield}}
\newcommand{\App}{\mathrm{App}}
\newcommand{\pP}{\mathfrak{p}}
\newcommand{\bu}{\bm{u}}
\newcommand{\bv}{\bm{v}}
\newcommand{\bmf}{\bm{f}}
\newcommand{\bzero}{\bm{0}}
\newcommand{\Parikh}{\Psi}
\newcommand{\counters}{\varphi}
\newcommand{\pumpleq}{\sqsubseteq}
\newcommand{\linembed}{\hookrightarrow_{\mathrm{lin}}}
\newcommand{\ctr}{\mathsf{c}}
\newcommand{\pctr}{\mathsf{p}}
\newcommand{\upcl}[1]{#1\mathop{\uparrow}}
\newcommand{\bin}{\mathsf{bin}}
\title{Unboundedness problems for machines with reversal-bounded counters}
\author{Pascal Baumann\inst{1}\orcidID{0000-0002-9371-0807}
\and Flavio D'Alessandro\inst{2}
\and Moses Ganardi\inst{1}\orcidID{0000-0002-0775-7781}
\and Oscar Ibarra\inst{3}
\and Ian McQuillan\inst{4}
\and Lia Sch\"{u}tze\inst{1}\orcidID{0000-0003-4002-5491}
\and Georg Zetzsche\inst{1}\orcidID{0000-0002-6421-4388}}
\authorrunning{P. Baumann et al.}
\institute{Max Planck Institute for Software Systems (MPI-SWS), Germany
\and Dept. of Mathematics G. Castelnuovo, Sapienza University of Rome, Italy 
\and Dept. of Computer Science, University of California, Santa Barbara, CA, USA 
\and Dept. of Computer Science, University of Saskatchewan, Saskatoon, Canada}
\begin{document}

\maketitle
\begin{abstract}
	We consider a general class of decision problems concerning formal
	languages, called ``(one-dimensional) unboundedness predicates'', for automata
	that feature reversal-bounded counters (RBCA).  We show that each problem in this
	class reduces---non-deterministically in polynomial time---to the same problem
	for just finite automata. We also show an analogous reduction for automata that
	have access to both a pushdown stack and reversal-bounded counters (PRBCA).

	This allows us to answer several open questions: For example, we show
	that it is $\coNP$-complete to decide whether a given (P)RBCA language
	$L$ is bounded, meaning whether there exist words $w_1,\ldots,w_n$ with
	$L\subseteq w_1^*\cdots w_n^*$.  For PRBCA, even decidability was open. Our
	methods also show that there is no language of a (P)RBCA of intermediate
	growth. This means, the number of words of each length grows either polynomially or exponentially.
	Part of our proof is likely of independent interest: We show that one
	can translate an RBCA into a machine with $\Z$-counters in logarithmic space,
	while preserving the accepted language.

  \keywords{Formal languages \and Decidability \and Complexity \and Counter automata \and Reversal-bounded \and Pushdown \and Boundedness \and Unboundedness}
\end{abstract}
\section{Introduction}
A classic idea in the theory of formal languages is the concept of
boundedness of a language.  A language $L$ over an alphabet $\Sigma$ is called
\emph{bounded} if there exists a number $n\in\N$ and words
$w_1,\ldots,w_n\in\Sigma^*$ such that $L\subseteq w_1^*\cdots w_n^*$. What
makes boundedness important is that a rich variety of algorithmic problems
become decidable for bounded languages.  For example, when Ginsburg and
Spanier~\cite{ginsburg1964bounded} introduced boundedness in 1964, they already
showed that given two context-free languages, one of them bounded, one can decide
inclusion~\cite[Theorem 6.3]{ginsburg1964bounded}. This is because if
$L\subseteq w_1^*\cdots w_n^*$ for a context-free language, then the set
$\{(x_1,\ldots,x_n)\in\N^n \mid w_1^{x_1}\cdots w_n^{x_n}\in L\}$ is
effectively semilinear, which permits expressing inclusion in Presburger
arithmetic. Here, boundedness is a crucial assumption: Hopcroft
has shown that if $L_0\subseteq\Sigma^*$ is context-free, then the problem
of deciding $L_0\subseteq L$ for a given context-free language $L$ is decidable if
and only if $L_0$ is bounded~\cite[Theorem 3.3]{hopcroft1969equivalence}.

The idea of translating questions about bounded languages into
Presburger arithmetic has been applied in several other contexts. For example,
Esparza, Ganty, and Majumdar~\cite{DBLP:conf/lics/EsparzaGM12} have shown that
many classes of infinite-state systems are \emph{perfect modulo bounded
languages}, meaning that the bounded
languages form a subclass that is amenable to many algorithmic problems. As
another example, the subword ordering has a decidable first-order theory on
bounded context-free languages~\cite{DBLP:conf/fossacs/KuskeZ19}, whereas on
languages $\Sigma^*$, even the existential theory is
undecidable~\cite{DBLP:conf/lics/HalfonSZ17}.  This, in turn, implies that
initial limit Datalog is decidable for the subword ordering on bounded
context-free languages~\cite{DBLP:conf/lics/BurnORW21}. Finally, bounded context-free languages
can be closely approximated by regular ones~\cite{DBLP:journals/tcs/DAlessandroI15b}.


This raises the question of how one can decide whether a given language is
bounded.  For context-free languages this problem is
decidable~\cite[Theorem~5.2(a)]{ginsburg1964bounded} in polynomial
time~\cite[Theorem 19]{DBLP:journals/ijfcs/GawrychowskiKRS10}.

\vspace{-0.2cm}
\subsubsection{Boundedness for RBCA.} 
Despite the importance of boundedness, it had been open for many
years~\cite{DBLP:journals/ijfcs/CadilhacFM12,DBLP:conf/dlt/EremondiIM15}\footnote{Note
	that \cite{DBLP:journals/ijfcs/CadilhacFM12} is about Parikh automata,
which are equivalent to RBCA.} whether boundedness is
decidable for one of the most well-studied types of infinite-state systems:
\emph{reversal-bounded (multi-)counter automata} (RBCA). These are machines
with counters that can be incremented, decremented, and even tested for zero.
However, in order to achieve decidability of basic questions, there is a bound
on the number of times each counter can \emph{reverse}, that is, switch between
\emph{incrementing} and \emph{decrementing phases}. They were first studied in
the 1970s~\cite{DBLP:journals/jacm/Ibarra78,baker1974reversal} and have
received a lot of attention since~\cite{GURARI1981220,DBLP:journals/jcss/Chan88,DBLP:journals/tcs/IbarraSDBK02,DBLP:conf/mfcs/FinkelS08,DBLP:conf/cav/HagueL11,DBLP:journals/ita/CadilhacFM12,DBLP:journals/ijfcs/CadilhacFM12,DBLP:journals/ijfcs/EremondiIM18,DBLP:journals/iandc/JantzenK03,DBLP:conf/icalp/Zetzsche16,DBLP:conf/icalp/ClementeCLP17,DBLP:journals/mst/CadilhacKM18,carpi2021relationships,DBLP:conf/lics/HalfonSZ17,KlaedtkeRuess2003}. The desirable properties mentioned above for bounded context-free
languages also apply to bounded RBCA.
Furthermore, any bounded language accepted by an RBCA (even one augmented with a stack) can be effectively determinized~\cite{DBLP:journals/ijfcs/IbarraS12} (see also~\cite{DBLP:journals/ijfcs/CadilhacFM12,carpi2021relationships}),
opening up even more avenues to algorithmic analysis. This
makes it surprising that decidability of boundedness remained open for many years.

Decidability of boundedness for RBCA was settled
in~\cite{DBLP:conf/icalp/CzerwinskiHZ18}, which proves boundedness decidable
even for the larger class of vector addition systems with states (VASS), with
acceptance by configuration. However, the results from
\cite{DBLP:conf/icalp/CzerwinskiHZ18} leave several aspects unclarified, which we
investigate here:
\begin{enumerate}[leftmargin=*, label=Q\arabic*:]
	\item What is the complexity of deciding boundedness for RBCA? The
		algorithm in~\cite{DBLP:conf/icalp/CzerwinskiHZ18} employs the
		KLMST decomposition for
		VASS~\cite{DBLP:conf/stoc/Mayr81,DBLP:conf/stoc/Kosaraju82,DBLP:journals/tcs/Lambert92,sacerdote1977decidability,leroux2015reachability},
		which is well-known to incur Ackermannian complexity~\cite{DBLP:conf/lics/LerouxS19}.
	\item Is boundedness decidable for \emph{pushdown RBCA}
		(PRBCA)~\cite{DBLP:journals/jacm/Ibarra78}?
		These are automata which, in addition to reversal-bounded
		counters, feature a stack. They can model recursive
		programs with numeric data types~\cite{DBLP:conf/cav/HagueL11}.
		Whether boundedness is decidable was stated as open
		in~\cite{DBLP:conf/dlt/EremondiIM15,DBLP:journals/ijfcs/EremondiIM18}.
	\item Are there languages of RBCA of intermediate growth? As far as we
		know, this is a long-standing open question in
		itself~\cite{DBLP:conf/stacs/IbarraR86}. The \emph{growth} of a
		language $L\subseteq\Sigma^*$ is the \emph{counting function} $g_L\colon
		\N\to\N$, where $g_L(n)$ is the number of words of length $n$
		in $L$. This concept is closely tied to boundedness: For
		regular and context-free languages, it is known that a language
		has polynomial growth if and only if it is bounded (and it has
		exponential growth otherwise). A language is said to have
		\emph{intermediate growth} if it has neither polynomial nor
		exponential growth.
\end{enumerate}

\vspace{-0.2cm}
\subsubsection{Contribution I:} We prove versions of one of the main results
in~\cite{DBLP:conf/icalp/CzerwinskiHZ18}, one for RBCA and one for PRBCA.
Specifically, the paper~\cite{DBLP:conf/icalp/CzerwinskiHZ18} not only shows
that boundedness is decidable for VASS, but it introduces a general class of
\emph{unboundedness predicates} for formal languages. It is then shown in
\cite{DBLP:conf/icalp/CzerwinskiHZ18} that any unboundedness predicate is
decidable for VASS if and only if it is decidable for regular languages.
Our first two main results are:
\begin{enumerate}[leftmargin=*, label=MR\arabic*:]
	\item Deciding any unboundedness predicate for RBCA reduces in $\NP$ to
		deciding the same predicate for regular languages.
	\item Deciding any unboundedness predicate for PRBCA reduces in $\NP$
		to deciding the same predicate for context-free languages.
\end{enumerate}
However, it should be noted that our results only apply to those unboundedness
predicates from~\cite{DBLP:conf/icalp/CzerwinskiHZ18} that are
\emph{one-dimensional}. Fortunately, these are enough for our applications.
These results allow us to settle questions (Q1)--(Q3) above and derive the
exact complexity of several other problems.  It follows that boundedness for
both RBCA and PRBCA is $\coNP$-complete, thus answering (Q1) and (Q2).
Furthermore, the proof shows that if boundedness of a PRBCA does not hold, then
its language has exponential growth. This implies that there are no RBCA
languages of intermediate growth (thus settling (Q3)), and even that the same
holds for PRBCA. In particular, deciding polynomial growth of (P)RBCA is
$\coNP$-complete and deciding exponential growth of (P)RBCA is $\NP$-complete.
We can also derive from our result that deciding whether a (P)RBCA language is
infinite is $\NP$-complete (but this also follows easily
from~\cite{DBLP:conf/cav/HagueL11}, see \cref{sec:results}). Finally, our results
imply that it is $\PSPACE$-complete to decide if an RBCA language
$L\subseteq\Sigma^*$ is \emph{factor universal}, meaning it contains every word of
$\Sigma^*$ as a factor (i.e.\ as an infix). Whether this problem is decidable
for RBCA was also left as an open problem
in~\cite{DBLP:conf/dlt/EremondiIM15,DBLP:journals/ijfcs/EremondiIM18} (under
the name \emph{infix density}).

We prove our results (MR1) and (MR2) by first translating (P)RBCA into models that have
$\Z$-counters instead of reversal-bounded counters. A \emph{$\Z$-counter} is
one that can be incremented and decremented, but cannot be tested for zero.
Moreover, it can assume negative values. With these counters, acceptance is
defined by reaching a configuration where all counters are zero (in
particular, the acceptance condition permits a single zero-test on each
counter). Here, finite automata with $\Z$-counters are called
\emph{$\Z$-VASS}~\cite{DBLP:conf/rp/HaaseH14}.
$\Z$-counters are also known as \emph{blind
counters}~\cite{DBLP:journals/tcs/Greibach78a} and it is a standard fact that
RBCA are equivalent (in terms of accepted languages) to $\Z$-VASS~\cite[Theorem
2]{DBLP:journals/tcs/Greibach78a}. 

Despite the equivalence between RBCA and $\Z$-VASS being so well-known,
there was apparently no known translation from RBCA to $\Z$-VASS in polynomial
time.  Here, the difficulty stems from simulating zero-tests (which can occur
an unbounded number of times in an RBCA): To simulate these,
the $\Z$-VASS needs to keep track of which counter has completed which
incrementing/decrementing phase, using only polynomially many control states.
It is also not obvious how to employ the $\Z$-counters for this, as they are
only checked in the end.

\vspace{-0.2cm}
\subsubsection{Contribution II:} As the first step of showing (MR1), we show that
\begin{enumerate}[leftmargin=*, resume*]
	\item RBCA can be translated (preserving the language) into $\Z$-VASS in logarithmic space.
\end{enumerate}
This also implies
that translations to and from another equivalent model, Parikh
automata~\cite{KlaedtkeRuess2003}, are possible in polynomial time: It was
recently shown that Parikh automata (which have received much attention in
recent
years~\cite{DBLP:conf/icalp/ClementeCLP17,DBLP:journals/mst/CadilhacKM18,DBLP:journals/ita/CadilhacFM12,DBLP:journals/ijfcs/CadilhacFM12,DBLP:conf/icalp/BostanCKN20,DBLP:conf/fsttcs/FiliotGM19})
can be translated in polynomial time into $\Z$-VASS~\cite{HaaseZ19}. Together
with our new result, this implies that one can translate among RBCA, $\Z$-VASS,
and Parikh automata in polynomial time.  Furthermore, our result yields a
logspace translation of PRBCA into $\Z$-grammars, an extension of
context-free grammars with $\Z$-counters. The latter is the first step for
(MR2).

%
%
%

\section{Main Results: Unboundedness and (P)RBCA}\label{sec:results}

\begin{table}[t]
\begin{center}
\begin{tabular}{lll}\toprule
Problem &  $\ZVASS$/$\RBCA$ & $\Z$-grammars/$\PRBCA$ \\ \midrule
Boundedness		& $\coNP$-complete & $\coNP$-complete \\
Finiteness 		& $\coNP$-complete & $\coNP$-complete \\
Factor universality~	& $\PSPACE$-complete~  & undecidable   \\\bottomrule
\end{tabular}
\end{center}
\caption{Complexity results. The completeness statements are meant with respect to deterministic logspace reductions.}\label{table-complexity}
\end{table}

\subsubsection{Reversal-bounded counter automata and pushdowns.}
A \emph{pushdown automaton with $k$ counters} is a tuple $\cA = (Q,\Sigma,\Gamma,q_0,T,F)$
where $Q$ is a finite set of states, $\Sigma$ is an input alphabet, $\Gamma$ is a stack alphabet,
$q_0 \in Q$ is an initial state,
$T$ is a finite set of transitions $(p,w,\mathrm{op},q) \in Q \times \Sigma^* \times \mathrm{Op} \times Q$,
and $F\subseteq Q$ is a set of final states.
Here $\mathrm{Op}$ is defined as
\[
	\mathrm{Op} = \{ \mathsf{inc}_i, \mathsf{dec}_i, \mathsf{zero}_i, \mathsf{nz}_i  \mid 1 \le i \le k \} \cup \Gamma \cup \bar \Gamma \cup \{\varepsilon\},
\]
containing counter and stack operations.
Here $\bar \Gamma = \{ \bar \gamma \mid \gamma \in \Gamma \}$ is a disjoint copy of $\Gamma$.
A \emph{configuration} is a tuple $(p,\alpha,\vec v) \in Q \times \Gamma^* \times \N^k$.
We write $(p,\alpha,\vec u)\xrightarrow{w}(p',\alpha',\vec u')$ if there is a
$(p,w,\mathrm{op},p')\in T$ such that one of the following holds:
\begin{itemize}
\item $\mathrm{op} = \mathsf{inc}_i$, $\vec u' = \vec u + \vec e_i$, and $\alpha' = \alpha$ where $\vec e_i \in \N^k$ is the $i$-th unit vector,
\item $\mathrm{op} = \mathsf{dec}_i$, $\vec u' = \vec u - \vec e_i$, and $\alpha' = \alpha$,
\item $\mathrm{op} = \mathsf{zero}_i$, $\vec u[i] = 0$, $\vec u' = \vec u$, and $\alpha' = \alpha$
\item $\mathrm{op} = \mathsf{nz}_i$, $\vec u[i] \neq 0$, $\vec u' = \vec u$, and $\alpha' = \alpha$,
\item $\mathrm{op} = \gamma \in \Gamma$, $\vec u' = \vec u$, and $\alpha' = \alpha \gamma$,
\item $\mathrm{op} = \bar \gamma \in \bar \Gamma$, $\vec u' = \vec u$, and $\alpha' \gamma = \alpha$,
\item $\mathrm{op} = \varepsilon$, $\vec u' = \vec u$, and $\alpha' = \alpha$.
\end{itemize}
We extend this notation to longer runs in the natural way.

A \emph{$(k,r)$-PRBCA} (pushdown reversal-bounded counter automaton) $(\cA,r)$
consists of a pushdown automaton with $k$ counters $\cA$ and a number $r \in \N$, encoded in unary.
A counter $\ctr_i$ \emph{reverses} if the last (non-test) operation affecting it was $\mathsf{inc}_i$ and the next operation is $\mathsf{dec}_i$, or vice versa.
A run is \emph{$r$-reversal bounded} if every counter reverses at most $r$ times.
The \emph{language} of $(\cA,r)$ is 
\[
  \begin{aligned}
	L(\cA,r)=\{w\in\Sigma^* \mid &~\exists q\in F,~r\text{-reversal bounded run } (q_0,\varepsilon,\vec 0)\xrightarrow{w}(q,\varepsilon,\vec 0) \}.
  \end{aligned}
\]

A \emph{$(k,r)$-RBCA} (reversal-bounded counter automaton) is a $(k,r)$-PRBCA where $\cA$ only uses counter operations.
We denote by $\RBCA$ and $\PRBCA$ the class of (P)RBCA languages.

Notice that we impose the reversal bound \emph{externally} (following
\cite{DBLP:conf/cav/HagueL11}) whereas in alternative definitions found in the
literature the automaton has to ensure \emph{internally} that the number of
reversals on every (accepting) run does not exceed $r$,
e.g.~\cite{DBLP:journals/jacm/Ibarra78}.  Clearly, our definition subsumes the
latter one; in particular, \cref{main-conversion} also holds for (P)RBCAs with
an internally checked reversal bound.

A $d$-dimensional \emph{$\Z$-VASS ($\Z$-vector addition system with states)} is a
tuple $\cV=(Q,\Sigma,q_0,T,F)$, where
$Q$ is a finite set of states, $\Sigma$ is an alphabet,
$q_0 \in Q$ is an initial state, 
$T$ is a finite set of transitions $(p,w,\vec v,p') \in Q \times \Sigma^* \times \Z^d \times Q$,
and $F\subseteq Q$ is a set of final states.
A \emph{configuration} of a $\Z$-VASS is a tuple $(p,\vec v) \in Q \times \Z^d$.
We write $(p,\vec u)\xrightarrow{w}(p',\vec u')$ if there is a
transition $(p,w,\vec v,p')$ such that $\vec u'=\vec u+\vec v$.
We extend this notation to longer runs in the natural way.
The \emph{language} of the $\Z$-VASS
is defined as
\[ L(\cV)=\{w\in\Sigma^* \mid \exists q\in F\colon~(q_0,\vec 0)\xrightarrow{w}(q,\vec 0)\}. \]
A \emph{($d$-dimensional) $\Z$-grammar} is a tuple $G = (N,\Sigma,S,P)$ with
disjoint finite sets $N$ and $\Sigma$ of nonterminal and terminal symbols, a
start nonterminal $S \in N$, and a finite set of productions $P$ of the form
$(A,u,\vec v) \in N \times (N \cup \Sigma)^* \times \Z^d$. We also write $(A\to
u,\bv)$ instead of $(A,u,\bv)$.  We call $\vec v$ the \emph{(counter) effect}
of the production $(A \to u,\bv)$.  For words $x,y\in(N\cup \Sigma)^{*}$, we
write $x\derivv{\bv} y$ if there is a production $(A\to u,\bv)$ such that
$x=rAs$ and $y=rus$. Moreover, we write $x\derivvs{\bv} y$ if there are words
$x_1,\ldots,x_n\in(N\cup \Sigma)^*$ and $\bv_1,\ldots,\bv_n\in\Z^d$ with
$x\derivv{\bv_1} x_1\derivv{\bv_2}\cdots \derivv{\bv_n}x_n=y$ and
$\bv=\bv_1+\cdots+\bv_n$. We use the notation $\deriv$ if the counter effects
do not matter: We have $x\deriv y$ if there exists $\bv$ such that
$x\deriv[\bv] y$; and similarly for $\derivs$. If derivations are restricted
to a subset $Q\subseteq P$ of productions, we write $\deriv[Q]$ (resp.\
$\derivs[Q]$).

The \emph{language} of the $\Z$-grammar $G$ is the set of all words $w \in
\Sigma^*$ such that $S\derivvs{\bzero} w$. In other words, if there exists a
derivation $S \derivs w$ where the effects of all occurring productions sum to
the zero vector $\vec 0$. $\Z$-grammars of dimension $d$ are also known as
\emph{valence grammars over $\Z^d$}~\cite{DBLP:journals/tcs/FernauS02}.

For our purposes it suffices to assume a unary encoding of the $\Z^d$-vectors (effects) occurring in $\Z$-VASS and $\Z$-grammars.
However, this is not a restriction: Counter updates with $n$-bit binary encoded numbers
can be easily simulated with unary encodings at the expense of $dn$ many fresh counters
(see \cref{appendix-counter-updates}).

\subsubsection{Conversion results.}
The following is our first main theorem:
\begin{theorem}\label{main-conversion}
	RBCA can be converted into $\Z$-VASS in logarithmic space. \linebreak 
	PRBCA can be converted into $\Z$-grammars in logarithmic
	space.
\end{theorem}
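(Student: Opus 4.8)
The plan is to keep, in both cases, the finite-state skeleton of the machine — the stack, in the PRBCA case — and to replace each reversal-bounded counter by a small bundle of $\Z$-counters, simulating the original run step by step; the $\Z$-VASS acceptance condition ``every counter is $0$ at the end'' — equivalently, a single zero-test per counter, at the very end — is what has to carry the information about all the internal zero-tests. The core tool is a \emph{folding} trick. For a reversal-bounded counter $\ctr_i$, introduce a fresh $\Z$-counter $c$ that mirrors $\ctr_i$ (gets the same increments and decrements) up to some chosen moment $t$ of the run, and from $t$ onward gets the \emph{opposite} updates. Since a legal run ends with $\ctr_i=0$, the final value of $c$ equals twice the value of $\ctr_i$ at time $t$, so the acceptance condition ``$c=0$'' certifies exactly that $\ctr_i$ was $0$ at time $t$, i.e.\ it validates a zero-test fired at $t$. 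A dual trick uses a \emph{cleanup} phase at the end of the run (states reachable only once the input is exhausted, carrying $\varepsilon$-self-loops that freely decrement an auxiliary counter): it turns an end-test ``auxiliary counter $=0$'' into the weaker assertion ``its frozen value is $\ge 0$'', which we use to certify that $\ctr_i$ never went below $0$, and, with a constant offset, to certify ``$\ctr_i\ge 1$ at time $t$'' for the non-zero tests $\mathsf{nz}_i$.

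Next I would pin down the shape of an $r$-reversal-bounded counter: as a function of time it is piecewise monotone with at most $r+1$ monotone phases, it equals $0$ only at the start and at those phase boundaries where a decrementing phase bottoms out, and it is $\ge 1$ strictly inside any phase. Hence $\ctr_i$ touches $0$ at only $\bigO(r)$ essentially distinct moments (a whole valley's worth of zero-tests is handled by one folding counter), non-negativity and every illegal $\mathsf{nz}_i$ can fail only at one of those $\bigO(r)$ valleys, and so $\bigO(r)$ folding counters, $\bigO(r)$ cleanup counters and one plain mirror counter (to enforce $\ctr_i=0$ at the end) per original counter are enough — $\bigO(kr)$ new $\Z$-counters in total, which is polynomial; the paper's unary-encoding convention keeps all counter effects of constant size.

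For the PRBCA statement the stack is dealt with orthogonally: apply the classical triple construction turning a pushdown automaton into a context-free grammar (nonterminals $[p,\gamma,q]$, productions mirroring transitions), and let each transition's counter operations become the counter \emph{effect} of the corresponding production, yielding a $\Z$-grammar. The folding and cleanup certificates transfer, because a $\Z$-grammar only constrains the \emph{total} effect of a derivation; the point that needs care is that ``time'' in a derivation is a left-to-right traversal of the parse tree rather than a linear run, so the ``mirror up to $t$, then negate'' split must be performed along the derivation — the effect produced by the subtree hanging off the zero-test production versus everything generated after it. Composing this with the RBCA$\to\Z$-VASS construction gives the logarithmic-space translation of PRBCA into $\Z$-grammars.

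\textbf{The main obstacle.} Everything above is routine \emph{except} the requirement to use only polynomially many control states. The obvious simulation would store in the control state the current phase of \emph{every} counter, which is $(r+1)^{k}$ states — exponential in the number $k$ of counters. To stay polynomial, the construction cannot keep any per-counter memory in its control states: it must still route each $\mathsf{zero}_i$ (and each $\mathsf{nz}_i$) to the right auxiliary counter, decide when each folding or cleanup counter switches sign or freezes, and enforce the per-counter bound of $r$ reversals — all of this pushed into the $\Z$-counters themselves and into nondeterministic guesses that the final all-zero test is forced to verify. Designing this accounting so that \emph{soundness} still holds — no accepting $\Z$-VASS run corresponds to an illegal reversal-bounded run, even though the control state has ``forgotten'' the phases — is, I expect, where essentially all of the work goes.
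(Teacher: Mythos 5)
Your proposal correctly isolates the crux of the theorem, but it does not resolve it: you write that routing each $\mathsf{zero}_i$/$\mathsf{nz}_i$ to the right auxiliary counter, deciding when each folding counter flips sign, and enforcing the reversal bound must all be ``pushed into the $\Z$-counters themselves and into nondeterministic guesses that the final all-zero test is forced to verify,'' and that designing this accounting ``is, I expect, where essentially all of the work goes.'' That accounting is precisely the content of the proof, and it is missing. Your folding trick is sound for certifying $\ctr_i=0$ at a \emph{known} fold point, and grouping zero-tests by valley is a reasonable reduction to $\bigO(r)$ fold points per counter; but to fire the correct folding counter and to flip it at the right moment, the machine must know which valley (phase) each counter is currently in, and that is exactly the per-counter information you have (correctly) forbidden yourself from storing in the control state. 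Without a verified mechanism replacing it, soundness fails: an accepting $\Z$-VASS run could flip folds at the wrong times, mis-certify non-negativity, or correspond to a run with more than $r$ reversals.

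The paper closes this gap with two devices you would need analogues of. First, it reduces to one reversal per counter by introducing, for each original counter $\ctr_i$, unary ``phase counters'' $\pctr_{i,1},\dots,\pctr_{i,r+1}$ that store the current phase \emph{in counters rather than in states}; the phase is read back via non-zero-tests, phases only ever advance (so these counters are themselves one-reversal), and each transition is replaced by an $\bigO(r)$-size gadget. Second, to eliminate the remaining (non-)zero-tests of the resulting $(k',1)$-RBCA, the run is cut into $4k'$ segments, a propositional variable $\var{p}_{i,j,\ell}$ (``in segment $i$, counter $j$ is in phase $\ell$'') is guessed globally at the start, and consistency of all reads of each variable $\var{x}$ is enforced by a pair of $\Z$-counters $\var{x}^+,\var{x}^-$ of which one must be drained to zero at the end. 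Your ``cleanup phase'' intuition is close in spirit to this last step, but the guessed-assignment/consistency-counter mechanism (or some equivalent) is the missing idea; as it stands the proposal is a plan plus an accurate diagnosis of the obstacle, not a proof.
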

By \emph{convert}, we mean a translation that preserves the accepted
(resp.\ generated) language.
There are several machine models that are equivalent (in terms of accepted
languages) to RBCA. With \cref{main-conversion}, we provide the last missing
translation:
\begin{corollary}
	The following models can be converted into each other in logarithmic
	space: (i)~RBCA, (ii)~$\Z$-VASS, (iii)~Parikh automata with $\exists$PA
	acceptance, and (iv)~Parikh automata with semilinear acceptance.
\end{corollary}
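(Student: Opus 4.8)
The plan is to exhibit a directed cycle of logarithmic-space conversions, namely (i)\,$\to$\,(ii)\,$\to$\,(iv)\,$\to$\,(iii)\,$\to$\,(ii)\,$\to$\,(i); since logspace-computable functions are closed under composition, this yields a logspace conversion for every ordered pair of the four models. Three of the arrows are routine. The arrow (i)\,$\to$\,(ii) is exactly \cref{main-conversion}. For (ii)\,$\to$\,(iv), a \ZVASS\ is a Parikh automaton whose counting component ranges over $\Z^d$ with acceptance set $\{\bzero\}$ (transitions labelled by words and by $\varepsilon$ rather than single letters being immaterial); splitting each transition effect $\bv\in\Z^d$ into its positive and negative parts $(\bv^+,\bv^-)\in\N^{2d}$ turns it into a Parikh automaton over $\N^{2d}$ with the linear acceptance set $\{(\bu,\bu)\mid\bu\in\N^d\}$ (base $\bzero$, periods $\vec e_i+\vec e_{d+i}$), which is in particular semilinear. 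For (iv)\,$\to$\,(iii) there is nothing to do but change syntax: an explicit semilinear set $\bigcup_i(\vec b_i+\langle\vec p_{i,1},\dots,\vec p_{i,k_i}\rangle)$ is the $\exists$PA formula $\bigvee_i\exists\bm\lambda.\ \vec x=\vec b_i+\sum_j\lambda_j\vec p_{i,j}$. Finally (ii)\,$\to$\,(i) is the classical blind-counter simulation: represent each $\Z$-counter by two $\N$-counters, route every increment to one of them and every decrement to the other (still monotone, so $0$ reversals), and before entering a final state decrement the two in lock-step, accepting only when both reach $0$ simultaneously ($1$ reversal each); a faithful run is then $1$-reversal bounded and reaches $\bzero$ exactly when the simulated \ZVASS\ run does.

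The one substantial arrow is (iii)\,$\to$\,(ii). Routing it through an explicit semilinear description of the acceptance set is impossible in logspace, since that description can be exponential in the formula; instead the $\exists$PA formula $\varphi(\vec x)=\exists\vec y\,\psi(\vec x,\vec y)$ must be compiled directly into the control of a \ZVASS. Bring $\psi$ into negation normal form, so that its atoms have the shape $\vec a\cdot\vec z\le c$, $\vec a\cdot\vec z\ge c$, or $\vec a\cdot\vec z\equiv c\pmod m$ (negations are again of this shape), with $\vec z=(\vec x,\vec y)$, and let $k$ be the number of atom occurrences. The \ZVASS\ simulates the Parikh automaton while maintaining $k$ parallel copies of the counting vector — every vector the automaton adds to its counting component is added simultaneously to all $k$ copies, and $\vec y$ is guessed by nondeterministic lock-step self-loops — and then runs a gadget for $\psi$ that, bottom-up, consumes the copy assigned to each leaf. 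For a leaf $\vec a\cdot\vec z\le c$ with assigned copy $\vec z^{(j)}$, per-coordinate self-loops of effect $-1$ on $\vec z^{(j)}[i]$ and $\vec a[i]$ on a fresh accumulator force, through the final all-zero test, each loop to run exactly $\vec z^{(j)}[i]$ times, leaving the accumulator at $\vec a\cdot\vec z$; after adding $-c$, a self-loop of effect $+1$ on the accumulator can zero it precisely when $\vec a\cdot\vec z\le c$. The cases $\ge$ and $\equiv\pmod m$ are analogous (the modular one using $\pm m$ self-loops). A conjunction runs its two sub-gadgets in sequence on disjoint copies; a disjunction nondeterministically runs one sub-gadget and \emph{discards} the other side's copies, "discard" being per-coordinate $-1$ self-loops that zero a copy exactly when run the right number of times. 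Binary-encoded constants $c$ and moduli $m$ are replaced by the logspace simulation of binary counter updates from \cref{appendix-counter-updates}. The number of states and counters is polynomial in $|\varphi|$ and the automaton, every component is local, and the whole construction is plainly computable in logarithmic space; this refines the polynomial-time construction of~\cite{HaaseZ19}.

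The main obstacle is precisely this last arrow. Because a \ZVASS\ has no zero-tests during a run and only inspects "all counters $0$" at the very end, one cannot expand $\psi$ into DNF and test linear-set membership; the Boolean structure of $\psi$ must instead be wired into the control, which forces the twin idioms of carrying enough parallel copies of the counting vector for each atom to own a private one, and realizing "verify an (in)equality or congruence" and "discard a counter" alike by fill-up/drain self-loops whose correctness is pinned down only by the terminal zero-test. (Alternatively one may compile (iii) straight into (i): an RBCA can zero-test mid-run, so discarding a counter and comparing accumulators become one-reversal operations, but the copy bookkeeping is the same.) Granting this arrow, composing the conversions around the cycle (i)\,$\to$\,(ii)\,$\to$\,(iv)\,$\to$\,(iii)\,$\to$\,(ii)\,$\to$\,(i), together with closure of logspace computability under composition, gives a logarithmic-space conversion between every ordered pair of the four models.
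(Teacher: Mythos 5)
Your proposal is correct and follows essentially the same route as the paper: \cref{main-conversion} for (i)$\Rightarrow$(ii), the classical two-counter simulation for (ii)$\Rightarrow$(i), immediate conversions between $\Z$-VASS and Parikh automata in one direction, and the Haase--Zetzsche-style compilation of an $\exists$PA acceptance condition into $\Z$-counters for (iii)$\Rightarrow$(ii), which the paper simply cites as \cite[Prop.~V.1]{HaaseZ19} while you spell it out. The only (harmless) organizational difference is that you arrange the arrows into a single cycle and route (iv)$\Rightarrow$(ii) through (iii), whereas the paper treats (iv)$\Rightarrow$(ii) as a separate easy conversion.
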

\newcommand{\conv}[2]{(#1)$\Rightarrow$(#2)}
Roughly speaking, a Parikh automaton is a machine with counters that can only
be incremented. Then, a run is accepting if the final counter values belong to
some semilinear set. Parikh automata were introduced by Klaedtke and
Rue\ss~\cite{KlaedtkeRuess2003}, where the acceptance condition is specified
using a semilinear representation (with base and period vectors), yielding~(iv)
above. As done, e.g., in~\cite{DBLP:conf/lics/HalfonSZ17}, one could also
specify it using an existential Presburger formula (briefly $\exists$PA),
yielding the model in~(iii) above. \Cref{main-conversion} proves \conv{i}{ii},
whereas \conv{ii}{i} is easy (a clever and very efficient translation is given
in~\cite[Theorem 4.5]{DBLP:journals/iandc/JantzenK03}).  Moreover,
\conv{ii}{iii} and \conv{ii}{iv} are clear as well. For \conv{iii}{ii}, one can
proceed as in~\cite[Prop.~V.1]{HaaseZ19}, and \conv{iv}{ii} is also simple.

\subsubsection{Unboundedness predicates.}
We shall use \cref{main-conversion} to prove our second main theorem, which
involves unboundedness predicates as introduced
in~\cite{DBLP:conf/icalp/CzerwinskiHZ18}.
In~\cite{DBLP:conf/icalp/CzerwinskiHZ18}, unboundedness predicates can be
one-dimensional or multi-dimensional, but in this work, we only consider
one-dimensional unboundedness predicates.

Let $\Sigma$ be an alphabet. A \emph{(language) predicate} is a set of
languages over $\Sigma$. If $\pP$ is a predicate and $L\subseteq\Sigma^*$ is a
language, then we write $\pP(L)$ to denote that $\pP$ holds for the language
$L$ (i.e.\ $L \in \pP$). A predicate $\pP$ is called a \emph{(one-dimensional) unboundedness
predicate} if  the following conditions are met for all $K,L\subseteq\Sigma^*$:
\vspace{0.1cm}

\begin{minipage}{0.5\textwidth}
\begin{enumerate}[leftmargin=*, label=(U\arabic*)]
	\item If $\pP(K)$ and $K\subseteq L$, then $\pP(L)$.
	\item If $\pP(K\cup L)$, then $\pP(K)$ or $\pP(L)$.
\end{enumerate}
\end{minipage}%
\begin{minipage}{0.5\textwidth}
	\begin{enumerate}[leftmargin=*, label=(U\arabic*), start=3]
	\item If $\pP(K \cdot L)$, then $\pP(K)$ or $\pP(L)$.
	\item $\pP(L)$ if and only if $\pP(F(L))$.
	\end{enumerate}
\end{minipage}
\vspace{0.1cm}

\noindent
Here $F(L)=\{v\in\Sigma^* \mid \exists u,w\in\Sigma^*\colon uvw\in L\}$ is the
set of \emph{factors} of $L$ (sometimes also called \emph{infixes}).
In particular, the last condition says that $\pP$
only depends on the set of factors occurring in a language.

For an unboundedness predicate $\pP$ and a class $\cC$ of finitely represented languages (such as automata or grammars),
let $\pP(\cC)$ denote the problem of
deciding $\pP$ for a given language $L$ from $\cC$. Formally, $\pP(\cC)$ is the following decision problem:
\begin{description}
	\item[Given] A language $L$ from $\cC$.
	\item[Question] Does $\pP(L)$ hold?
\end{description}
For example, $\pP(\RBCA)$ is the problem of deciding $\pP$ for reversal-bounded
multi-counter automata and $\pP(\NFA)$ is the problem of deciding $\pP$ for
NFAs.  We mention that the axioms (U1)--(U4) are slightly stronger than the
axioms used in~\cite{DBLP:conf/icalp/CzerwinskiHZ18}, but the resulting set of
decision problems is the same with either definition (since in
\cite{DBLP:conf/icalp/CzerwinskiHZ18}, one always decides whether $\pP(F(L))$ holds).
Thus, the statement of \linebreak
\cref{main-unboundedness} is unaffected by which
definition is used. See \cref{appendix-unboundedness-predicates} for details.

The following examples of (one-dimensional) unboundedness predicates for
languages $L\subseteq\Sigma^*$ have already been established in
\cite{DBLP:conf/icalp/CzerwinskiHZ18}. We mention them here to give an
intuition for the range of applications of our results:
\begin{description}
\item[Not being bounded] Let $\pP_{\mathsf{notb}}(L)$ if and only if $L$ is
\emph{not} a bounded language. 
\item[Non-emptiness]  Let $\pP_{\ne\emptyset}(L)$ if and only if $L\ne\emptyset$.
\item[Infinity] Let $\pP_{\infty}(L)$ if and only if $L$ is infinite.
\item[Factor-universality] Let $\pP_{\mathsf{funi}}(L)$ if and only if $\Sigma^*\subseteq F(L)$.
\end{description}
It is not difficult to prove that these are unboundedness predicates, but
proofs can be found in \cite{DBLP:conf/icalp/CzerwinskiHZ18}.
The following is our second main theorem:
\begin{theorem}\label{main-unboundedness}
Let $\pP$ be a one-dimensional unboundedness predicate.  There is an $\NP$
reduction from $\pP(\PRBCA)$ to $\pP(\PDA)$. Moreover, there is an $\NP$
reduction from $\pP(\RBCA)$ to $\pP(\NFA)$.
\end{theorem}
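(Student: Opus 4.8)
The plan is to combine \cref{main-conversion} with a ``regular (resp.\ context-free) abstraction modulo pumping'' of the factor language of a $\ZVASS$ (resp.\ $\Z$-grammar). First, by \cref{main-conversion} I would convert the given $\RBCA$ into a $\ZVASS$ $\cV$ (and the given $\PRBCA$ into a $\Z$-grammar) in logarithmic space, so it suffices to give an $\NP$ reduction from $\pP(\ZVASS)$ to $\pP(\NFA)$ and from $\pP$ on $\Z$-grammars to $\pP(\PDA)$; composing with the logspace step preserves the reduction. I will describe the $\ZVASS$ case, the grammar case being entirely parallel with derivation trees in place of runs.

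The target is: given $\cV$, nondeterministically compute in polynomial time an $\NFA$ $\cA$ over the same alphabet such that (i) on \emph{every} computation branch $L(\cA)\subseteq F(L(\cV))$, and (ii) on \emph{at least one} branch $F(L(\cV))\linembed L(\cA)$, where $\linembed$ is the pumping preorder of~\cite{DBLP:conf/icalp/CzerwinskiHZ18}. Granting this, the reduction follows from the axioms: from (i) together with (U1) and (U4) we get $\pP(L(\cA))\Rightarrow\pP(F(L(\cV)))=\pP(L(\cV))$ on every branch, while on the good branch (ii) together with the fact that unboundedness predicates are monotone under $\linembed$ (a consequence of (U1)--(U4) shown in~\cite{DBLP:conf/icalp/CzerwinskiHZ18}) gives $\pP(L(\cV))=\pP(F(L(\cV)))\Rightarrow\pP(L(\cA))$. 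Hence $\pP(L(\cV))$ holds iff some branch outputs an $\NFA$ satisfying $\pP$, which is exactly an $\NP$ reduction.

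It remains to construct $\cA$. Forgetting the $\Z^d$-labels of $\cV$ yields an underlying $\NFA$ $\cA_0$, and one checks (as in the simulation underlying \cref{main-conversion}) that $F(L(\cV))\subseteq L(\cA_0)$; but $L(\cA_0)$ is in general too large because it ignores whether the counters can be driven back to $\bzero$. The repair is to keep only the part of $\cA_0$ that is genuinely usable in a zero-returning accepting run. To certify usability with only polynomially much guessing, I would guess a finitely generated subgroup $H\le\Z^d$ of ``compensating directions'' -- presented by a set of cycles of $\cV$ whose effects generate $H$ and which admit a strictly positive rational dependency, so that they can be inserted, in a balanced way, into any run to shift its total counter effect by an arbitrary element of $H$ -- together with, for each state kept, a guessed coset of $H$ and short witness paths from $q_0$ and to a final state realizing that coset and its negative. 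Then $\cA$ is the sub-automaton of the product $\cA_0\times(\Z^d/H)$ consisting of the pairs $(q,\vec c+H)$ reachable and co-reachable under these certificates. A run of $\cA$ prolongs, using the witness paths and the $H$-compensating cycles, to a run of $\cV$ from $(q_0,\bzero)$ to a final configuration that is entirely $\bzero$, giving (i); and any factor $v$ of $L(\cV)$, witnessed by a subrun of some zero-returning accepting run, conforms after pumping the cycles of that run and of its compensating stems to the congruence structure of $\cA$ on the branch where $H$ and the cosets are the ones actually used, so $v\linembed$ a word of $L(\cA)$, giving (ii).

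The main obstacle is the analysis behind (ii): one must show that a single choice of $H$ (and of polynomially many, polynomially bounded cosets per state) suffices simultaneously for all factors, and that the usable part of $\cA_0\times(\Z^d/H)$ is finite and polynomially representable. This rests on a pumping argument for $\ZVASS$ runs combined with the flow/Euler characterization of which edge-multisets are realizable as runs -- used to bound the effect-modulo-$H$ at a usable state by polynomially many values of polynomially bounded size via standard bounds on minimal solutions of integer linear systems -- and on checking that, with $H$ taken as the span of effects of cycles lying in some realizable zero-sum cycle family, every ``free'' direction of the factor language already lies in $H$. For $\Z$-grammars one replaces runs by derivation trees, ``cycles'' by pumpable derivations $A\derivs uAv$, and the Euler argument by the corresponding realizability statement for derivation trees; the product construction turns a $\Z$-grammar into a context-free grammar, i.e.\ a $\PDA$, and the rest carries over verbatim.
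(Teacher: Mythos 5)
Your first step (convert via \cref{main-conversion} and work with $\Z$-VASS / $\Z$-grammars), your soundness direction (every branch outputs a language contained in $F(L(\cV))$, certified by insertable zero-sum cycle families, i.e.\ essentially the paper's cancelable pumps plus the flow/Euler argument), and the use of (U1) and (U4) all match the paper. The gap is in your completeness specification (ii). You require that on some branch $F(L(\cV))\linembed L(\cA)$, which together with your (i) forces $F(L(\cA))=F(L(\cV))$. This is impossible in general: $F(R)$ is regular for every regular $R$, but $F(L(\cV))$ need not be regular for a $\Z$-VASS (take $L=\{\#a^nb^n\#\mid n\ge 0\}$, whose factor set intersected with $\#a^*b^*\#$ is $\{\#a^nb^n\#\}$). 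So no single nondeterministic branch can produce an \NFA{} whose factors cover all of $F(L(\cV))$ while staying inside $F(L(\cV))$, and the ``main obstacle'' you flag --- that one choice of $H$ and cosets works simultaneously for all factors --- is not merely hard but false. Note also that your argument never invokes the union and concatenation axioms (U2) and (U3), which is a symptom of the same problem: those axioms exist precisely because a single output cannot cover everything.

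The paper's completeness direction is correspondingly weaker and is what actually goes through: one shows (\cref{unboundedness-decomposition}) that $L(G)$ is contained in a \emph{finite union of finite concatenations} $K_1\cdots K_m$, where each $K_i$ is either finite or one of the guessable languages $L_{A,Q}$; then (U2) pushes $\pP$ into one concatenation, (U3) into one factor $K_i$, and a preprocessing step (reducing to predicates satisfied only by infinite languages, by splitting off $L\cap\Sigma^*\Sigma_1\Sigma^*$ and testing emptiness) rules out the finite $K_i$. Your proposal omits both this preprocessing and the key finiteness argument behind the decomposition: the paper obtains the finite set of ``base'' derivations via a well-quasi-ordering on counter-labeled derivation trees (\cref{wqo}), which has no counterpart in your sketch. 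Finally, your concrete automaton $\cA_0\times(\Z^d/H)$ is problematic on its own terms ($\Z^d/H$ is infinite when $H$ has rank below $d$, and the claimed polynomial representability of its ``usable part'' is unproven); the paper sidesteps this by guessing a set $Q$ of cancelable productions and outputting the pump-context language $L_{A,Q}$, which is directly an \NFA{} or context-free grammar of polynomial size.
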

Here, an \emph{$\NP$ reduction} from problem $A\subseteq\Sigma^*$ to
$B\subseteq\Sigma^*$ is a non-deterministic polynomial-time Turing machine such
that for every input word $w\in\Sigma^*$, we have $w \in A$ iff there exists a
run of the Turing machine producing a word in $B$.

Let us now see some applications of \cref{main-unboundedness}, see also \cref{table-complexity}. The following completeness results are all meant w.r.t.\ deterministic logspace reductions.
\begin{corollary}\label{complexity-boundedness}
Boundedness for $\PRBCA$ and for $\RBCA$ is $\coNP$-complete. 
\end{corollary}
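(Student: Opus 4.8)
The plan is to treat the $\coNP$ upper bound and the $\coNP$-hardness separately. For the upper bound, observe that ``not being bounded'' is exactly the unboundedness predicate $\pP_{\mathsf{notb}}$ from the examples above; its axioms (U1)--(U4) are immediate (a subset or a factor of $w_1^*\cdots w_k^*$ is again contained in a product of stars, a union of two bounded sets is bounded, and $F(w_1^*\cdots w_k^*)$ is a finite union of products of finite sets with the $w_i^*$) and are in any case recorded in~\cite{DBLP:conf/icalp/CzerwinskiHZ18}. Hence \cref{main-unboundedness} gives an $\NP$ reduction from $\pP_{\mathsf{notb}}(\PRBCA)$ to $\pP_{\mathsf{notb}}(\PDA)$ and from $\pP_{\mathsf{notb}}(\RBCA)$ to $\pP_{\mathsf{notb}}(\NFA)$; the target problems are non-boundedness of context-free, respectively regular, languages, which are decidable in polynomial time~\cite{ginsburg1964bounded,DBLP:journals/ijfcs/GawrychowskiKRS10}. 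Composing an $\NP$ reduction with a polynomial-time decision procedure for the target (guess a run of the reduction machine, compute its output, then decide membership) witnesses that the source is in $\NP$; so $\pP_{\mathsf{notb}}(\PRBCA)$ and $\pP_{\mathsf{notb}}(\RBCA)$ lie in $\NP$, i.e.\ boundedness for $\PRBCA$ and for $\RBCA$ lies in $\coNP$.

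For $\coNP$-hardness I would give a deterministic logspace reduction from the complement of $\mathsf{SAT}$ (which is $\coNP$-complete under logspace reductions) to boundedness for $\RBCA$; since every $\RBCA$ is a $\PRBCA$, the same map also handles $\PRBCA$. The idea is to amplify the standard $\NP$-hardness of nonemptiness. Given a CNF formula $\varphi$ with clauses $C_1,\dots,C_m$, first build an $\RBCA$ $\cA'$ that reads no input, uses one counter $\ctr_i$ per clause, guesses a truth value for each variable in turn, and upon committing to the value of a variable increments (via $\varepsilon$-transitions, one per literal occurrence) the counter $\ctr_i$ of each clause $C_i$ that this literal satisfies; afterwards $\cA'$ decrements every $\ctr_i$ back to $0$, the first decrement on $\ctr_i$ being possible only if $C_i$ was satisfied. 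Each counter is purely incremented and then purely decremented, so one reversal per counter suffices and all counters are $0$ in the accepting configuration, giving a valid reversal-bounded automaton with $L(\cA')=\{\varepsilon\}$ if $\varphi$ is satisfiable and $L(\cA')=\emptyset$ otherwise. Now obtain $\cA$ from $\cA'$ by adding two letters $a,b$ and a new, unique final state reached by $\varepsilon$ from the old final states and carrying self-loops that read $a$ and $b$ without touching the counters; then $L(\cA)=L(\cA')\cdot\{a,b\}^*$. If $\varphi$ is satisfiable this equals $\{a,b\}^*$, which has $2^n$ words of length $n$ and is therefore not bounded (every bounded language has only polynomially many words of each length); if $\varphi$ is unsatisfiable it is $\emptyset$, which is bounded. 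Thus $\varphi$ is unsatisfiable iff $L(\cA)$ is bounded, and $\varphi\mapsto\cA$ is computable in logspace.

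The only genuinely delicate point is the correctness of $\cA'$: each transition incrementing $\ctr_i$ is reachable only on the branch that sets a literal of $C_i$ to true, so $\ctr_i$ ends up holding the number of literals of $C_i$ satisfied by the guessed assignment; hence no run can certify an unsatisfied clause, while a satisfying assignment does yield an accepting run. All of this has to be arranged within a single reversal per counter and under the return-to-zero acceptance condition of (P)RBCA, which is the reason for the ``increment then decrement'' discipline. Everything else --- verifying that $\pP_{\mathsf{notb}}$ is an unboundedness predicate, and that an $\NP$ reduction composed with a polynomial-time decision procedure for the target stays in $\NP$ --- is routine bookkeeping.
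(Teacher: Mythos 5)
Your proof is correct and follows essentially the same route as the paper: the $\coNP$ upper bound via \cref{main-unboundedness} applied to $\pP_{\mathsf{notb}}$ together with polynomial-time boundedness testing for context-free (resp.\ regular) languages, and the lower bound inherited from $\NP$-hardness of non-emptiness for $\RBCA$. The only difference is that you spell out the clause-counter reduction from $\mathsf{SAT}$ and the padding with $\{a,b\}^*$ explicitly, where the paper simply cites \cite{GURARI1981220} and calls the transfer ``easy''; both steps are correct as you present them.
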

For \cref{complexity-boundedness}, we argue that deciding
\emph{non-boundedness} is $\NP$-complete. To this end, we apply
\cref{main-unboundedness} to the predicate $\pP_{\mathsf{notb}}$ and obtain an
$\NP$ upper bound, because boundedness for context-free languages is decidable
in polynomial time~\cite{DBLP:journals/ijfcs/GawrychowskiKRS10}.  The $\NP$
lower bound follows easily from $\NP$-hardness of the non-emptiness problem for
$\RBCA$~\cite[Theorem~3]{GURARI1981220} and thus $\PRBCA$.

\begin{corollary}\label{complexity-finiteness}
Finiteness for $\PRBCA$ and for $\RBCA$ is $\coNP$-complete.
\end{corollary}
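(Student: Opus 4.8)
The plan is to reduce finiteness to the setting of \cref{main-unboundedness} by observing that infiniteness is the unboundedness predicate $\pP_{\infty}$, and then verify that the target problem (infiniteness for context-free languages, resp.\ regular languages) is in $\P$. First I would recall that $\pP_\infty$ satisfies (U1)--(U4): upward closure under $\subseteq$ is immediate; if $K\cup L$ is infinite then one of $K,L$ is infinite; if $K\cdot L$ is infinite then (since a finite concatenation of finite sets is finite) $K$ or $L$ is infinite; and $L$ is infinite iff $F(L)$ is infinite, because $L\subseteq F(L)$ gives one direction and, conversely, every factor of a word of $L$ is no longer than that word, so an infinite $F(L)$ forces arbitrarily long words in $L$. (These were already asserted in the excerpt, so I may simply cite them.) Applying \cref{main-unboundedness} to $\pP_\infty$ yields an $\NP$ reduction from $\pP_\infty(\PRBCA)$ to $\pP_\infty(\PDA)$, and from $\pP_\infty(\RBCA)$ to $\pP_\infty(\NFA)$.

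Next I would close the loop on complexity. Infiniteness of a context-free language given by a PDA (equivalently a context-free grammar, convertible in logspace) is decidable in polynomial time: after trimming the grammar to useful nonterminals, the language is infinite iff the grammar has a ``pumpable'' nonterminal $A$ with $A \derivs uAv$ and $uv\neq\varepsilon$, which is a reachability-plus-cycle check in the derivation graph, computable in $\P$ (indeed in $\NL$). Likewise infiniteness of an NFA language is the classical test for a cycle reachable from the start state and co-reachable to a final state, in $\NL\subseteq\P$. Composing the $\NP$ reduction with a polynomial-time decision procedure puts $\pP_\infty(\PRBCA)$ and $\pP_\infty(\RBCA)$ in $\NP$; hence finiteness, being the complement, is in $\coNP$ for both models.

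For $\coNP$-hardness I would reduce from non-emptiness. Non-emptiness of an $\RBCA$ is $\NP$-hard by~\cite[Theorem~3]{GURARI1981220}, and this bound transfers to $\PRBCA$. From a given (P)RBCA $\cA$ over some alphabet I build, in deterministic logspace, a (P)RBCA $\cA'$ whose language is infinite iff $L(\cA)\neq\emptyset$: e.g.\ pick a fresh letter $a$, and let $\cA'$ accept $\{a^n w \mid n\ge 0,\ w\in L(\cA)\}$ by adding a new initial state with an $a$-self-loop feeding into the old initial state (this adds no counter operations, so reversal bounds are preserved and the pushdown is untouched). Then $L(\cA')$ is infinite iff $L(\cA)$ is non-empty, so non-emptiness of (P)RBCA reduces to infiniteness of (P)RBCA, giving $\NP$-hardness of infiniteness and therefore $\coNP$-hardness of finiteness. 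Together with the upper bound, finiteness for $\PRBCA$ and for $\RBCA$ is $\coNP$-complete.

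The only mild obstacle is bookkeeping in the hardness reduction: one must ensure the padding construction neither introduces new counter reversals nor disturbs the external reversal bound $r$ (so that the instance stays a legal $(k,r)$-(P)RBCA), which the self-loop-only-on-a-fresh-letter gadget handles cleanly; everything else is a direct application of \cref{main-unboundedness} plus standard polynomial-time emptiness/infiniteness tests for $\PDA$ and $\NFA$.
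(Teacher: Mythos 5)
Your proposal is correct and follows essentially the same route as the paper: the upper bound is obtained by applying \cref{main-unboundedness} to the predicate $\pP_\infty$ and using the polynomial-time infiniteness tests for $\PDA$ and $\NFA$, and the lower bound is inherited from $\NP$-hardness of non-emptiness for $\RBCA$ (and hence $\PRBCA$). Your explicit $a^*L(\cA)$ padding gadget is just a concrete spelling-out of the reduction the paper leaves implicit.
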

\newcommand{\EPA}{$\exists$PA}
We show \cref{complexity-finiteness} by proving that checking
\emph{infinity} is $\NP$-complete. The upper bound follows from
\cref{main-unboundedness} via the predicate $\pP_\infty$. As above,
$\NP$-hardness is inherited from the non-emptiness problem for $\RBCA$ and $\PRBCA$.

The results in \cref{complexity-finiteness} are, however, not new. They follow 
directly from the fact that for a given $\PRBCA$ (or $\RBCA$), one can
construct in polynomial time a formula in existential Presburger arithmetic (\EPA) for its
Parikh image, as shown in \cite{DBLP:journals/jacm/Ibarra78} for RBCA and in
\cite{DBLP:conf/cav/HagueL11} for PRBCA. It is a standard result about
\EPA\ that for each formula $\varphi$, there exists a
bound $B$ such that (i)~$B$ is at most exponential in the size of $\varphi$ and
(ii)~$\varphi$ defines an infinite set if and only if $\varphi$ is satisfied
for some vector with some entry above $B$. For example, this can be deduced
from~\cite{DBLP:conf/rta/Pottier91}.  Therefore, one can easily construct a
second \EPA\ formula $\varphi'$ such that $\varphi$ defines an
infinite set if and only if $\varphi'$ is satisfiable.

\begin{corollary}\label{complexity-factor-universality}
	Factor universality for $\RBCA$ is $\PSPACE$-complete.
\end{corollary}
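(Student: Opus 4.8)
The plan is to derive \cref{complexity-factor-universality} directly from \cref{main-unboundedness}, using that factor-universality $\pP_{\mathsf{funi}}(L)\iff\Sigma^*\subseteq F(L)$ is a one-dimensional unboundedness predicate. By \cref{main-unboundedness} there is an $\NP$ reduction from $\pP_{\mathsf{funi}}(\RBCA)$ to $\pP_{\mathsf{funi}}(\NFA)$, so for the upper bound it suffices to show (a) that $\pP_{\mathsf{funi}}(\NFA)$ lies in $\PSPACE$, and (b) that prefixing a $\PSPACE$ problem with an $\NP$ reduction stays in $\PSPACE$. For (a), I would, given an NFA $A$ for $L$, compute in logarithmic space the set $R$ of states reachable from the initial state and the set $C$ of states from which a final state is reachable; then the NFA $A'$ obtained from $A$ by making $R$ the set of initial states and $C$ the set of final states (adding a fresh initial state with $\varepsilon$-transitions into $R$ to fit the single-initial-state format) accepts exactly $F(L)$. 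Hence $\pP_{\mathsf{funi}}(L)$ holds iff $A'$ is universal, and NFA universality is the classical $\PSPACE$-complete problem. For (b), on input an $\RBCA$ $\cA$ one runs a nondeterministic polynomial-space procedure that guesses a run of the $\NP$ reduction, producing an $\NFA$ $B$ of size polynomial in $|\cA|$, and then executes the $\PSPACE$ algorithm of (a) on $B$; since $|B|$ is polynomial in $|\cA|$ and $\mathsf{NPSPACE}=\PSPACE$ (Savitch), this shows $\pP_{\mathsf{funi}}(\RBCA)\in\PSPACE$.

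For the lower bound, since an NFA can be viewed as a $(0,0)$-$\RBCA$, it suffices to prove $\PSPACE$-hardness of $\pP_{\mathsf{funi}}(\NFA)$, which I would do by a logspace reduction from the $\PSPACE$-complete universality problem for NFAs. Given an NFA over $\Sigma$ for a language $L$, pick a fresh symbol $\#\notin\Sigma$, set $\Sigma'=\Sigma\uplus\{\#\}$, and construct in logarithmic space an NFA over $\Sigma'$ for $L'=(\#L)^*$. The claim is that $L=\Sigma^*$ iff $\Sigma'^*\subseteq F(L')$. If $L=\Sigma^*$, then any $v\in\Sigma'^*$ decomposes as $v=v_0\#v_1\#\cdots\#v_m$ with $v_i\in\Sigma^*$, so the word $\#v_0\#v_1\#\cdots\#v_m\#\in(\#\Sigma^*)^{m+1}\subseteq L'$ witnesses $v\in F(L')$. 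Conversely, if some $w\in\Sigma^*$ is not in $L$, then $\#w\#\notin F(L')$: every word of $L'$ has the form $\#w_1\#\cdots\#w_n$ with each $w_i\in L\subseteq\Sigma^*$, so its maximal $\#$-free factors are exactly $w_1,\dots,w_n$, and hence $\#w\#$ can be a factor only if $w$ equals some $w_i\in L$, a contradiction.

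I do not expect a single hard obstacle here; the result is essentially a corollary of \cref{main-unboundedness}. The two points that need routine but non-vacuous care are that the composition ``$\NP$ reduction, then $\PSPACE$ test'' does not escape $\PSPACE$ (handled by the standard $\mathsf{NPSPACE}=\PSPACE$ argument, exploiting that the reduction's output is polynomially bounded), and the explicit $\PSPACE$-hardness reduction for factor universality of ordinary NFAs, where using the $\#$-separated Kleene star $(\#L)^*$ is precisely what prevents the construction from trivially making every string a factor, which would happen if one instead surrounded $L$ by a single $\#$ on each side or padded it by $\Sigma'^*$.
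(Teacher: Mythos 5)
Your proposal is correct and follows essentially the same route as the paper: apply \cref{main-unboundedness} with $\pP_{\mathsf{funi}}$, note that factor universality for NFAs is in $\PSPACE$ by building an automaton for $F(L)$ and checking universality, and prove $\PSPACE$-hardness by reducing NFA universality via a $\#$-separated star (the paper uses $(R\#)^*$ where you use $(\#L)^*$, which is the same construction up to reversal). Your extra care about composing the $\NP$ reduction with a $\PSPACE$ test via $\mathsf{NPSPACE}=\PSPACE$ is a detail the paper leaves implicit, and is handled correctly.
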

Whether factor universality is decidable for $\RBCA$ was left as an open
problem in~\cite{DBLP:conf/dlt/EremondiIM15,DBLP:journals/ijfcs/EremondiIM18}
(there under the term \emph{infix density}).
\Cref{complexity-factor-universality} follows from \cref{main-unboundedness} using $\pP_{\mathsf{funi}}$, because
factor universality for NFAs is $\PSPACE$-complete: To decide if
$\Sigma^*\subseteq F(R)$, for a regular language $R$, we can just compute an
automaton for $F(R)$ and check inclusion in $\PSPACE$. For the lower bound, one
can reduce the $\PSPACE$-complete universality problem for NFAs, since for
$R\subseteq\Sigma^*$, the language $(R\#)^*\subseteq(\Sigma\cup\{\#\})^*$ is factor universal if
only if $R=\Sigma^*$. Note that factor universality is known to be undecidable
already for one-counter
languages~\cite{DBLP:journals/ijfcs/EremondiIM18}, and thus in particular for
$\PRBCA$.  However, it is decidable for pushdown automata with a bounded number
of reversals of the stack~\cite{DBLP:journals/ijfcs/EremondiIM18}.

\vspace{-0.2cm}
\subsubsection{Beyond pushdowns.} \Cref{main-unboundedness} raises the question of
whether for any class $\cM$ of machines, one can reduce any unboundedness
predicates for $\cM$ extended with reversal-bounded counters to the same predicate for just $\cM$. This is not the
case: For example, consider second-order pushdown automata, short $2$-PDA. If
we extend these by adding reversal-bounded counters, then we obtain $2$-PRBCA.
Then, the infinity problem is decidable for $2$-PDA~\cite{Hayashi1973}
(see~\cite{DBLP:conf/icalp/Zetzsche15,DBLP:conf/popl/HagueKO16,DBLP:conf/lics/ClementePSW16,DBLP:conf/icalp/BarozziniCCP20,DBLP:conf/icalp/BarozziniPW22,DBLP:conf/fsttcs/Parys17}
for stronger results). However, the class of $2$-PRBCA does not even have
decidable emptiness, let alone decidable infinity. This is shown
in~\cite[Proposition~7]{DBLP:journals/corr/Zetzsche15} (see~\cite[Theorem
4]{DBLP:journals/tcs/Kobayashi19} for an alternative proof).  Thus, infinity
for $2$-PRBCA cannot be reduced to infinity for $2$-PDA.

\vspace{-0.2cm}
\subsubsection{Growth.}
Finally, we employ the methods of the proof of \cref{main-unboundedness} to
show a dichotomy of the growth behavior of languages accepted by RBCA.  For an
alphabet $\Sigma$, we denote by $\Sigma^{\le m}$ the set of all words over
$\Sigma$ of length at most $m$. We say that a language $L\subseteq\Sigma^*$ has
\emph{polynomial
growth}\footnote{In~\cite{DBLP:journals/ijfcs/GawrychowskiKRS10}, polynomial
and exponential growth are defined with $\Sigma^m$ in place of $\Sigma^{\le
m}$, but this leads to equivalent notions, see \cref{appendix-growth}} if there
is a polynomial $p(x)$ such that $|L\cap \Sigma^{\le m}|\le p(m)$ for all $m\ge
0$.  Languages of polynomial growth are also called \emph{sparse} or
\emph{poly-slender}.  We say that $L$ has \emph{exponential growth} if there is
a real number $r>1$ such hat $|L\cap \Sigma^{\le m}|\ge r^m$ for infinitely
many $m$.  Since a language of the form $w_1^*\cdots w_n^*$ clearly has
polynomial growth, it is well-known that bounded languages have polynomial
growth. We show that (a)~within the PRBCA languages (and in particular within
the RBCA languages), the converse is true as well and (b)~all other languages
have exponential growth (in contrast to some models, such as
2-PDA~\cite{grigorchuk1999example}, where this dichotomy does not hold):
\begin{theorem}\label{main-growth}
	Let $L$ be a language accepted by a PRBCA. Then $L$ has polynomial
growth if and only if $L$ is bounded. If $L$ is not bounded, it has exponential
growth.
\end{theorem}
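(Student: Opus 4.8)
The plan is to reduce the growth dichotomy for PRBCA to the corresponding (and classical) dichotomy for context-free languages, using the machinery already built up for \cref{main-unboundedness}. First I would recall that bounded languages always have polynomial growth, since $w_1^*\cdots w_n^*$ does; so only two implications remain: (i) if $L$ is a PRBCA language that is \emph{not} bounded, then $L$ has exponential growth, and (ii) polynomial growth implies bounded. Note that (ii) is the contrapositive of (i) once we know that every language has either polynomial or exponential growth on the PRBCA side — so the crux is really (i), together with ruling out intermediate growth. The predicate $\pP_{\mathsf{notb}}$ (``not being bounded'') is a one-dimensional unboundedness predicate, so by \cref{main-unboundedness} there is an $\NP$ reduction from $\pP_{\mathsf{notb}}(\PRBCA)$ to $\pP_{\mathsf{notb}}(\PDA)$: given a PRBCA for $L$, it nondeterministically produces a PDA for some language $L'$ such that $L$ is non-bounded iff $L'$ is non-bounded along some branch. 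I would want to strengthen the use of this reduction: I expect the reduction to be \emph{growth-faithful} in the sense that $L$ is non-bounded if and only if \emph{some} branch yields a context-free $L'$ with $L' \subseteq F(L)$ (factors of $L$) — more precisely, I would trace through the construction behind \cref{main-unboundedness} and verify that the context-free language produced on the witnessing branch embeds into $L$ up to factors in a way that transfers lower bounds on growth back to $L$.

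Next I would invoke the known dichotomy for context-free languages: a context-free language has polynomial growth iff it is bounded, and otherwise it has exponential growth (this is classical; see the references around \cite{ginsburg1964bounded, DBLP:journals/ijfcs/GawrychowskiKRS10}). Concretely, a non-bounded context-free language $L'$ contains, via a pumping argument on a derivation tree, two ``independent'' pumpable factors — that is, there are strings giving an embedded copy of a language of the form $\{x, y\}^*$ (suitably generalized) inside $F(L')$, which forces $|L' \cap \Sigma^{\le m}| \ge r^m$ for some $r>1$ and infinitely many $m$. So on the witnessing branch of the reduction, $L'$ is non-bounded context-free and hence of exponential growth. The key step is then to push this exponential lower bound from $L'$ back to $L$: because $\pP_{\mathsf{funi}}$-style arguments and condition (U4) tell us that non-boundedness only depends on factors, I would arrange the reduction so that the exponentially many short witness words in $L'$ appear (as factors, hence as infixes of words of comparable length) in $L$ itself. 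If the embedding is $L' \subseteq F(L)$ with only a polynomial blow-up in length, then $|L \cap \Sigma^{\le cm}| \ge |L' \cap \Sigma^{\le m}| \ge r^m$ for infinitely many $m$, giving exponential growth of $L$ with base $r^{1/c} > 1$.

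The main obstacle I anticipate is making the transfer of the \emph{lower} bound rigorous: \cref{main-unboundedness} as stated only preserves the \emph{Boolean} value of the predicate, not quantitative growth information, and in general passing to factors $F(L)$ can change counting functions (a single word of length $n$ has $\Theta(n^2)$ factors). I would address this by not using \cref{main-unboundedness} as a black box, but by reexamining its proof: the reduction first converts the PRBCA to a $\Z$-grammar in logspace (\cref{main-conversion}), then applies the unboundedness-predicate reduction of \cite{DBLP:conf/icalp/CzerwinskiHZ18} at the level of $\Z$-grammars / context-free grammars. In that proof, when non-boundedness is detected, one actually exhibits a concrete sub-pattern — two pumps that can be iterated independently — \emph{realized inside $L$} (not merely inside $F(L)$ after an abstract factor closure). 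So I would extract from the proof of \cref{main-unboundedness} the statement: ``if $L$ is a non-bounded PRBCA language, then there exist words $u, v_1, v_2, w, z$ and a context $\cdots$ such that $\{u v_1^{i} v_2^{j} w \mid i,j \ge 0\}$ (or a similar two-parameter family with $v_1, v_2$ noncommuting up to the relevant equivalence) is contained in $F(L)$ via words of length $O(i+j)$,'' which directly yields $r > 1$ with $|L \cap \Sigma^{\le m}| \ge r^{m}$ infinitely often. Combined with the trivial direction (bounded $\Rightarrow$ polynomial growth), this establishes the theorem and simultaneously rules out intermediate growth for PRBCA, hence for RBCA.
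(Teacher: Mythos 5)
Your proposal follows essentially the same route as the paper: the paper formalizes exactly the strengthening you ask for as \cref{unboundedness-factors-stronger}, which (after putting the $\Z$-grammar in Chomsky normal form) shows $L_{A,Q}\linembed L(G)$ with a \emph{linear} length bound, combines it with \cref{unboundedness-decomposition} to find a non-bounded context-free $L_{A,Q}$, and invokes the classical exponential-growth fact for non-bounded context-free languages. Two small points to tighten: the blow-up really must be linear rather than polynomial (a blow-up $m\mapsto m^d$ with $d>1$ would only give $r^{\sqrt[d]{n}}$ many words of length $\le n$, which is not exponential growth), and the transfer inequality needs a polynomial correction for the multiplicity of factor occurrences, i.e.\ $|L'\cap\Sigma^{\le m}|\le cm(m+1)\cdot|L\cap\Sigma^{\le cm}|$, which still preserves exponential growth.
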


\section{Translating reversal-bounded counters into $\Z$-counters}
\label{sec:rbc-zvass-conversion}


\subsubsection{Reducing the number of reversals to one.} 
In this section we prove \cref{main-conversion}, the conversion from RBCA to $\Z$-VASS.
In \cite[Lemma 1]{GURARI1981220}, it is claimed that given a
$(k,r)$-RBCA, one can construct in time polynomial in $k$ and $r$ a
$(k\lceil (r+1)/2\rceil,1)$-RBCA that accepts the same language.
The reference \cite{baker1974reversal} that they provide does include
such a construction \cite[proof of Theorem 5]{baker1974reversal}.  The construction in~\cite{baker1974reversal} is only a rough sketch and makes no claims about complexity, but by our reading of the construction, it keeps track of the reversals of each counter in the state, which would result in an exponential blow-up.

Instead, we proceed as follows. Consider a $(k,r)$-RBCA with counters
\linebreak
$\ctr_1, \dots, \ctr_k$.
Without loss of generality, assume $r=2m-1$.
We will construct an equivalent $(2k(r+1),1)$-RBCA.
Looking at the behavior of a single counter $\ctr_i$,
we can decompose every $r$-reversal bounded run
into subruns without reversals.
We call these subruns \emph{phases} and number them from $1$ to at most $2m$.
The odd (even) numbered phases are \emph{positive} (\emph{negative}), where $\ctr_i$ is only incremented (decremented).
We replace $\ctr_i$ by $m$ one-reversal counters $\ctr_{i,1}, \dots, \ctr_{i,m}$,
where $\ctr_{i,j}$ records the increments on $\ctr_i$ during the positive phase $2j-1$.

However, our machine needs to keep track of which counters are 
in which phase, in order to know which of the counters $\ctr_{i,j}$ it currently
has to use. We achieve this as follows: For each of the $k$
counters $\ctr_i$, we also have an additional set of $2m = r + 1$ ``phase counters'' $\pctr_{i,1}, \dots, \pctr_{i,2m}$ to
store which phase we are in. This gives $km+k(r+1)\le 2k(r+1)$
counters in total. We encode that counter $\ctr_i$ is in phase $j$ by setting $\pctr_{i,j}$ to $1$ and setting $\pctr_{i,j'}$ to $0$ for each $j' \neq j$.
Since we only ever increase the
phase, the phase counters are one-reversal as well.

Using non-zero-tests, at any point, the automaton can nondeterministically guess and verify the current phase of each counter.
This allows it to pick the correct counter $\ctr_{i,j}$ for each instruction.
When counter $\ctr_i$ is in a positive phase $2j-1$, then increments and decrements on $\ctr_i$ are simulated as follows:
\begin{description}
\item[increment] increment $\ctr_{i,j}$
\item[decrement] go into the next (negative) phase $2j$; then
  non-deterministically pick some $\ell\in[1,j]$ and decrement $\ctr_{i,\ell}$. We 
  cannot simply decrement $\ctr_{i,j}$ as we might have switched to phase $j$ while $\ctr_i$ had a non-zero value and hence it is possible that $\ctr_i$ could be decremented further than just $\ctr_{i,j}$ allows.
\end{description}

When counter $\ctr_i$ is in a negative phase $2j$, then we
simulate increments and decrements as follows:
\begin{description}
\item[increment] go into the next phase $2j+1$ (unless $j = m$; then the machine blocks) and increment $\ctr_{i,j+1}$.
\item[decrement] non-deterministically pick some $\ell\in[1,j]$ and decrement $\ctr_{i,\ell}$.
\end{description}
Finally, to simulate a zero-test on $\ctr_i$, we test all counters $\ctr_{i,1},\ldots,\ctr_{i,m}$ for zero,
while for the simulation of a non-zero-test on $\ctr_i$ we non-deterministically
pick one of the counters $\ctr_{i,1},\ldots,\ctr_{i,m}$ to test for non-zero.

Correctness can be easily verified by the following properties. If at some point $\ctr_i$ is in phase $2j-1$ or $2j$ then
(i) $\sum_{\ell=1}^{j} \ctr_{i,\ell} = \ctr_i$,
(ii) the counters $\ctr_{i,1}, \dots, \ctr_{i,j}$ have made at most one reversal, and
(iii) the counters $\ctr_{i,j+1}, \dots, \ctr_{i,m}$ have not been touched (in particular, they are zero).
Furthermore, if $\ctr_i$ is in a positive phase $2j-1$ then $\ctr_{i,j}$ has made no reversal yet.

Note that this construction replaces every transition of the original system with $\bigO(r)$ new transitions (and states). Our construction therefore yields only a linear blowup in the size of the system (constant if $r$ is fixed). See \cref{app:1-reversal-construction} for the details of the construction.

\subsubsection{From $1$-reversal to $\Z$-counters.}
We now turn the $(k,1)$-RBCA into a $\Z$-VASS.
The difference between a 1-reversal-bounded counter and a $\Z$-counter is that
(i) a non-negative counter should block if it is decremented on counter value $0$, and
(ii) a 1-reversal-bounded counter allows (non-)zero-tests.
Observe that all zero-tests occur before the first increment or after the last decrement.
All non-zero-tests occur between the first increment and the last decrement.

If the number $k$ of counters is bounded, then the following simple solution works.
The $\Z$-VASS stores the information which of the counters has not been incremented yet
and which counters will not be incremented again in the future.
This information suffices to simulate the counters faithfully (in terms of the properties (i) and (ii) above)
and increases the state space by a factor of $2^k \cdot 2^k$.
The latter information needs to be guessed (by the automaton) and is verified by means that all counters are zero in the end.

In the general case we introduce a variant of $\Z$-VASS that can guess polynomially many bits in the beginning and read them throughout the run.
A $d$-dimensional \emph{$\Z$-VASS with guessing ($\Z$-VASSG)}
has almost the same format as a $d$-dimensional $\Z$-VASS,
except that each transition additionally carries a propositional formula
over some finite set of variables $X$.
A word $w \in \Sigma^*$ is accepted by the $\Z$-VASSG if there exists an assignment
$\nu \colon X \to \{0,1\}$ and an accepting run $(q_0,\vec 0) \xrightarrow{w} (q,\vec 0)$
for some $q \in F$ such that all formulas appearing throughout the run are satisfied by $\nu$.


%

We have to eliminate zero- and non-zero-tests of the $(k,1)$-RBCA.
Whether a (non-)zero-test is successful depends on
which phase a counter is currently in (and whether in the end, every counter is zero; but we assume that our acceptance condition ensures this).
Each counter goes through at most $4$ phases:
\vspace{0.05cm}

\begin{minipage}{0.5\textwidth}
\begin{enumerate}
\item before the first increment,
\item the ``increment phase'',
\end{enumerate}
\end{minipage}%
\begin{minipage}{0.5\textwidth}
\begin{enumerate}[start=3]
\item the ``decrement phase'', and
\item after the last decrement.
\end{enumerate}
\end{minipage}
\vspace{0.05cm}

\noindent
Hence, every run can be decomposed into $4k$ (possibly empty) segments, in which no
counter changes its phase.
The idea is to guess the phase of each counter in each
segment. Hence, we have propositional variables $\var{p}_{i,j,\ell}$ for $i\in[1,4k]$,
$j\in[1,k]$, and $\ell\in[1,4]$. Then $\var{p}_{i,j,\ell}$ is true iff in
segment $i$, counter $j$ is in phase $\ell$. We will have to check that
the assignment is \emph{admissible} for each counter, meaning that the sequence
of phases for each counter adheres to the order described above.

We modify the machine as follows. In its state, it keeps a number
$i\in[1,4k]$ which holds the current segment. At the beginning of the
run, the machine checks that the assignment $\nu$ is admissible using
a propositional formula: It checks that (i) for each segment $i$ and
each counter $j$ there exists exactly one phase $\ell$ so that $\var{p}_{i,j,\ell}$ is true,
and (ii) the order of phases above is obeyed.
Then, for every operation on a counter, the
machine checks that the operation is consistent with the current
segment. Moreover, if the current operation warrants a change of the
segment, then the segment counter $i$ is incremented.
For example, if a counter in phase 1 is incremented,
it switches to phase 2 and the segment counter is incremented;
or, if a counter in phase 3 is tested for zero,
it switches to phase 4 and the segment counter is incremented.

With these modifications, we can zero-test by checking
variables corresponding to the current segment: A zero-test can only succeed in phase 1 and 4. Similarly, for a non-zero-test, we can check if the counter is in phase 2 or 3.

\subsubsection{Turning a $\Z$-VASSG into a $\Z$-VASS.}

To handle the general case mentioned above,
we need to show how to convert $\Z$-VASSG into ordinary $\Z$-VASS.
In a preparatory step, we ensure that each formula is a literal.
A transition labeled by a formula $\varphi$ is replaced by a series-parallel graph:
After bringing $\varphi$ in negation normal form by pushing negations inwards,
we can replace conjunctions by a series composition
and disjunctions by a parallel composition (non-determinism).

The $\Z$-VASS works as follows. In addition to the original counters of the
$\Z$-VASSG, it has for each variable $\var{x}\in X$ two additional
counters: $\var{x}^+$ and $\var{x}^-$. Here,
$\var{x}^+$ ($\var{x}^-$) counts how many times $\var{x}$ is read with a
positive (negative) assignment.  By making sure that either
$\var{x}^+=0$ or $\var{x}^-=0$ in the end, we guarantee
that we always read the same value of $\var{x}$.

Thus, in order to check a literal, our $\Z$-VASS increments the
corresponding counter. In the end, before reaching a final state, it
goes through each variable $\var{x}\in X$ and either enters a loop
decrementing $\var{x}^+$ or a loop decrementing $\var{x}^-$. Then, it
can reach the zero vector only if all variable checks had been
consistent.

\subsubsection{From PRBCA to $\Z$-grammars.}
It remains to convert in logspace an $(r,k)$-PRBCA into an equivalent $\Z$-grammar.
Just as for converting an RBCA into a $\Z$-VASS,
one can convert a PRBCA into an equivalent \emph{$\Z$-PVASS} (pushdown vector addition system with $\Z$-counters).
Afterwards, one applies the classical transformation from pushdown automata to context-free grammars
(a.k.a.\ triple construction), cf.~\cite[Lemma~2.26]{DBLP:books/lib/AhoU72}:
We introduce for every state pair $(p,q)$ a nonterminal $X_{p,q}$, deriving all words which are read between $p$ to $q$ (starting and ending with empty stacks).
For example, we introduce productions $X_{p,q} \to a X_{p',q'} b$ for all push transitions
$(p, a, \gamma, p')$
and pop transitions
$(q', b, \bar \gamma, q)$.
The counter effects of transitions in the $\Z$-PVASS (vectors in $\Z^k$) are translated into effects of the productions,
e.g.\ the effect of the production $X_{p,q} \to a X_{p',q'} b$ above is the sum of the effects of the corresponding push- and pop-transition. 

\section{Deciding unboundedness predicates}
\label{sec:unboundedness-reduction}

\subsubsection{Proof overview.} 
In this section, we prove \cref{main-unboundedness}. Let us begin with a
sketch. Our task is to take a PRBCA $\cA$ and
non-deterministically compute a PDA $\cA'$ so that $L(\cA)$ satisfies $\pP$ if
and only if some of the outcomes for $\cA'$ satisfy $\pP$. It will be clear
from the construction that if the input was an RBCA, then the resulting
PDA will be an NFA.
Using \cref{main-conversion} we will phrase the main part of the reduction
in terms of $\Z$-grammars, meaning we take a $\Z$-grammar $G$ as input and
non-deterministically compute context-free grammars $G'$.

The idea of the reduction is to identify a set of productions in $G$ that, in
some appropriate sense, can be canceled (regarding the integer counter values)
by a collection of other productions. Then, $G'$ is obtained by only using a
set of productions that can be canceled. Moreover, these productions are used
regardless of what counter updates they perform. Then, to show the correctness,
we argue in two directions: First, we show that any word derivable by $G'$ occurs
as a factor of $L(G)$. Essentially, this is because each production used in
$G'$ can be canceled by adding more productions in $G$, thus yielding a
complete derivation of $G$. Thus, we have that $L(G')\subseteq F(L(G))$, which
by the axioms of unboundedness predicates means that $\pP(L(G'))$ implies
$\pP(L(G))$.  Second, we show that $L(G)$ is a finite union of products
(i.e.\ concatenations) $P_i = L_1 \cdot L_2 \cdots L_k$
such that each $L_i$ is either finite or included in
$L(G')$ for some $G'$ among all non-deterministic outcomes.
Again, by the axioms of unboundedness
predicates, this means that if $\pP(L(G))$, then $\pP(L(G'))$ must hold for
some $G'$.

\subsubsection{Unboundedness predicates and finite languages.} Before we start
with the proof, let us observe that we may assume that our unboundedness predicate
is only satisfied for infinite sets. 
First, suppose $\pP$ is satisfied for $\{\varepsilon\}$. This implies that
$\pP=\pP_{\ne\emptyset}$ and hence we can just decide whether $\pP(L)$ by
deciding whether $L\ne\emptyset$, which can be done in
$\NP$~\cite{DBLP:conf/cav/HagueL11}.  From now on, suppose that $\pP$ is not
satisfied for $\{\varepsilon\}$.  Consider the alphabet
$\Sigma_1:=\{a\in\Sigma \mid \pP(\{a\})\}$.
Now observe that if $K\subseteq\Sigma^*$ is finite, then by the axioms of
unboundedness predicates, we have $\pP(K)$ if and only if some letter from
$\Sigma_1$ appears in $K$. Thus, if $L\subseteq(\Sigma\setminus\Sigma_1)^*$,
then $\pP(L)$ can only hold if $L$ is infinite. This motivates the following
definition. Given a language $L\subseteq\Sigma^*$, we define
\[ \begin{aligned} L_0=L\cap (\Sigma\setminus\Sigma_1)^*, && L_1=L\cap \Sigma^*\Sigma_1\Sigma^*. \end{aligned} \]
Then, $\pP(L)$ if and only if $\pP(L_0)$ or $\pP(L_1)$. Moreover,
$\pP(L_1)$ is equivalent to $L_1\ne\emptyset$.

Therefore, our reduction proceeds as follows.  We construct (P)RBCA for $L_0$
and for $L_1$. This can be done in logspace, because intersections with regular
languages can be done with a simple product construction. Then, we check in
$\NP$ whether $L_1\ne\emptyset$. If yes, then we return ``unbounded''. If no, we
regard $\pP$ as an unboundedness predicate on languages over
$\Sigma\setminus\Sigma_1$ with the additional property that $\pP$ is only
satisfied for infinite languages. Thus, it suffices to prove \cref{main-unboundedness} in the case that $\pP$ is only
satisfied for infinite sets.

\subsubsection{Pumps and cancelation.}
In order to define our notion of cancelable productions, we need some terminology.
We will need to argue about derivation trees for $\Z$-grammars.
For any alphabet $\Gamma$ and $d\in\N$, let $\cT_{\Gamma,d}$ be the set of
all finite trees where every node is labeled by both (i)~a letter from $\Gamma$
and (ii)~a vector from $\Z^d$.
Suppose $G=(N,\Sigma,P,S)$ is a $d$-dimensional $\Z$-grammar. For a production
$p = (A \to u,\bv)$, we write $\counters(p):=\bv$ for its associated counter
effect.  To each derivation in $G$, we associate a derivation tree from $\cT_{N \cup \Sigma,d}$ as for
context-free grammars.  The only difference is that whenever we apply a
production $(A\to u,\bv)$, then the node corresponding to the rewritten $A$ is
also labeled with $\bv$. As in context-free grammars, the leaf nodes carry terminal
letters; their vector label is just $\bzero\in\Z^d$.

We extend the map $\counters$ to both vectors in $\N^P$ and to derivation trees.
If $\vec u\in\N^P$, then $\counters(\vec u) = \sum_{p \in P}\counters(p) \cdot \vec{u}[p]$.
Similarly, if $\tau$ is a derivation tree, then $\counters(\tau)\in\Z^d$ is the sum of all
labels from $\Z^d$.  A derivation tree $\tau$ for a derivation $A \derivs u$ is
called \emph{complete} if $A = S$, $u \in \Sigma^*$ and $\counters(\tau) =
\vec{0}$. In other words, $\tau$ derives a terminal word and the total counter
effect of the derivation is zero.  For such a complete derivation, we also
write $\mathsf{yield}(\tau)$ for the word $u$.  A derivation tree $\tau$ is
called a \emph{pump} if it is the derivation tree of a derivation of the form
$A\derivs uAv$ for some $u,v\in \Sigma^*$ and $A\in N$.  A subset $M\subseteq
N$ of the non-terminals is called \emph{realizable} if there exists a complete
derivation  of $G$ that contains all non-terminals in $M$ and no non-terminals
outside of $M$. 

A production $p$ in $P$ is called \emph{$M$-cancelable} if there exist pumps
$\tau_1,\ldots,\tau_k$ (for some $k\in\N$) such that (i)~$p$ occurs in some
$\tau_i$ and (ii)~$\counters(\tau_1)+\cdots+\counters(\tau_k)=\bzero$, i.e.\
the total counter effect of $\tau_1,\ldots,\tau_k$ is zero and (iii)~all
productions in $\tau_1,\ldots,\tau_k$ only use non-terminals from $M$. 
We say that a subset $Q\subseteq P$ is \emph{$M$-cancelable} if all productions in $Q$ 
are $M$-cancelable.

\subsubsection{The reduction.} 
Using the notions of $M$-cancelable productions, we are ready 
to describe how the context-free grammars are constructed. Suppose that $M$ is realizable,
that $Q\subseteq P$ is $M$-cancelable, and that $A\in M$. 
Consider the language
\[ 
	L_{A,Q} = \{u, v \in\Sigma^* \mid \exists~\text{derivation}~A\derivs[Q] uAv\}.
\]
Thus $L_{A,Q}$ consists of all words $u$ and $v$ appearing in derivations
(whose counter values are not necessarily zero) of the form $A\derivs uAv$, if
we only use $M$-cancelable productions. The $L_{A,Q}$ will be the languages
$L(G')$ mentioned above. 

It is an easy observation that we can, given $G$ and a subset $Q\subseteq P$,
construct a context-free grammar for $L_{A,Q}$:
\begin{lemma}
Given a $\Z$-grammar $G$, a non-terminal $A$, and a subset $Q\subseteq P$,
we can construct in logspace a context-free grammar for $L_{A,Q}$. Moreover, if
$G$ is left-linear, then the construction yields an NFA for $L_{A,Q}$.
\end{lemma}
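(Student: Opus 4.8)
The plan is to give a direct construction of a context-free grammar whose derivations mimic $Q$-restricted derivations $A \derivs[Q] uAv$ in $G$, discarding all counter effects. The key observation is that $L_{A,Q}$ is essentially the yield language of "pumping derivation trees" using only productions from $Q$, and that the standard way context-free grammars capture such two-sided yields is via a product of nonterminals that tracks, for each nonterminal $B$ that currently sits on the path from $A$ to the "hole", both the part of the tree derived to its left and to its right.

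\textbf{Construction.} For every nonterminal $B \in N$, introduce a fresh nonterminal $\langle B \rangle$ in the new grammar $G'$, whose intended meaning is: $\langle B \rangle \derivs_{G'}^* xy$ iff there is a $Q$-restricted derivation $B \derivs[Q] x B y$ in $G$ (here I deliberately keep the left factor $x$ and right factor $y$ concatenated, which is fine since $L_{A,Q}$ is defined as the set of all $u$ \emph{and} $v$ appearing, i.e.\ we only need to generate $\{u,v\}$ as words, not the pair). For each production $(B \to w_0 B_1 w_1 B_2 \cdots B_m w_m, \bv) \in Q$ with $w_0,\ldots,w_m \in \Sigma^*$ and $B_1,\ldots,B_m \in N$, and for each choice of index $\ell \in [1,m]$ specifying which child continues the path toward the "hole", add to $G'$ the production
\[
\langle B \rangle \;\to\; B_1 \cdots B_{\ell-1}\, \langle B_\ell \rangle\, B_{\ell+1} \cdots B_m,
\]
where on the right-hand side each $B_j$ for $j\neq\ell$ is the ordinary copy of the $G$-nonterminal $B_j$, understood to derive (via the ordinary context-free productions of $G$ obtained by forgetting counter effects, restricted to $Q$) an arbitrary terminal word generated from $B_j$; the terminal segments $w_0,\ldots,w_m$ are also emitted. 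Finally add $\langle B \rangle \to \varepsilon$ for every $B$ (the trivial pump). The start symbol of $G'$ is $\langle A \rangle$. Since $G$ is a $\Z$-grammar, every production already has terminal words interleaved with the finitely many nonterminals, so this is a finite construction; each new production has size linear in the size of the corresponding production of $G$, and the indices $\ell$ and the nonterminal-forgetting are all computable by a logspace transducer (it just reads $Q$ and the production list of $G$ and copies, inserting angle brackets). Correctness in both directions is a routine induction on derivation-tree height: a $Q$-restricted derivation $A \derivs[Q] uAv$ decomposes at its root into the production applied to $A$, a recursive pump derivation on one child, and ordinary (counter-forgetting) subderivations on the other children, which is exactly what the new productions encode; conversely any $G'$-derivation reassembles into such a pump.

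\textbf{The left-linear case.} If $G$ is left-linear, every production has the form $B \to C w$ or $B \to w$ with $C \in N$, $w \in \Sigma^*$. Then a pump $B \derivs[Q] u B v$ forces $v = \varepsilon$ (nothing can ever appear to the right of a nonterminal), so $L_{A,Q} = \{u \mid A \derivs[Q] u A\}$, and the construction above specializes: each new production has at most one nonterminal on its right-hand side, namely $\langle C \rangle$ with a terminal block, so $G'$ is itself left-linear, i.e.\ an NFA (states = nonterminals $\langle B \rangle$, transitions labeled by the $w$'s, reading $w$ \emph{reversed} or not depending on orientation conventions, with $\langle A \rangle$ initial and $\langle B \rangle$ final exactly when $B \to \varepsilon$-type pumps terminate). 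This transducer is again logspace.

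\textbf{Main obstacle.} The only subtlety is the bookkeeping: one must be careful that the "ordinary" (non-path) subtrees are allowed to use \emph{all} productions of $G$ restricted to $Q$, not just those reachable from $M$, and that counter effects are genuinely irrelevant here — the definition of $L_{A,Q}$ makes no zero-sum demand, so forgetting the $\Z^d$-labels is sound. I expect the bulk of the proof-writing effort to be the two-directional tree-surgery induction establishing $L(\langle B \rangle) = \{xy : B \derivs[Q] xBy\}$, but this is entirely analogous to the classical decomposition used in the triple construction for pushdown automata (cf.\ \cite[Lemma~2.26]{DBLP:books/lib/AhoU72}) and presents no real difficulty; the logspace bound is immediate since the transducer only copies and relabels bounded-size pieces.
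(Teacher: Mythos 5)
There is a genuine gap, and it comes from a misreading of the definition of $L_{A,Q}$. Your grammar is set up so that $\langle B\rangle$ derives the \emph{concatenation} $xy$ whenever $B\derivs[Q] xBy$, and you assert this is ``fine''. It is not: $L_{A,Q}$ is the set containing, for each pump $A\derivs[Q] uAv$, the word $u$ \emph{and} the word $v$ as two separate elements, i.e.\ it is the union of the language of left contexts and the language of right contexts --- not the language $\{uv \mid A\derivs[Q] uAv\}$. These differ already for $Q=\{A\to aAb\}$, where $L_{A,Q}=a^*\cup b^*$ but your grammar generates $\{a^nb^n\mid n\ge 0\}$. The distinction is not cosmetic for the surrounding argument: \cref{unboundedness-factors} shows that each of $u$ and $v$ \emph{separately} is a factor of some word of $L(G)$ (they sit on opposite sides of the inserted pump), whereas $uv$ need not be a factor of anything in $L(G)$; and the decomposition $\App_{Q}(\tau)=L_{A,Q}\cdot\App_{Q}(\tau_1)\cdots\App_Q(\tau_n)\cdot L_{A,Q}$ likewise uses $L_{A,Q}$ once for the left context and once for the right context. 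So the language your construction produces would break both halves of \cref{unboundedness-reduction-correctness}.

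The fix is small and turns your construction into essentially the paper's: build \emph{two} grammars and output their union. In the $u$-grammar, the production for $\langle B\rangle$ selecting child $\ell$ keeps only the material strictly to the left of the spine ($\langle B\rangle\to w_0B_1w_1\cdots B_{\ell-1}w_{\ell-1}\langle B_\ell\rangle$) and discards everything to the right, with $\langle A\rangle\to\varepsilon$ closing the hole; the $v$-grammar is symmetric. (Your other bookkeeping is right: off-spine nonterminals must indeed expand using only $Q$-productions with counter effects forgotten, and the whole thing is a logspace relabelling.) One further small slip: in the left-linear case $B\to Cw$ the nonterminal stays at the \emph{left} end, so it is $u$, not $v$, that is forced to be $\varepsilon$; you have the sides reversed. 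In that case your concatenation error happens to be harmless (since $uv=v$ and $\varepsilon$ is generated anyway), and the resulting one-nonterminal-per-right-hand-side grammar is indeed an NFA, but the general context-free case genuinely needs the two-grammar union.
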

We provide details in \cref{appendix-unboundedness-reduction}.
Now, our reduction works as follows:
\begin{enumerate}
\item Guess a subset $M\subseteq N$ and an $A\in M$; verify that $M$ is realizable.
\item Guess a subset $Q\subseteq P$; verify that $Q$ is $M$-cancelable.
\item Compute a context-free grammar for $L_{A,Q}$.
\end{enumerate}
Here, we need to show that steps 1 and 2 can be done in $\NP$:
\begin{restatable}{lemma}{realizableCancelable}\label{realizable-cancelable}
Given a subset $M\subseteq N$, we can check in $\NP$ whether $M$ is realizable.
Moreover, given $M\subseteq N$ and $p\in P$, we can check in $\NP$ if $p$ is $M$-cancelable.
\end{restatable}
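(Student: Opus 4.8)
The plan is to phrase both problems as (non-)emptiness questions for suitable $\Z$-grammars / $\Z$-VASS augmented with reachability constraints, and then appeal to the fact that non-emptiness (and more generally satisfiability of the associated existential Presburger conditions) for such machines is in $\NP$. For realizability of $M$, I would build a $\Z$-grammar $G_M$ that behaves like $G$ but (a) restricts the available productions to those using only non-terminals from $M$, and (b) additionally tracks in its state which non-terminals of $M$ have already been ``seen'' in the current derivation. Since $M$ is polynomially sized in the input, tracking the seen-set costs only $2^{|M|}$ states --- too much; instead I would, as done elsewhere in the paper, track this information via $|M|$ extra $\Z$-counters, one per non-terminal $A'\in M$, incrementing the $A'$-counter whenever an occurrence of $A'$ is rewritten, and then impose a ``$\ge 1$'' lower bound. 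That lower bound is not directly expressible in a $\Z$-grammar, but one can encode it by guessing a distinguished point at which $A'$ is used and decrementing there, or, more cleanly, by passing to the Parikh image: $M$ is realizable iff there is a complete derivation of $G$ whose set of used non-terminals is exactly $M$, which is equivalent to satisfiability of an existential Presburger formula over the Parikh image of derivation trees of $G$. Concretely, by the classical result that the Parikh image of a ($\Z$-)grammar is effectively semilinear with a polynomial-size existential Presburger description (as in \cite{DBLP:conf/cav/HagueL11} for PRBCA / \cite{DBLP:journals/jacm/Ibarra78} for RBCA, via \cref{main-conversion}), one writes a formula asserting: the multiset of productions forms a valid derivation tree from $S$ yielding a terminal word, the total counter effect is $\bzero$, every production uses only non-terminals in $M$, and for each $A'\in M$ the count of productions rewriting $A'$ is at least one. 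This formula has polynomial size, so its satisfiability --- hence realizability of $M$ --- is in $\NP$.

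For $M$-cancelability of a production $p$, the approach is analogous but now over \emph{pumps} rather than complete derivations. A pump is a derivation tree for $A'\derivs uA'v$; the relevant object is a multiset of pumps $\tau_1,\dots,\tau_k$ with $\sum_i\counters(\tau_i)=\bzero$, all using only non-terminals in $M$, at least one containing $p$. The key observation is that we do not need to bound $k$: a non-negative integer combination of pump-effects summing to zero is governed by a semilinear set, and the existence of such a combination that ``covers'' $p$ is again an existential Presburger condition. Formally, for each $A'\in M$ let $C_{A'}\subseteq\Z^d$ be the set of counter effects of pumps $A'\derivs u A' v$ using only non-terminals from $M$; each $C_{A'}$ is an effectively semilinear set with a polynomial-size $\exists$PA description (it is the Parikh-image--driven effect set of a sub-grammar, obtainable as above). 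Then $p$ is $M$-cancelable iff there exist vectors $\bv_1,\dots,\bv_\ell$ (one per $A'\in M$, say, representing the aggregate effect contributed by pumps rooted at $A'$) with each $\bv_j$ a non-negative-integer-combination-reachable sum of vectors from the corresponding $C_{A'}$, such that $\sum_j\bv_j=\bzero$, together with the requirement that at least one of the pumps used actually contains $p$. To handle the ``contains $p$'' requirement cleanly, I would single out one pump: $p$ is $M$-cancelable iff there is a pump $\tau_0$ (using only $M$-non-terminals) containing $p$, with effect $\bv_0$, and the vector $-\bv_0$ lies in the submonoid generated by $\{C_{A'} : A'\in M\}$ (i.e.\ $-\bv_0$ is a non-negative integer combination of pump-effects over $M$). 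Both conditions are expressible by polynomial-size existential Presburger formulas --- the first because ``$\tau_0$ contains $p$'' is just ``the count of $p$ in the derivation-tree multiset is $\ge 1$'', the second because membership in a finitely-generated submonoid of $\Z^d$ whose generators form a semilinear set is an existential Presburger property. Hence $M$-cancelability of $p$ is in $\NP$.

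The main obstacle, and the step deserving the most care in the full write-up, is the cancelability direction: one must be precise about \emph{which} semilinear set to use for pump-effects and argue that its $\exists$PA description is polynomial-size, and one must verify that allowing arbitrarily many pumps (unbounded $k$) does not escape $\NP$ --- this is exactly where the semilinearity / finitely-generated-submonoid argument does the work, reducing an a priori unbounded search to satisfiability of a fixed polynomial-size formula. A secondary subtlety is ensuring that ``pumps using only non-terminals from $M$'' is correctly encoded: the sub-grammar obtained by deleting productions that mention non-terminals outside $M$ must be used, and one must confirm that the Parikh-image / effect-set construction for this sub-grammar is still polynomial-size and logspace-constructible, which follows since deleting productions is trivial and the $\exists$PA-for-Parikh-image construction is polynomial in the grammar size. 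Everything else --- the reduction of realizability to an $\exists$PA formula, the ``$\ge 1$'' count constraints, and the final invocation of $\NP$-membership of $\exists$PA satisfiability --- is routine.
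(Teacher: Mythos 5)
Your first half (realizability of $M$) lands on essentially the paper's own argument: compute a polynomial-size existential Presburger formula for the Parikh images of complete derivations (the paper uses the Verma--Seidl--Schwentick construction for this), conjoin the constraints that every production used has its non-terminal in $M$, that every $A'\in M$ is rewritten at least once, and that the total counter effect is $\bzero$, and test satisfiability in $\NP$. The initial detour through ``seen-set'' counters is unnecessary, as you yourself conclude.

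The cancelability half takes a genuinely different route, and precisely at the step you flag as delicate there is a real gap. You reduce the problem to (a) an existential Presburger description of the effect sets $C_{A'}$ of single pumps and (b) membership of $-\bv_0$ in the additive closure of $\bigcup_{A'} C_{A'}$. Two problems arise. First, that submonoid is in general \emph{not} finitely generated (consider generators $\{(1,n) : n\ge 0\}$), so the claim as phrased is not quite right; what you actually need is that the additive Kleene star of a set defined by an existential Presburger formula is again definable by a \emph{polynomial-size} existential Presburger formula. This is true, but it is a non-trivial theorem (essentially the $\NP$ upper bound for linear integer arithmetic with a star operator); the naive justification --- convert the formula for $C_{A'}$ into an explicit union of linear sets and star that --- incurs an exponential blow-up and does not stay within $\NP$. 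The paper sidesteps the star operation entirely: it builds a single auxiliary grammar $G'$ over the terminal alphabet $P$ that derives an arbitrary finite \emph{sequence} of pumps, emitting the name of each production it applies, so that $\Parikh(L(G'))$ is exactly the set of sums of Parikh images of pumps; one application of Verma--Seidl--Schwentick then yields a polynomial-size formula, to which the constraints $\vec u[p]>0$, $\counters(\vec u)=\bzero$, and ``support contained in the $M$-productions'' are added. Alternatively, you could close your gap without any star theorem by invoking \cref{euler}: a vector of $\N^P$ is a sum of Parikh images of pumps if and only if it is a flow, so $p$ is $M$-cancelable if and only if there exists $\bmf\in\N^P$ with $\partial(\bmf)=\bzero$, $\counters(\bmf)=\bzero$, $\bmf[p]>0$, and $\bmf$ supported on productions using only non-terminals from $M$ --- a plain integer linear program, checkable in $\NP$.
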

Both can be done using the fact that for a given context-free grammar, one can construct a Parikh-equivalent existential Presburger formula~\cite{DBLP:conf/cade/VermaSS05} and the fact that satisfiability of existential Presburger formulas is in $\NP$. See \cref{appendix-unboundedness-reduction} for details.
This completes the description of our reduction.
Therefore, it remains to show correctness of the reduction. In other words, to prove:
\begin{proposition}\label{unboundedness-reduction-correctness}
We have $\pP(L(G))$ if and only if $\pP(L_{A,Q})$ for some subset $Q\subseteq
P$ such that there is a realizable $M\subseteq N$ with $A\in M$ and
$Q$ being $M$-cancelable.
\end{proposition}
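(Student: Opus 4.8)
The plan is to prove the two implications of the equivalence separately, along the lines sketched in the proof overview. Throughout I use that, by the earlier reduction, $\pP$ may be assumed to hold only for infinite languages. The ``if'' direction will follow from the inclusion $L_{A,Q}\subseteq F(L(G))$ for every valid triple, combined with axioms (U1) and (U4). The ``only if'' direction will follow from a covering of $L(G)$ by a finite union of products of languages, each factor of which is finite or contained in some $L_{A,Q}$ for a valid triple, combined with (U1), (U2), (U3) and the fact that finite factors cannot satisfy $\pP$.

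For the ``if'' direction, fix a valid triple $(M,A,Q)$ and a word $u\in L_{A,Q}$, witnessed by a pump $\sigma\colon A\derivs[Q] uAv$; the goal is a complete derivation of $G$ whose yield contains $u$ as a factor. Since $M$ is realizable there is a complete derivation tree $\tau_0$ using exactly the non-terminals of $M$, hence containing, for every $B\in M$, a node labelled $B$. The key claim is that $\sigma$ itself lies in a zero-effect collection of pumps over $M$. Indeed, for each production $p$ occurring $n_p$ times in $\sigma$, $M$-cancelability provides a multiset $\mathcal C_p$ of pumps over $M$ with $\counters(\mathcal C_p)=\bzero$ and $p\in\mathcal C_p$; the multiset $\mathcal B:=\biguplus_p n_p\cdot\mathcal C_p$ then has effect $\bzero$, uses only non-terminals of $M$, and its Parikh vector over productions dominates that of $\sigma$. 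Therefore $\mathcal B\ominus\sigma$ is a non-negative production vector still satisfying the flow-balance conditions of derivations, and decomposes (the grammar analogue of decomposing a balanced flow into cycles) into pumps over $M$; together with $\sigma$ these pumps sum to $\bzero$. Splicing this entire collection into $\tau_0$ at the matching $M$-labelled nodes yields a derivation tree of total effect $\bzero$, i.e.\ a complete derivation of $G$, whose yield contains the left part $u$ of $\sigma$ contiguously. Hence $L_{A,Q}\subseteq F(L(G))$, and $\pP(L_{A,Q})$ gives $\pP(F(L(G)))=\pP(L(G))$ by (U1), (U4).

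For the ``only if'' direction, take $w\in L(G)$ via a complete derivation tree $\tau$ with non-terminal set $M$ (automatically realizable). Fully reducing $\tau$ by removing minimal pumps eliminates all repetitions on root-to-leaf paths and leaves a bounded-size core $\tau^{\mathrm{full}}$; the removed multiset $\mathcal P$ of minimal pumps has effects in a fixed finite set and total effect $-\counters(\tau^{\mathrm{full}})$, a vector of bounded norm. A short-solution argument for linear Diophantine systems (Hilbert-basis type) splits $\mathcal P=\mathcal P_{\mathrm{small}}\uplus\mathcal P_0$ with $\mathcal P_{\mathrm{small}}$ of bounded size and $\counters(\mathcal P_0)=\bzero$. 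Re-inserting only $\mathcal P_{\mathrm{small}}$ gives a bounded-size \emph{complete} derivation $\tau^\flat$, and $\tau$ arises from $\tau^\flat$ by attaching at each of its boundedly many nodes $\nu$ (labelled some $B_\nu\in M$) a pump $\pi_\nu\colon B_\nu\derivs[Q_\nu] u_\nu B_\nu v_\nu$; collectively the $\pi_\nu$ have effect $\bzero$ and use only non-terminals of $M$, so every production in them is $M$-cancelable, i.e.\ each $(M,B_\nu,Q_\nu)$ is a valid triple and $u_\nu,v_\nu\in L_{B_\nu,Q_\nu}$. Reading off the yield, $w=z_0x_1z_1\cdots x_\ell z_\ell$ with $\ell$ bounded, each $z_i$ a factor of $\yield(\tau^\flat)$ (hence ranging over a finite set), and each $x_j$ in some $L_{B_j,Q_j}$. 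Thus $L(G)$ lies in a finite union of products of languages, each factor finite or of the form $L_{A,Q}$ for a valid triple; $\pP(L(G))$ then propagates by (U1) into the union, by (U2) into one product, and by (U3) into one of its factors, which cannot be finite, hence equals some $L_{A,Q}$ with $(M,A,Q)$ valid, and (U1) yields $\pP(L_{A,Q})$.

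The main obstacle is the ``if'' direction: turning ``every production of $\sigma$ is cancelable'' into ``$\sigma$ itself sits in a zero-effect collection of pumps'' requires repairing the non-zero counter effect that splicing $\sigma$ introduces, and the cleanest route is the flow/Eulerian decomposition of $\mathcal B\ominus\sigma$ into pumps over $M$. This needs careful derivation-tree surgery (splicing pumps at matching non-terminal nodes, and verifying that a non-negative balanced Parikh vector over productions of $M$ really is a sum of pump vectors over $M$). The ``only if'' direction is more routine but depends on the bounded-size-core decomposition and the short-solution lemma for the counter-effect system, both of which must be stated carefully.
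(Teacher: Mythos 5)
Your overall architecture matches the paper's: the proposition is derived from exactly the two lemmas you isolate (the factor inclusion $L_{A,Q}\subseteq F(L(G))$ and the cover of $L(G)$ by a finite union of products whose factors are finite or of the form $L_{A,Q}$), combined with (U1)--(U4) and the prior reduction to predicates holding only for infinite languages. Your ``if'' direction is essentially the paper's proof of \cref{unboundedness-factors} verbatim: your $\mathcal{B}$ is the flow $\sum_{p}\Parikh(\sigma)[p]\cdot\bmf_p$, your $\mathcal{B}\ominus\sigma$ is the paper's $\bmf_\tau$, and the decomposition of a nonnegative balanced production vector into pumps is exactly \cref{euler}. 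That direction is sound.

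The gap is in the ``only if'' direction, at the step ``re-inserting only $\mathcal{P}_{\mathrm{small}}$ gives a bounded-size complete derivation $\tau^\flat$.'' The removed pumps carry a dependency structure: a pump's attachment node may lie strictly inside another removed pump, so a sub-multiset of $\mathcal{P}$ can be grafted back onto the core only if it is closed under this ``attached inside'' relation. Your short-solution argument selects $\mathcal{P}_{\mathrm{small}}$ purely from counter effects, so the pump needed to repair the core's nonzero effect may be rooted at a nonterminal occurring only inside discarded pumps; closing $\mathcal{P}_{\mathrm{small}}$ under ancestors can drag in a long chain of effect-$\bzero$ pumps (destroying the size bound) and in any case re-introduces structure whose effect must again be compensated, so the argument does not close as stated. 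This is precisely the obstruction the paper points out (the naive unpumping ``is not obvious'' once derivations must have counter value zero) and sidesteps by a different route: it well-quasi-orders complete derivation trees by $\pumpleq$ (tree embedding refined by equality of the sets of occurring nonterminals, \cref{wqo}), takes a finite $\pumpleq$-basis $D_0$ of the complete derivations (\cref{unboundedness-decomposition-union}), and observes that any complete $\tau'$ with $\tau\pumpleq\tau'$ for $\tau\in D_0$ arises from $\tau$ by inserting pumps whose total effect is automatically $\bzero$ because both trees are complete; insertability is built into the embedding, and the product $\App_Q(\tau)$ then yields the cover. To salvage your route you would have to show that a pump decomposition respecting the dependency order can always be chosen with the required bounds, which is the genuinely hard part; otherwise, replace that step with the WQO argument.
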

\Cref{unboundedness-reduction-correctness} will be shown in two lemmas:
\begin{lemma}\label{unboundedness-factors}
If $M$ is realizable and $Q$ is $M$-cancelable, then $L_{A,Q}\subseteq F(L(G))$ for every $A\in M$.
\end{lemma}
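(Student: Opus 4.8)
The plan is to show that every word appearing in a derivation $A \derivs[Q] uAv$ occurs as a factor of some word in $L(G)$, by completing such a partial derivation into a full derivation of $G$. The key observation is that, since $A$ belongs to a realizable set $M$, there is a complete derivation $\tau_0$ of $G$ (deriving some terminal word with total counter effect $\bzero$) in which the non-terminal $A$ occurs. Splitting $\tau_0$ at one occurrence of $A$, we obtain a context $x A y$ with $x, y \in \Sigma^*$ and $S \derivs xAy$, together with a subderivation $A \derivs w$ for some $w \in \Sigma^*$; the sum of all counter effects of these two parts is $\bzero$.

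Next I would take the given partial derivation $A \derivs[Q] uAv$ and "iterate" it. Since each production $p$ used in this derivation is $M$-cancelable, there are pumps $\tau_1^{(p)}, \dots, \tau_{k_p}^{(p)}$ using only non-terminals of $M$, whose total counter effect is $\bzero$ and one of which contains $p$. The strategy is to count, for each production $p \in Q$, how many times it is used in $A \derivs[Q] uAv$, and then build a large combined derivation: first do the derivation $A \derivs uAv$, then substitute into the resulting $A$ a big composition of all the pumps $\tau_j^{(p)}$ (repeated appropriately so that the multiset of pumps can be arranged into groups each summing to $\bzero$), and finally close off with the subderivation $A \derivs w$ from $\tau_0$, all placed inside the context $xAy$. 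Concretely, I would use that each pump $\tau_j^{(p)}$ can be applied to any occurrence of its root non-terminal $A_p \in M$, and that pumps can be nested inside one another since all root non-terminals lie in $M$ and $M$ is (by realizability) "connected" in a suitable sense — more carefully, I need to arrange that each pump's root non-terminal actually appears where I want to apply it. The cleanest way is: since $M$ is realizable via $\tau_0$, every non-terminal of $M$ (in particular every $A_p$) occurs in $\tau_0$, so I can graft the pumps into $\tau_0$ at those spots, after first grafting the derivation $A \derivs uAv$ at the chosen occurrence of $A$.

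Combining these pieces: start from $\tau_0$, at the distinguished occurrence of $A$ insert the partial derivation $A \derivs[Q] uAv$, then at the new occurrence of $A$ insert the full pump collection $\sum_p (\text{copies of } \tau_j^{(p)})$ so that their counter effects cancel, and then re-attach the original continuation $A \derivs w$. The total counter effect is $\bzero + \bzero + (\text{effect of }A\derivs uAv) - (\text{effect of }A\derivs uAv)$ — wait, that is not quite right, so instead: the pumps are chosen only to cancel \emph{each other}, giving effect $\bzero$ for that block, and the partial derivation $A \derivs[Q] uAv$ is \emph{not} cancelled by them. This means I actually need a slightly different accounting: rather than requiring $A \derivs uAv$ to have zero effect, I should observe that $L_{A,Q}$ only asks for $u,v$ to \emph{appear}, and the factor $F(L(G))$ only needs \emph{some} word of $L(G)$ containing $u$ (or $v$) as a factor. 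So it suffices that \emph{some} complete derivation of $G$ contains $u$ (resp. $v$) as a factor of its yield. To get a complete (zero-effect) derivation, I replace "insert $A \derivs uAv$ once" by "insert it some number $t$ of times", choosing $t$ together with the pump multiplicities so that $t \cdot \counters(A \derivs uAv) + \sum (\text{pump effects}) = \bzero$; this is always solvable because each $M$-cancelable production (hence the effect of the whole derivation $A\derivs[Q] uAv$, being a sum of $M$-cancelable production effects) lies in the subgroup of $\Z^d$ generated by pump effects, and I can scale. The yield of this complete derivation then contains $u$ (and $v$) as a factor, so $u, v \in F(L(G))$, giving $L_{A,Q} \subseteq F(L(G))$.

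The main obstacle I anticipate is the bookkeeping for the grafting: ensuring that all the auxiliary pumps and the iterated partial derivation can actually be assembled into \emph{one} syntactically valid derivation tree of $G$ with total counter effect exactly $\bzero$, and that the multiplicities $t$ and the pump repetition counts can be chosen consistently. The crucial algebraic fact making this work is that $M$-cancelability of a production $p$ means $\counters(p)$ lies in the subgroup generated by $\{\counters(\tau) : \tau \text{ a pump over } M\}$ together with the relations forced by the pumps summing to zero — but more directly, each individual $M$-cancelable production participates in a zero-sum pump family, and by taking disjoint unions and scaling these families one can cancel \emph{any} non-negative integer combination of $M$-cancelable production effects. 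I would isolate this as a small linear-algebra / commutativity lemma (every pump can be attached wherever its root symbol occurs in an already-built tree containing only $M$-symbols, and counter effects add) and then the rest is assembling the tree.
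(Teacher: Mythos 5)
Your overall strategy matches the paper's: graft the partial derivation $A\derivs[Q] uAv$ into a complete derivation $\tau_0$ witnessing realizability of $M$, and use the cancelation pumps to bring the total counter effect back to $\bzero$. However, the step where you argue that the required pump combination exists is exactly where the real content of the lemma lies, and your justification for it does not go through as stated. Membership of $\counters(\tau)$ in the \emph{subgroup} of $\Z^d$ generated by pump effects is not enough: subgroup membership permits negative coefficients, whereas pumps can only be \emph{inserted} into a tree with non-negative multiplicities, and scaling by $t$ does not repair signs. Your stronger claim --- that one can cancel \emph{any} non-negative integer combination of $M$-cancelable production effects by pumps --- is false in general: from the family $\tau_1^{(p)},\dots,\tau_{k_p}^{(p)}$ with total effect $\bzero$ containing $p$, removing one copy of $p$ yields a multiset of productions with effect $-\counters(p)$, but that multiset is no longer a flow and hence cannot be reassembled into pumps that graft into a derivation tree.

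What saves the day, and what your proposal is missing, is that $\Parikh(\tau)$ is itself a flow (because $\tau\colon A\derivs uAv$ is a pump), so one may subtract it \emph{wholesale}. The paper sets $\bmf_\tau=\bigl(\sum_{p\in Q}\Parikh(\tau)[p]\cdot\bmf_p\bigr)-\Parikh(\tau)$, where $\bmf_p$ is the Parikh image of the cancelation family for $p$; this is non-negative because $\bmf_p[p]\ge 1$, it is a flow because it is a $\Z$-linear combination of flows, its counter effect is exactly $-\counters(\tau)$, and it uses only productions over $M$. An Euler-type decomposition result (\cref{euler}) then splits $\bmf_\tau$ into genuine pumps $\tau'_1,\dots,\tau'_m$, which can be inserted into $\tau_0$ alongside a \emph{single} copy of $\tau$ (your multiplicity $t$ is unnecessary; $t=1$ suffices). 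You would need to supply both the explicit flow $\bmf_\tau$ (or an equivalent non-negativity argument) and the flow-to-pumps decomposition lemma to close the gap in your "small linear-algebra / commutativity lemma".
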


\begin{lemma}\label{unboundedness-decomposition}
$L(G)$ is included in a finite union of sets of the form $K_1 \cdot K_2 \cdots
K_m$, where each $K_i$ is either finite or a set $L_{A,Q}$, where $Q$ is
$M$-cancelable for some realizable $M\subseteq N$, and $A\in M$.
\end{lemma}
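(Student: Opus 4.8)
The plan is to take a complete derivation tree $\tau$ of $G$, decompose it along a root-to-leaf structure so that the ``spinal'' parts become pumps contributing to cancelable production sets, and read off a product decomposition of the yield. First I would observe that since $L(G)=\bigcup_M L_M$, where $L_M$ is the set of yields of complete derivation trees using exactly the non-terminal set $M$, and only realizable $M$ contribute nonempty languages, it suffices to fix one realizable $M$ and show that $L_M$ is contained in a finite union of products of the required form. Fix such an $M$ and a complete tree $\tau$ for a word $w=\mathsf{yield}(\tau)$, all of whose non-terminals lie in $M$.

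Next I would pass to a normal form for $\tau$ by factoring out all ``long'' repetitions on root-to-leaf paths. Concretely, pick a bound $B$ (depending only on $|N|$) and repeatedly remove pumps $A\derivs uAv$ that appear along paths, recording them, until the remaining skeleton tree $\tau_0$ has height at most $B$; this is the standard pumping-lemma surgery for context-free grammars, except we also track counter effects. The key point is that since $\tau$ is \emph{complete}, $\counters(\tau)=\bzero$, and $\counters(\tau)=\counters(\tau_0)+\sum_j \counters(\rho_j)$ where the $\rho_j$ are the removed pumps; grouping the pumps by the non-terminal $A$ at which they were inserted and by which production they use, one wants to argue that each removed pump belongs to a collection of pumps (using only $M$-non-terminals) whose total effect is $\bzero$, i.e.\ that the productions it uses are $M$-cancelable. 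Here I would use a Parikh/semilinearity argument: the multiset of pumps available at a given $A$ with $M$-non-terminals forms a set whose counter effects generate a semilinear cone, and since in $\tau$ the grand total is zero, one can (after possibly inserting finitely many further pumps, which is harmless as they only need to exist, not appear in $\tau$) exhibit for each used production a zero-sum witness collection — this is exactly the definition of $M$-cancelable. So the set $Q\subseteq P$ of all productions occurring in any removed pump is $M$-cancelable.

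Then I would read off the decomposition. The bounded-height skeleton $\tau_0$ has only finitely many possible shapes (finitely many for each realizable $M$, hence finitely many overall), so its yield ranges over a finite set; interleaved into this finite yield are the words contributed by the pumps $\rho_j$, and each maximal block of pumps inserted at a single spine position contributes a word of the form $u_1\cdots u_t v_t\cdots v_1$ with each $(u_s,v_s)$ coming from an $A\derivs[Q] uAv$ derivation — hence a word in $(L_{A,Q})^*\subseteq$ a finite concatenation of copies of $L_{A,Q}$, which is not directly a single $L_{A,Q}$ but is a product $L_{A,Q}\cdots L_{A,Q}$; I would therefore phrase the target union to allow such repeated factors, or equivalently note that we may bound the number of distinct spine positions by $B$ and the number of $L_{A,Q}$-factors needed at each by again invoking semilinearity so that iterating a single pump suffices. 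Putting it together, $w$ lies in a product $K_1\cdots K_m$ where each $K_i$ is either a fixed finite ``skeleton fragment'' language or one of the $L_{A,Q}$'s, with $m$ and the list of shapes drawn from a finite set independent of $w$. Taking the union over all realizable $M$, all skeleton shapes, and all the finitely many choices of $Q$ and $A\in M$ gives the claimed finite union.

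The main obstacle I expect is the counter-balancing step: ensuring that the productions appearing in the pumps that we factor out are genuinely $M$-cancelable, i.e.\ that the zero total effect $\counters(\tau)=\bzero$ of the \emph{whole} tree can be redistributed into zero-sum witnesses \emph{per production}. The naive issue is that in $\tau$ the pumps cancel only collectively and possibly in conjunction with the skeleton's nonzero effect, not production-by-production. The fix should be to exploit that cancelability only requires the \emph{existence} of balancing pumps, not their presence in $\tau$: from one complete tree with total effect zero one extracts, via a semilinear/linear-algebra argument over the effects of $M$-pumps, that every effect vector realizable by an $M$-pump lies in a subgroup (or pointed cone with its negation) generated by other $M$-pumps, which yields the required witnesses. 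Making this argument precise — and confirming it does not need non-terminals outside $M$ — is the delicate part.
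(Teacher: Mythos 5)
Your overall architecture (a finite set of skeleton trees, plus pumps whose productions are collected into a cancelable set $Q$ and whose yields are read off as $L_{A,Q}$-factors) matches the paper's, but the way you obtain the skeletons breaks the argument at exactly the point you yourself flag as delicate. If you bound the skeleton $\tau_0$ by the pigeonhole/height surgery of the classical pumping lemma, then $\counters(\tau_0)$ is in general nonzero, so the removed pumps satisfy only $\sum_j\counters(\rho_j)=-\counters(\tau_0)\ne\bzero$: they cancel jointly \emph{with the skeleton}, not among themselves. Your proposed repair --- that every effect vector of an $M$-pump lies in a subgroup (or symmetric cone) generated by $M$-pump effects, so that per-production zero-sum witnesses always exist --- is false. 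Take $d=1$ with productions $(S\to aSb,+1)$ and $(S\to c,-N)$ for large $N$: the unique complete derivation uses the pump $S\deriv aSb$ a total of $N$ times, yet every nontrivial $M$-pump has strictly positive effect, so $S\to aSb$ is \emph{not} $M$-cancelable even though it occurs, arbitrarily often, in complete derivations of unbounded height. Your surgery would remove these pumps and try to place their yields into some $L_{A,Q}$, which is impossible here (the only cancelable set is $Q=\emptyset$). This is precisely why the paper remarks that the ``unpumping argument based on the pigeonhole principle as in Parikh's theorem'' does not suffice for $\Z$-grammars.

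The paper's fix is to choose the finite base set differently: it defines a well-quasi-order $\pumpleq$ on labeled derivation trees (a tree-embedding order refined by ``same set of non-terminals'') and takes $D_0$ to be a finite basis, under $\pumpleq$, of the set of \emph{complete} derivations. Every complete $\tau'$ then lies above some complete $\tau\in D_0$, and $\tau'$ arises from $\tau$ by inserting pumps whose total effect is $\counters(\tau')-\counters(\tau)=\bzero$; these inserted pumps therefore directly witness $M$-cancelability of all their productions, with no redistribution argument needed. The base trees are no longer height-bounded; finiteness of $D_0$ comes from the WQO rather than from pigeonhole. Your secondary worry about a block $u_1\cdots u_tv_t\cdots v_1$ requiring $t$ factors is not a real problem: $u_1\cdots u_t$ and $v_t\cdots v_1$ are the two sides of the composed pump $A\derivs[Q]u_1\cdots u_t\,A\,v_t\cdots v_1$, so two $L_{A,Q}$-factors per node suffice, which is exactly the shape of the paper's overapproximation $\App_{Q}(\tau)=L_{A,Q}\cdot\App_{Q}(\tau_1)\cdots\App_{Q}(\tau_n)\cdot L_{A,Q}$. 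In short, the read-off step of your plan is sound once the base set is chosen via the WQO, but as written the cancelability step has a genuine counterexample and cannot be repaired by semilinearity of pump effects alone.
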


Let us see why \cref{unboundedness-reduction-correctness} follows from
\cref{unboundedness-factors,unboundedness-decomposition}.
\begin{proof}[\cref{unboundedness-reduction-correctness}]
We begin with the ``if'' direction. Thus, suppose $\pP(L_{A,Q})$ for $A$ and
$Q$ as described.  Then by \cref{unboundedness-factors} and the first and fourth axioms of
unboundedness predicates, this implies $\pP(L(G))$. 

For the ``only if'' direction, suppose $\pP(L(G))$. By the first axiom of unboundedness predicates, $\pP$ must hold for the finite union provided by \cref{unboundedness-decomposition}. By the second axiom, this implies that $\pP(K_1\cdots K_m)$ for a finite product
$K_1\cdots K_m$ as in \cref{unboundedness-decomposition}. Moreover, by
the third axiom, this implies that $\pP(K_i)$ for some $i\in\{1,\ldots,m\}$.
If $K_i$ is finite, then by assumption, $\pP(K_i)$ does not hold. Therefore, we
must have $\pP(K_i)$ for some $K_i=L_{A,Q}$, as required.
\end{proof}

\subsubsection{Flows.}
It remains to prove \cref{unboundedness-factors,unboundedness-decomposition}.
We begin with \cref{unboundedness-factors} and for this we need some more terminology.
Let $\Sigma$ be an alphabet. By $\Parikh\colon\Sigma^*\to\N^\Sigma$, we denote
the \emph{Parikh map}, which is defined as $\Parikh(w)(a)=|w|_a$ for
$w\in\Sigma^*$ and $a\in\Sigma$. In other words, $\Parikh(w)(a)$ is the number
of occurrences of $a$ in $w\in\Sigma^*$. If $\Gamma\subseteq\Sigma$ is a subset, then $\pi_\Gamma\colon\Sigma^*\to\Gamma^*$ is the  homomorphism with $\pi_\Gamma(a)=\varepsilon$ for $a\in\Sigma\setminus\Gamma$
and $\pi_\Gamma(a)=a$ for $a\in\Gamma$. We also call $\pi_\Gamma$ the \emph{projection to $\Gamma$}.

Suppose we have a $\Z$-grammar $G=(N,\Sigma,P,S)$ with non-terminals $N$ 
and productions $P$.  For a derivation tree $\tau$, we write $\Parikh(\tau)$ for the vector 
in $\N^P$ that counts how many times each production appears in $\tau$.
We introduce a map $\partial$, which counts how many non-terminals each production consumes and produces. Formally, $\partial\colon\N^P\to\Z^N$ is the monoid homomorphism that sends the production $p=A\to w$
to the vector $\partial(p)=-A+\Parikh(\pi_N(w))$. Here, $-A\in\Z^N$ denotes the vector with $-1$ at the position of $A$ and $0$ everywhere else.
A vector $\bu\in\N^P$ is a \emph{flow} if $\partial(\bu)=\bzero$.
Observe that a derivation tree $\tau$ is a pump if and only if $\Parikh(\tau)$ is a flow.
In this case, we also call the vector $\bu\in\N^P$ with $\bu=\Parikh(\tau)$ a \emph{pump}.

The following \lcnamecref{euler} will provide an easy way to construct
derivations. It is a well-known result by
Esparza~\cite[Theorem 3.1]{esparza1997petri}, and has since been exploited in
several results on context-free grammars. Our formulation is slightly
weaker than Esparza's. However,
it is enough for our purposes and admits a simple proof, which is
inspired by a proof of Kufleitner~\cite{KufleitnerEuler}.
\begin{lemma}\label{euler}
	Let $\bmf\in\N^P$. Then $\bmf$ is a flow if and only if it is a sum of pumps.
\end{lemma}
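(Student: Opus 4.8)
The plan is to prove the two directions separately; only ``flow $\Rightarrow$ sum of pumps'' needs work. For the converse, the excerpt already records that $\Parikh(\tau)$ is a flow whenever $\tau$ is a pump, and since $\partial$ is a monoid homomorphism the flows form a submonoid of $\N^P$; hence any finite sum of pumps is a flow. For the forward direction I would induct on $|\bmf|=\sum_{p\in P}\bmf[p]$: the case $\bmf=\bzero$ is the empty sum, and for $\bmf\neq\bzero$ it is enough to find one pump $\bu$ with $\bzero\neq\bu\le\bmf$ componentwise, because then $\bmf-\bu$ is again a flow ($\partial(\bmf-\bu)=\bzero$) of strictly smaller size, to which the induction hypothesis applies. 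So everything reduces to the claim that \emph{every nonzero flow dominates the Parikh vector of some pump.}

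For this claim I would first use the ``balancing'' reading of $\partial(\bmf)=\bzero$: coordinatewise it says that for each $B\in N$ the number $n_B$ of productions in $\bmf$ (with multiplicity) whose left-hand side is $B$ equals the number $m_B$ of occurrences of $B$ in right-hand sides of $\bmf$ (with multiplicity); in particular $n_B>0 \iff m_B>0$. Let $N'$ be the nonempty set of non-terminals occurring in positively counted productions; by the balancing identity $N'$ is exactly the set of such left-hand sides, and in the ``child graph'' on $N'$ (an edge $B\to D$ whenever some positively counted production $B\to w$ has $D$ occurring in $w$) every vertex has in-degree at least one. A finite digraph with all in-degrees $\ge 1$ contains a simple cycle $B_1\to\cdots\to B_\ell\to B_1$, and since the $B_s$ are distinct it is witnessed by pairwise distinct positively counted productions $q_s=(B_s\to w_s)$ with $B_{s+1}$ occurring in $w_s$ (indices taken cyclically).

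Then I would build a derivation tree greedily, rooted at $A_0:=B_1$. First walk the ``spine'': apply $q_1$ at the root, then $q_2$ at a chosen $B_2$-child, and so on up to $q_\ell$; among the children produced by $q_\ell$, pick one leaf labeled $B_1$ and \emph{freeze} it (never expand it). Then repeatedly take any non-frozen open non-terminal leaf and expand it with any production of matching left-hand side that is still unused relative to $\bmf$. The engine is the identity that for any partial derivation tree $\sigma$ rooted at $A_0$, the multiset of its non-terminal leaves equals $A_0+\partial(\Parikh(\sigma))$, where $\Parikh(\sigma)\in\N^P$ counts the productions used in $\sigma$ (immediate by induction on the number of expansions, using $\partial(B\to w)=-B+\Parikh(\pi_N(w))$). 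Writing $\bmf':=\bmf-\Parikh(\sigma)$ for the unused budget and using $\partial(\bmf)=\bzero$, this yields (a) if a leaf labeled $B\neq A_0$ is open then $\partial(\bmf')[B]\le -1$, so some $B$-production remains unused, and (b) if two or more open leaves are labeled $A_0$ then likewise some $A_0$-production remains unused. As the budget is finite the construction halts, and by (a) and (b) it can only halt when the open leaves are precisely the frozen $B_1$-leaf; hence the tree is a pump $B_1\derivs uB_1v$ with $u,v\in\Sigma^*$, its Parikh vector is $\le\bmf$ by construction and nonzero because it uses $q_1$. This proves the claim, and with it the lemma.

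The step I expect to be the real obstacle is making this construction airtight — ruling out that it gets stuck or terminates as a \emph{complete} derivation tree rather than a pump. The danger is real: rooting the tree at the wrong non-terminal, or expanding open leaves in the wrong order, can exhaust the budget while leaving some non-terminal leaf with no available production (equivalently, close the tree off without ever reproducing $A_0$). Starting from a cycle, following its spine first, and keeping the distinguished $B_1$-leaf frozen are precisely what let the invariant ``open leaves $= A_0+\partial(\Parikh(\sigma))$'' together with $n_B=m_B$ force the construction to stop at a pump; coordinating these three ingredients is where the care goes.
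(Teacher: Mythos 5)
Your proof is correct, but it takes a genuinely different route from the paper's. The paper proves the hard direction by an extremal argument: it writes $\bmf=\Parikh(\tau_1)+\cdots+\Parikh(\tau_n)$ as a sum of arbitrary derivation trees (initially one tree per production occurrence), takes such a decomposition with $n$ minimal, and shows that any $\tau_i$ that fails to be a pump has a dangling or missing non-terminal that forces a merge with some $\tau_j$ (grafting one tree onto a leaf of the other), contradicting minimality. You instead peel off one pump at a time: you extract a cycle in the dependency digraph of positively counted productions (using the in-degree argument from $\partial(\bmf)=\bzero$), grow a derivation tree along that cycle with one frozen $A_0$-leaf, and use the invariant ``open leaves $=A_0+\partial(\Parikh(\sigma))$'' to show the greedy completion can neither get stuck nor close off into a complete derivation, then subtract and induct on $|\bmf|$. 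I checked the delicate points — that freezing exactly one $A_0$-leaf makes claims (a) and (b) cover every non-frozen open leaf, and that the extracted Parikh vector stays componentwise below $\bmf$ — and they go through. The trade-off: the paper's merging argument is shorter and hides the combinatorics in the minimality of $n$, while yours is longer but constructive (it exhibits an explicit greedy procedure producing the pump decomposition) and makes visible exactly where the flow condition is used. Both are elementary and self-contained, so this is a matter of taste rather than a gap.
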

\begin{proof}
The ``if'' direction is trivial, because every pump is clearly a flow.
Conversely, suppose $\bmf\in\N^P$ is a flow. We can clearly write
$\bmf=\Parikh(\tau_1)+\cdots+\Parikh(\tau_n)$, where $\tau_1,\ldots,\tau_n$ are derivation trees:
We can just view each production in $\bmf$ as its own derivation tree. Now
suppose that we have $\bmf=\Parikh(\tau_1)+\cdots+\Parikh(\tau_n)$ so that $n$ is
minimal. We claim that then, each $\tau_i$ is a pump, proving the
\lcnamecref{euler}.

Suppose not, then without loss of generality, $\tau_1$ is not a pump. Since $\tau_1$
is a derivation, this means $\Parikh(\tau_1)$ cannot be a flow and thus there must
be a non-terminal $A$ with $\partial(\tau_1)(A)\ne 0$. 

Let us first assume that $\partial(\tau_1)(A)>0$.  This means there is a
non-terminal $A$ occurring at a leaf of $\tau_1$ such that $A$ is not the start
symbol of $\tau_1$. Since $\bmf=\Parikh(\tau_1)+\cdots+\Parikh(\tau_n)$ is a flow, we
must have $\partial(\Parikh(\tau_2)+\cdots+\Parikh(\tau_n))(A)<0$. This, in turn, is
only possible if some $\tau_j$ has $A$ as its start symbol. We can therefore merge
$\tau_1$ and $\tau_j$ by replacing $\tau_1$'s $A$-labelled leaf by the new subtree
$\tau_j$. We obtain a new collection of $n-1$ trees whose Parikh image is $\bmf$,
in contradiction to the choice of $n$.  If $\partial(\tau_1)(A)<0$, then there
must be a $\tau_j$ with $\partial(\tau_j)(A)>0$ and thus we can insert $\tau_1$
below $\tau_j$, reaching a similar contradiction.
\end{proof}

\subsubsection{Constructing derivations.}
Using flows, we can now prove \cref{unboundedness-factors}. 
\begin{proof}
Suppose there is a derivation $\tau\colon A\derivs[Q] uAv$ with $A\in M$ and $u,v\in\Sigma^*$. We have to show
that both $u$ and $v$ occur in some word $w\in L(G)$.  Furthermore, if $G$ is
in Chomsky normal form, we can choose $w$ such that $|w|$ is linear in $|u|$ and
$|v|$.  Our goal is to construct a derivation of $G$ in which we find $u$ and
$v$ as factors.  We could obtain a derivation tree by inserting $\tau$ into
some derivation tree for $G$ (at some occurrence of $A$), but this might yield
non-zero counter values. Therefore, we will use the fact that $Q$ is
$M$-cancelable to find other pumps that can be inserted as well in order to
bring the counter back to zero.

Since $M\subseteq N$ is realizable, there exists a complete derivation $\tau_0$
that derives some word $w_0\in L(G)$ and uses precisely the non-terminals in
$M$. Since $Q\subseteq P$ is $M$-cancelable, we
know that for each production $p\in Q$, there exist pumps $\tau_1,\ldots,\tau_k$
such that (i)~$p$ occurs in some $\tau_i$,
(ii)~$\counters(\tau_1)+\cdots+\counters(\tau_k)=\bzero$ and (iii)~all
productions in $\tau_1,\ldots,\tau_k$ only use non-terminals in $M$. This
allows us to define $\bmf_p:=\Parikh(\tau_1)+\cdots+\Parikh(\tau_k)$. Observe that 
$\vec{f}_p$ contains only productions with non-terminals from $M$, we have $\vec{f}_p[p] > 0$, and $\counters(\vec{f}_p) = 0$.
We can use the flows $\bmf_p$ to find the desired canceling pumps.
Since by \cref{euler}, every flow can be decomposed
into a sum of pumps, it suffices to construct a particular flow. Specifically,
we look for a flow $\bmf_\tau \in \N^P$ such that:
	\begin{enumerate}
		\item any production $p$ with $\bmf_\tau[p] > 0$ uses only non-terminals from $M$, and
		\item $\counters(\bmf_\tau + \Parikh(\tau)) = 0$.
	\end{enumerate}
The first condition ensures that all the resulting pumps can be inserted into $\tau_0$. The second condition ensures that the resulting total counter values will be zero. We claim that with
	\begin{equation}
		\label{eq:ftau}
		\bmf_\tau = \left(\sum_{p \in Q} \Parikh(\tau)[p] \cdot \vec{f}_p\right) - \Parikh(\tau),
	\end{equation}
we achieve these conditions.
First, observe that $\bmf_\tau\in\N^P$: We have
	\[ \begin{aligned}
		\bmf_\tau[q] ~~\geq~~ \Parikh(\tau)[q] \cdot \vec{f}_q[q] - \Parikh(\tau)[q] 
		~~=~~\Parikh(\tau)[q] \cdot (\vec{f}_q[q] - 1)
	\end{aligned} \]
	which is at least zero as $\vec{f}_q[q]$ must be non-zero by definition.
	Second, note that $\bmf_\tau$ is indeed a flow, because it is a $\Z$-linear combination of flows.
Moreover, all productions appearing in $\bmf_\tau$ also appear in $\bmf_p$ for some $p\in Q$ or in $\tau$, meaning that all non-terminals must belong to $M$. Finally, the total counter effect of $\bmf_\tau + \Parikh(\tau)$ is zero as $\bmf_\tau + \Parikh(\tau) = \sum_{p \in Q} \Parikh(\tau)[p] \cdot \vec{f}_p$ is a sum of flows each with total counter effect zero.

	Now, since $\bmf_\tau$ is a flow, \cref{euler} tells us that there are pumps $\tau'_1,\ldots,\tau'_m$ such that $\bmf_\tau=\Parikh(\tau'_1)+\cdots+\Parikh(\tau'_m)$. Therefore, inserting $\tau$ and $\tau'_1,\ldots,\tau'_m$ into $\tau_0$ must yield a derivation of a word that has both $u$ and $v$ as factors
and also has counter value 
\[ \underbrace{\counters(\tau_0)}_{=\bzero}+\underbrace{\counters(\tau)+\counters(\tau'_1)+\cdots\counters(\tau'_m)}_{=\counters(\tau)+\counters(\bmf_\tau)=\bzero}=\bzero.\]
Thus, we have a complete derivation of $G$.
Hence $L_{A,Q}\subseteq F(L(G))$.
\end{proof}

\subsubsection{Decomposition into finite union.}
It remains to prove \cref{unboundedness-decomposition}. For the decomposition,
we show that there exists a finite set $D_0$ of complete derivations such that
all complete derivations of $G$ can be obtained from some derivation in $D_0$
and then inserting pumps that produce words in $L_{A,Q}$, for some appropriate
$A$ and $Q$. Here, it is key that the set $D_0$ of ``base derivations'' is finite.
Showing this for context-free grammars would just require a simple
``unpumping'' argument based on the pigeonhole principle as in Parikh's
theorem~\cite{parikh1966context}.  However, in the case of $\Z$-grammars, where
$D_0$ should only contain derivations that have counter value zero, this is
not obvious. To achieve this, we employ a well-quasi ordering on (labeled)
trees.  Recall that a \emph{quasi ordering} is a reflexive and transitive
ordering.  For a quasi ordering $(X,\le)$ and a subset $Y\subseteq X$, we write
$\upcl{Y}$ for the set $\{x\in X\mid \exists y\in Y\colon y\le x\}$.  We say
that $(X,\le)$ is a \emph{well-quasi ordering} (WQO) if every non-empty subset
$Y\subseteq X$ has a finite subset $Y_0\subseteq Y$ such that $Y\subseteq
\upcl{Y_0}$.

We define an ordering on all trees in $\cT_{N\cup \Sigma,d}$. A tree \emph{s} is a \emph{subtree}
of $t$ if there exists a node $x$ in $t$ such that $s$ consists of all nodes of
$t$ that are descendants of $x$. If $\tau_1,\ldots,\tau_n$ are trees, then we denote
by $r[\tau_1,\ldots,\tau_n]$ the tree with a root node $r$ and the subtrees $\tau_1,\ldots,\tau_n$ directly under the root.
Now let $\tau=(A,\bu)[\tau_1,\ldots,\tau_n]$ and $\tau'=(B,\bv)[\sigma_1,\ldots,\sigma_m]$ be trees in $\cT_{N\cup \Sigma,d}$. We define the ordering $\preceq$ as follows. If $n=0$ (i.e.\ $\tau$ consists of only one node), then we have $\tau\preceq\tau'$ if and only if $A=B$ and $m=0$. If $n\ge 1$, then we define inductively: 
\[\begin{aligned}
	\tau \preceq \tau' ~~~\iff~~~ A = B~\text{ and }\exists\text{ subtree}~&\tau'' = (A,\bu')[\tau'_1, \ldots, \tau'_n]~\text{of}~\tau'\\ 
&                           \text{with}~\tau_i \preceq \tau'_i~\text{for $i=1,\ldots,n$}
\end{aligned}\]
Based on $\preceq$, we define as slight refinement: We write
$\tau\pumpleq\tau'$ if and only if $\tau\preceq\tau'$ and the set of
non-terminals appearing in $\tau$ is the same as in $\tau'$.
\begin{lemma}\label{wqo}
	$(\cT_{N\cup\Sigma,d},\pumpleq)$ is a WQO.
\end{lemma}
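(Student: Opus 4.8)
The plan is to reduce the statement, in three steps, to an instance of Kruskal's tree theorem. First, observe that neither $\preceq$ nor $\pumpleq$ refers to the $\Z^d$-labels of nodes; both are defined purely from the $\Gamma$-labels, where $\Gamma:=N\cup\Sigma$ is finite. So, writing $\cT_\Gamma$ for the set of finite ordered $\Gamma$-labeled trees and letting $\cT_{N\cup\Sigma,d}\to\cT_\Gamma$ be the map forgetting $\Z^d$-labels, it is enough to show $(\cT_\Gamma,\pumpleq)$ is a WQO, since a quasi-order pulled back along a surjection keeps the WQO property. Next, on $\cT_\Gamma$ the relation $\pumpleq$ is exactly $\preceq$ intersected with the equivalence ``has the same set of occurring non-terminals'', of which there are at most $2^{|N|}$ many; and refining a WQO by a finite coloring again yields a WQO (given an infinite sequence, pass to an infinite monochromatic subsequence and find a good pair there). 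Hence it suffices to prove $(\cT_\Gamma,\preceq)$ is a WQO.

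Now $\preceq$ is (a root-preserving, ``immediate'' variant of) the ordered homeomorphic tree embedding over the finite, trivially ordered label set $(\Gamma,=)$, so $(\cT_\Gamma,\preceq)$ being a WQO is an instance of Kruskal's tree theorem, and I would prove it directly with the Nash--Williams minimal bad sequence argument. Since trees with distinct root labels are $\preceq$-incomparable, $\cT_\Gamma=\bigsqcup_{A\in\Gamma}\cT_\Gamma^A$ (partition by root label) and a finite disjoint union of WQOs is a WQO, so I would fix $A$ and treat $\cT_\Gamma^A$. Assuming it is not a WQO, choose an infinite bad sequence $(\tau^{(i)})_{i\ge 1}$ that is minimal, meaning each $\tau^{(i)}$ has the fewest nodes possible given $\tau^{(1)},\dots,\tau^{(i-1)}$. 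Let $R$ be the set of all proper subtrees occurring in the $\tau^{(i)}$. One shows $(R,\preceq)$ is a WQO: from a bad sequence in $R$, prepend the shortest possible prefix of $(\tau^{(i)})$; using that a proper subtree sharing its root label with its host embeds into that host (the situation that actually arises, since a good cross-pair forces the root labels to coincide), one checks the result is again a bad sequence contradicting minimality. Finally, each $\tau^{(i)}$ is determined by its root label $A$ together with the left-to-right sequence of its root's children-subtrees, all lying in $R$; by Higman's lemma the sequences over the WQO $R$ form a WQO under subsequence-domination, so some $i<j$ have the children-sequence of $\tau^{(i)}$ dominated by that of $\tau^{(j)}$ — which is precisely the recursive clause witnessing $\tau^{(i)}\preceq\tau^{(j)}$, contradicting badness.

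The hard part will be the bookkeeping inside the minimal bad sequence step: verifying that splicing a bad $R$-sequence into $(\tau^{(i)})$ is again bad — this hinges on the embedding of a proper subtree into its host and on the root-label constraint, and is the reason for partitioning by root label first — and checking that the subsequence-domination order on children-lists supplied by Higman's lemma coincides exactly with the recursive condition defining $\preceq$. Everything else, namely the three initial reductions and the finite-disjoint-union step, is routine.
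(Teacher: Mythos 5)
Your three preliminary reductions (forgetting the $\Z^d$-labels, refining by the finite colouring recording the set of occurring non-terminals, and partitioning by root label) are all sound; the second of them is exactly how the paper itself passes from $\preceq$ to $\pumpleq$, while for the core fact that $\preceq$ is a WQO the paper just cites an external lemma, so your attempt to prove it from scratch is where the real content lies. Your minimal-bad-sequence bookkeeping is also fine: a proper subtree whose root label agrees with that of its host is indeed $\preceq$ the host (take $\tau''$ to be that subtree itself and use reflexivity of $\preceq$), so the splicing step closes.

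The gap is in the final step. The relation $\preceq$ is \emph{not} the homeomorphic embedding of Kruskal's theorem: its recursive clause demands a subtree $\tau''$ of $\tau'$ whose root carries the same label $A$ and has \emph{exactly} $n$ children, matched positionally one-to-one with the $n$ children of $\tau$. Higman's lemma only gives you $i<j$ such that the root-children sequence of $\tau^{(i)}$ is dominated by a \emph{subsequence} of that of $\tau^{(j)}$, and this does not witness $\tau^{(i)}\preceq\tau^{(j)}$. Concretely, if $t_n$ is a root labelled $A$ with $n$ leaf children labelled $a$, then $(a)$ is a dominated subsequence of $(a,a)$ but $t_1\not\preceq t_2$; in fact $t_1,t_2,t_3,\dots$ is an infinite antichain for $\preceq$ and for $\pumpleq$, so the statement you are proving is actually false for trees of unbounded branching and no Kruskal-style argument can succeed at that level of generality. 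What saves the lemma is that only derivation trees of the fixed grammar $G$ matter, so the branching is bounded by the longest right-hand side in $P$. Under that restriction the repair is to drop Higman: pass to an infinite subsequence of the minimal bad sequence in which the root arity is a constant $n$, and use that $R^n$ with the componentwise order is a WQO (finite product of WQOs) to obtain $i<j$ with $\tau^{(i)}_\ell\preceq\tau^{(j)}_\ell$ for all $\ell$; taking $\tau''=\tau^{(j)}$ itself then instantiates the defining clause and contradicts badness.
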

\begin{proof}
	In \cite[Lemma 3.3]{DBLP:journals/corr/abs-1904-04090}, it was shown that $\preceq$ is a WQO. Then $\pumpleq$
is the product of equality on a finite set, which is a WQO, and
the WQO $\preceq$.
\end{proof}

\cref{wqo} allows us to decompose $L(G)$ into a finite union: 
For each complete derivation $\tau$ of $G$, we define
\[ L_\tau(G) =\{w\in\Sigma^* \mid \exists~\text{complete derivation $\tau'$ with $\tau\pumpleq\tau'$ and $\yield(\tau')=w$}\}. \]

\begin{lemma}\label{unboundedness-decomposition-union}
There exists a finite set $D_0\subseteq\cT_{N\cup\Sigma,d}$ of complete derivations of $G$ such that
$L(G)=\bigcup_{\tau\in D_0} L_\tau(G)$.
\end{lemma}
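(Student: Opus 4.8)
The plan is to obtain $D_0$ directly as a finite basis of the set of all complete derivations under the well-quasi-order $\pumpleq$. I would set $D\subseteq\cT_{N\cup\Sigma,d}$ to be the set of all complete derivations of $G$, so that by definition $L(G)=\{\yield(\tau)\mid \tau\in D\}$. If $D=\emptyset$ one takes $D_0=\emptyset$, and both sides of the claimed equality are empty; so assume $D\ne\emptyset$. By \cref{wqo}, $(\cT_{N\cup\Sigma,d},\pumpleq)$ is a WQO, so the non-empty set $D$ has a finite subset $D_0\subseteq D$ with $D\subseteq\upcl{D_0}$; that is, for every complete derivation $\tau'$ there is some $\tau\in D_0$ with $\tau\pumpleq\tau'$. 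Since $D_0\subseteq D$, it is automatically a set of complete derivations of $G$, as required by the statement.

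It then remains to check $L(G)=\bigcup_{\tau\in D_0} L_\tau(G)$. For the inclusion ``$\subseteq$'': given $w\in L(G)$, pick a complete derivation $\tau'$ with $\yield(\tau')=w$, and then $\tau\in D_0$ with $\tau\pumpleq\tau'$; the derivation $\tau'$ itself witnesses $w\in L_\tau(G)$ by the very definition of $L_\tau(G)$. For ``$\supseteq$'': any $w\in L_\tau(G)$ is, by definition, the yield of some complete derivation of $G$, hence $w\in L(G)$.

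I do not expect a real obstacle here: all of the substance sits in \cref{wqo}, and once that WQO is available the decomposition is immediate and purely definitional. The one point to be careful about is that the lemma asks for $D_0$ to consist of \emph{complete} derivations, which is guaranteed precisely because the finite basis is extracted from within $D$ rather than from all of $\cT_{N\cup\Sigma,d}$; note also that no monotonicity of $\tau\mapsto L_\tau(G)$ is needed for this lemma (that kind of reasoning, relating $L_\tau(G)$ to the sets $L_{A,Q}$, is deferred to the remaining part of the proof of \cref{unboundedness-decomposition}).
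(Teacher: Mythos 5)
Your proof is correct and follows exactly the paper's argument: extract a finite basis $D_0$ of the set $D$ of all complete derivations using the WQO property of $\pumpleq$ from \cref{wqo}, and then the language equality is immediate from the definitions. You merely spell out the two inclusions and the empty case, which the paper leaves implicit.
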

\begin{proof}
Since $(\cT_{N\cup\Sigma,d},\pumpleq)$ is a WQO, the set $D\subseteq\cT_{N\cup T,d}$ of all complete derivations of $G$
has a finite subset $D_0$ with $D\subseteq\upcl{D_0}$. This implies
the \lcnamecref{unboundedness-decomposition-union}.
\end{proof}

\subsubsection{Decomposition into finite product.}
In light of \cref{unboundedness-decomposition-union}, it remains to be shown
that for each tree $\tau$, we can find a product $K_1 \cdot K_2 \cdots K_m$ of languages
such that $L_\tau(G)\subseteq K_1 \cdot K_2 \cdots K_m$ and each $K_i$ is either finite or
is of the form $L_{A,Q}$. We construct the overapproximation of $L_\tau(G)$
inductively as follows. Let $M\subseteq N$ and $Q\subseteq P$ be subsets of the non-terminals and
the productions, respectively. If $\tau$ has one node, labeled by $a\in\Sigma$,
then we set $\App_{Q}(\tau):=\{a\}$. Moreover, if $\tau=(A,\bu)[\tau_1,\ldots,\tau_n]$ for $A\in N$ and trees $\tau_1,\ldots,\tau_n$, then we set
\[ \App_{Q}(\tau):=L_{A,Q} \cdot \App_{Q}(\tau_1) \cdot \App_{Q}(\tau_2) \cdots\App_Q(\tau_n)\cdot L_{A,Q}. \]
Finally, we set $\App(\tau):=\App_Q(\tau)$, where $Q\subseteq P$ is the set of
all $M$-cancelable productions, where $M$ is the set of all non-terminals
appearing in $\tau$.  Now clearly, each $\App(\tau)$ is a finite product
$K_1 \cdot K_2 \cdots K_m$ as desired: This follows by induction on the size of $\tau$.
Thus, to prove \cref{unboundedness-decomposition}, the following suffices:
\begin{lemma}
For every complete derivation tree $\tau$ of $G$, we have $L_\tau(G)\subseteq \App(\tau)$.
\end{lemma}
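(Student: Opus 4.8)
The plan is to prove the containment $L_\tau(G)\subseteq\App(\tau)$ by induction on the structure of $\tau$, mirroring the recursive definition of $\App_Q$. Fix a complete derivation tree $\tau$, let $M$ be the set of non-terminals appearing in $\tau$, and let $Q$ be the set of all $M$-cancelable productions, so that $\App(\tau)=\App_Q(\tau)$. Take any $w\in L_\tau(G)$; by definition there is a complete derivation $\tau'$ with $\tau\pumpleq\tau'$ and $\yield(\tau')=w$. The key point is that $\tau\pumpleq\tau'$ means two things: first, $\tau\preceq\tau'$, which by the inductive definition of $\preceq$ says the shape of $\tau$ embeds into $\tau'$ in a ``root-preserving, children-matching'' way; and second, $\tau$ and $\tau'$ use exactly the same set of non-terminals, namely $M$. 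I would first record the base case: if $\tau$ is a single node labeled $a\in\Sigma$, then $\tau\preceq\tau'$ forces $\tau'$ to also be a single node labeled $a$ (by the $n=0$ clause of $\preceq$), so $w=a\in\{a\}=\App_Q(\tau)$.

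For the inductive step, write $\tau=(A,\bu)[\tau_1,\ldots,\tau_n]$ with $n\ge 1$. Unfolding $\tau\preceq\tau'$, there is a subtree $\tau''=(A,\bu')[\tau'_1,\ldots,\tau'_n]$ of $\tau'$ with $\tau_i\preceq\tau'_i$ for each $i$. The subtree $\tau''$ sits at some node $x$ of $\tau'$ carrying non-terminal $A$, so $\tau'$ factors as: a derivation from $S$ reaching $A$ at $x$ (reading some prefix $p$ and some suffix $q$ around the eventual $A$-subtree position), then the derivation $\tau''$ from $A$ reading the middle word. Crucially $A\in M$ since $A$ appears in $\tau$. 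Now I want to split the derivation $\tau''$ further: its top production at the root applies $A\to$ (something), and each $\tau'_i$ is a subtree rooted at a symbol matching the root of $\tau_i$. Reading $\yield(\tau'')$ left to right, it decomposes as $x_0\,\yield(\tau'_1)\,x_1\,\yield(\tau'_2)\cdots x_{n-1}\,\yield(\tau'_n)\,x_n$ where the $x_j$ are the terminal words contributed between and around the matched children. The definition of $\App_Q$ inserts $L_{A,Q}$ both before $\App_Q(\tau_1)$ and after $\App_Q(\tau_n)$; the plan is to show $p\cdot x_0$ and $x_n\cdot q$ — actually more carefully, $p$ together with the ``outer left context inside $\tau''$'' and symmetrically on the right — lie in $L_{A,Q}$, and the intermediate pieces $x_j$ (for $1\le j\le n-1$) together with the between-children structure also fall into copies of $L_{A,Q}$. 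The cleanest way is: the derivation $A\derivs pq$-context wrapped around $A\derivs x_0 A' x_1 A'' \cdots$ shows the surrounding terminal material arises from derivations $A\derivs u A v$ using only productions whose non-terminals lie in $M$ — and every such production is $M$-cancelable, hence lies in $Q$, because $\tau'$ is complete and one can recombine it (via \cref{euler}-style flows, exactly as in the proof of \cref{unboundedness-factors}) to witness cancelability. So each surrounding word lies in $L_{A,Q}$. Then by the induction hypothesis $\yield(\tau'_i)\in L_{\tau_i}(G)\subseteq\App_Q(\tau_i)$ — here I must check $\tau_i\pumpleq\tau'_i$, not just $\tau_i\preceq\tau'_i$, i.e.\ that $\tau'_i$ uses the same non-terminals as $\tau_i$; this follows because $\tau'$ uses only $M$ and a careful choice of the embedding (picking $\tau''$ minimally, or using that $\pumpleq$ rather than $\preceq$ governs $L_\tau$) pins down the non-terminal sets. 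Assembling, $w=p\,\yield(\tau'')\,q\in L_{A,Q}\cdot\App_Q(\tau_1)\cdots\App_Q(\tau_n)\cdot L_{A,Q}=\App_Q(\tau)$.

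The main obstacle I anticipate is bookkeeping the relationship between the $\preceq$-embedding and the word factorization: $\preceq$ only guarantees that the children $\tau_i$ embed into \emph{some} children $\tau'_i$ of \emph{some} subtree $\tau''$, but $\tau''$ may be a strict subtree of $\tau'$ and $\tau'_i$ need not be consecutive children of $\tau''$'s root, and there may be additional children/intermediate derivation steps interleaved. Handling the ``extra'' material correctly — showing every terminal letter of $w$ is accounted for by exactly one of the factors $L_{A,Q}$ or $\App_Q(\tau_i)$, and that none of the in-between derivation fragments escapes the set $M$ — is where the argument needs care. A secondary subtlety is ensuring the non-terminal-set equality required by $\pumpleq$ propagates to the sub-embeddings $\tau_i\pumpleq\tau'_i$; since $\pumpleq$ demands the \emph{same} non-terminal set and the ambient tree $\tau'$ only uses $M$, one argues that the non-terminals of $\tau_i$ and of $\tau'_i$ both lie in $M$ and coincide by transporting the equality through the embedding (or by noting $L_\tau(G)$ could equivalently be defined so the recursion is on $\pumpleq$ directly). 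Once these structural points are nailed down, the ``$M$-cancelable'' claim for the surrounding productions is immediate from $M$ being realizable (witnessed by $\tau'$ itself, which is complete and uses exactly $M$) together with the flow decomposition already developed for \cref{unboundedness-factors}.
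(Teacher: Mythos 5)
Your overall decomposition strategy (match the $\preceq$-embedding of $\tau$ into $\tau'$ against the recursive definition of $\App_Q$, so that the inserted material lands in the $L_{A,Q}$ factors and the matched subtrees land in the $\App_Q(\tau_i)$ factors) is the right one and is essentially what the paper does. But the step you flag as "immediate" is exactly where your argument has a genuine gap: you justify that the surrounding productions are $M$-cancelable "because $\tau'$ is complete" and "from $M$ being realizable \ldots together with the flow decomposition." Neither of these suffices. Realizability of $M$ says nothing about cancelability, and a production occurring in a pump inside a complete derivation need \emph{not} be $M$-cancelable: take a grammar with productions $(S\to aS,+1)$ and $(S\to a,-1)$; the derivation $S\Rightarrow aS\Rightarrow aa$ is complete and contains the pump $S\derivs aS$, yet every collection of pumps containing $S\to aS$ has strictly positive total effect, so $S\to aS$ is not cancelable. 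Also, $\Parikh(\tau')$ is not a flow (it has $\partial$-value $-S$), so \cref{euler} cannot be applied to "recombine" $\tau'$ into pumps. The missing idea is the comparison of $\tau$ with $\tau'$: since $\tau\preceq\tau'$, the tree $\tau'$ is obtained from $\tau$ by inserting pumps $\sigma_1,\ldots,\sigma_n$ at nodes of $\tau$, and since \emph{both} $\tau$ and $\tau'$ are complete, $\counters(\sigma_1)+\cdots+\counters(\sigma_n)=\counters(\tau')-\counters(\tau)=\bzero$. This collection of pumps, taken together, is itself the witness required by the definition of $M$-cancelability for every production occurring in any $\sigma_i$; hence each $\sigma_i$ is a derivation $A_i\derivs[Q]u_iA_iv_i$ and $u_i,v_i\in L_{A_i,Q}$.

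Note that this argument is irreducibly global: no single inserted pump need have effect zero, only their sum over the whole tree does. This also breaks your proposed structural induction as stated, since the subtrees $\tau_i$, $\tau'_i$ are not complete derivations ($L_{\tau_i}(G)$ is undefined for them, and they have no reason to have effect $\bzero$), so the induction hypothesis cannot be invoked in the form you give. The fix is to establish cancelability of all inserted pumps once, globally, by the effect-difference argument above, and only then read off the factorization of $\yield(\tau')$ along the node structure of $\tau$ (which can be done by a straightforward induction, since at that point every inserted piece is already known to contribute a word of $L_{A,Q}$).
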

\begin{proof}
Suppose $w\in L_\tau(G)$ is derived using a complete derivation tree $\tau'$
with $\tau\pumpleq\tau'$.  Then, the set of non-terminals appearing in
$\tau$ must be the same as in $\tau'$; we denote it by $M$. Let $Q\subseteq P$
be the set of all $M$-cancelable productions.  Moreover, since
$\tau\preceq\tau'$, we can observe that there exist pumps
$\tau_1,\ldots,\tau_n$ with root non-terminals $A_1,\ldots,A_n$ and nodes
$x_1,\ldots,x_n$ in $\tau$ such that $\tau'$ can be obtained from $\tau$ by
replacing each node $x_i$ by the pump $\tau_i$.

Since both $\tau$ and $\tau'$ are complete derivations of $G$, each must have
counter effect $\bzero$.  Thus,
$\counters(\tau_1)+\cdots+\counters(\tau_n)=\counters(\tau')-\counters(\tau)=\bzero$.
Hence, the pumps $\tau_1,\ldots,\tau_n$ witness that the productions
appearing in $\tau_1,\ldots,\tau_n$ are $M$-cancelable. Thus, the derivation
corresponding to $\tau_i$ uses only productions in $Q$ and thus $\tau_i$ corresponds to
$A_i\derivs[Q]u_iAv_i$ for some $u_i,v_i$ and we have $u_i,v_i\in L_{A,Q}$.
\end{proof}

\section{Growth}
\label{sec:growth}
In this section, we prove \cref{main-growth}.  Since clearly, a bounded
language has polynomial growth, it remains to be shown that if $L$ is accepted
by a PRBCA and $L$ is not bounded, then it has exponential growth.  
For two languages $L_1,L_2\subseteq\Sigma^*$, we write $L_1\linembed L_2$ if
there exists a constant $c\in\N$ such that for every word $w_1 \in L_1$, there
exists $w_2 \in L_2$ with $|w_2| \le c\cdot |w_1|$ and $w_1$ is a factor of
$w_2$.  It is not difficult to observe that for two languages
$L_1,L_2\subseteq\Sigma^*$, if $L_1\linembed L_2$  and $L_1$ has exponential
growth, then so does $L_2$. 

In order to show \cref{main-growth}, we need an adapted version of
\cref{unboundedness-factors}.  A $\Z$-grammar is in \emph{Chomsky normal form}
if all productions are of the form $(A\to BC,\bv)$ or $(A\to a,\bv)$ with
$A,B,C\in N$, $a\in\Sigma$, and $\bu,\bv\in\Z^k$. In other words, the
context-free grammar obtained by forgetting all counter vectors is in Chomsky
normal form. Fernau and
Stiebe~\cite[Proposition 5.12]{DBLP:journals/tcs/FernauS02} have shown that every
$\Z$-grammar has an equivalent $\Z$-grammar in Chomsky normal form.
\begin{lemma}\label{unboundedness-factors-stronger}
If $G=(N,\Sigma,P,S)$ is a $\Z$-grammar in Chomsky normal form, 
$M\subseteq N$ is realizable, $Q\subseteq P$ is $M$-cancelable, and $A\in M$,
then $L_{A,Q}\linembed L(G)$.
\end{lemma}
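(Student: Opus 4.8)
The plan is to reuse the construction from the proof of \cref{unboundedness-factors}, which already gives — when $G$ is in Chomsky normal form — a complete derivation of $G$ whose yield contains both $u$ and $v$ as factors and has length $\bigO(|u|+|v|+1)$. The only extra ingredient needed is a preprocessing step on the witnessing derivation. So suppose $w_1\in L_{A,Q}$, say $w_1=u$ because of some derivation $\tau\colon A\derivs[Q] uAv$ (the case $w_1=v$ will be symmetric). I would first replace $\tau$ by a derivation $\tau'\colon A\derivs[Q] uAv'$ that \emph{keeps the word $u$ unchanged} but satisfies $|v'|=\bigO(|u|)$, with the constant depending only on $G$; applying the construction of \cref{unboundedness-factors} to $\tau'$ then yields a word $w_2\in L(G)$ having $u$ as a factor with $|w_2|=\bigO(|u|+|v'|+1)=\bigO(|u|)$, which is exactly $L_{A,Q}\linembed L(G)$.

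To build $\tau'$, I would argue along the \emph{spine} of $\tau$, the path from its root (labeled $A$) to the distinguished $A$-labeled leaf. Since $G$ is in Chomsky normal form, each spine node applies a production $X\to YZ$ with exactly one of $Y,Z$ on the spine and the other the root of a full subtree hanging off to the left or to the right; reading the left-hanging subtrees in order gives $u$, and the right-hanging ones give $v$. Two reductions, neither of which touches $u$: \emph{(i)}~whenever two spine nodes carry the same non-terminal and every subtree hanging off the spine between them is a right subtree, the enclosed spine segment is a pump $Y\derivs[Q]\varepsilon Yv''$ and can be deleted. As all spine non-terminals lie in $M$, once this is no longer possible, every maximal run of consecutive ``right-only'' spine nodes has length at most $|M|$; and since no production of a Chomsky-normal-form grammar produces $\varepsilon$, each left subtree contributes at least one letter to $u$, so there are at most $|u|$ left subtrees, hence at most $|u|+1$ such runs, hence at most $(|u|+1)\cdot|M|$ right subtrees. \emph{(ii)}~Each surviving right subtree is a full derivation of a terminal word using only productions from $Q$; pump it down (deleting internal pumps of nonempty yield) until it has bounded height (at most $|N|+1$), so that it derives a word of length at most $2^{|N|}$. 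Both reductions only delete subtrees, so $\tau'$ still uses only productions from $Q$ and still derives $uAv'$ with the same $u$, and $|v'|\le(|u|+1)\cdot|M|\cdot 2^{|N|}=\bigO(|u|)$.

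It then remains to apply the argument of \cref{unboundedness-factors} verbatim with $\tau'$ in place of $\tau$: fix a complete derivation $\tau_0$ witnessing that $M$ is realizable, form the flow $\bmf_{\tau'}$ as in \eqref{eq:ftau}, decompose it into pumps via \cref{euler}, and insert $\tau'$ and these pumps into $\tau_0$ at occurrences of the relevant non-terminals (all of which lie in $M$). The result is a complete derivation of $G$, so its yield $w_2$ lies in $L(G)$, and $u$ occurs in $w_2$ as a factor. For the length, $|\bmf_{\tau'}|_1\le|\Parikh(\tau')|_1\cdot\bigl(1+\max_{p}|\bmf_p|_1\bigr)$ with $|\Parikh(\tau')|_1=\bigO(|u|+|v'|+1)$ by Chomsky normal form, while $|\tau_0|$ is constant; hence the inserted material, and thus $|w_2|$, is $\bigO(|u|+|v'|+1)=\bigO(|u|)$. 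Every hidden constant depends only on $G$ (via $|N|$, $|M|$, $|\tau_0|$ and $\max_p|\bmf_p|_1$), so a single $c$ works for all $w_1$; the symmetric pruning of left subtrees handles $w_1=v$.

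The main obstacle is the preprocessing: one must shrink $|v|$ all the way to $\bigO(|u|)$ \emph{without altering $u$}, which is what forces the spine analysis in step~(i) and crucially exploits that Chomsky normal form forbids $\varepsilon$-productions, so a short $u$ genuinely means few left subtrees and hence a short spine after the right-only runs are collapsed. Once $\tau'$ is available, re-running the construction of \cref{unboundedness-factors} is routine; the only points to verify are that the pruning preserves the properties ``uses only productions from $Q$'' and ``derives the same $u$''.
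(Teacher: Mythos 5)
Your proposal is correct and follows essentially the same route as the paper's proof: prune the spine of the derivation $A\derivs[Q]uAv$ by removing pumps between consecutive left-branching nodes to bound the number of right subtrees by $\bigO(|u|)$, shrink each surviving right subtree to constant yield, and then rerun the construction of \cref{unboundedness-factors} on the resulting derivation, noting that $\tau_0$ and the flows $\bmf_p$ are constants and $\bmf_\tau$ is linear in $\Parikh(\tau)$, which in Chomsky normal form is $\bigO(|u|+|v'|)$. The only cosmetic difference is that you prune the existing right subtrees down to bounded height, whereas the paper replaces them by fixed short derivations $B\derivs[Q]x$; both yield the same linear bound.
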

This is shown essentially the same way as \cref{unboundedness-factors}.  
Let us now show that if a language $L$ accepted by a PRBCA is not bounded, then
it must have exponential growth. We have seen above that as a PRBCA language,
$L$ is generated by some $\Z$-grammar. As shown by Fernau and
Stiebe~\cite[Proposition 5.12]{DBLP:journals/tcs/FernauS02}, this implies that
$L=L(G)$ for some $\Z$-grammar $G$ in Chomsky normal form. Since $L$ is not
bounded, \cref{unboundedness-decomposition} yields $A$ and $Q$ such that
$L_{A,Q}$ is not a bounded language. It is well-known that any context-free
languages that is not bounded has exponential growth (this fact has apparently
been independently discovered at least six times,
see~\cite{DBLP:journals/ijfcs/GawrychowskiKRS10} for references). Thus,
$L_{A,Q}$ has exponential growth.
By \cref{unboundedness-factors-stronger}, we have $L_{A,Q}\linembed L$ and
thus $L$ has exponential growth.

\subsubsection{Acknowledgments} We are grateful to Manfred Kufleitner for sharing
the manuscript~\cite{KufleitnerEuler} before it was publicly available.  It
provides an alternative proof for constructing an existential Presburger
formula for the Parikh image of a context-free grammar.  The latter was also
shown in \cite{esparza1997petri,DBLP:conf/cade/VermaSS05}. We use it in
\cref{euler}, which could also be derived from \cite[Theorem
3.1]{esparza1997petri}. However, we provide a simple direct proof of
\cref{euler} inspired by Kufleitner's proof.

\marginnote{\includegraphics[width=3cm]{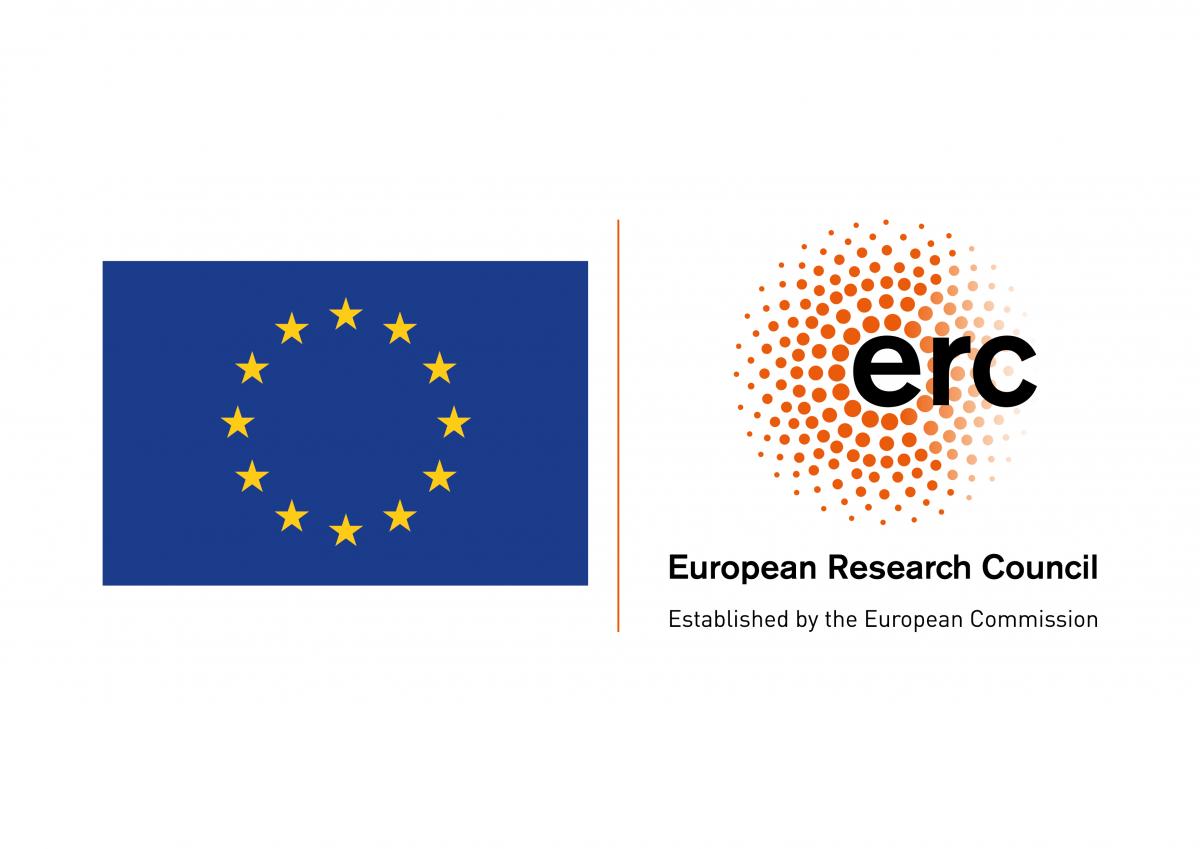}}This work is funded by the European Union (ERC, FINABIS, 101077902). Views and opinions expressed are however those of the author(s) only and do not necessarily reflect those of the European Union or the European Research Council Executive Agency. Neither the European Union nor the granting authority can be held responsible for them.

\label{beforebibliography}
\newoutputstream{pages}
\openoutputfile{main.pages.ctr}{pages}
\addtostream{pages}{\getpagerefnumber{beforebibliography}}
\closeoutputstream{pages}
\printbibliography

\pagebreak

\appendix
\section{Additional material on \cref{sec:results}}
\label{appendix-results}
\subsection{Counter updates} \label{appendix-counter-updates}

Here we argue that for $d$-dimensional $\Z$-VASS we can without loss of generality
assume unary encoded counter updates.
In particular, we show that counter updates comprised of $n$-bit binary encoded numbers
can be simulated by ones comprised of unary encoded numbers.
Moreover, we also mention how to extend this result to the case of $\Z$-grammars, which
uses a very similar construction.
The general idea is to replace each counter by $n$ many counters and to replace each
number in a counter update by its binary representation.

Let $\cV=(Q,\Sigma,q_0,T,F)$ be a $d$-dimensional $\Z$-VASS with binary encoded counter
updates using at most $n$ bit per number.
We construct a $dn$-dimensional $\Z$-VASS $\cV'=(Q,\Sigma,q_0,T',F)$ with unary
encoded counter updates as follows:
\begin{itemize}
  \item For every transition $(p,w,(v_1,\ldots,v_d),p') \in T$ add the transition
  \linebreak $(p,w,(\bin(v_1), \ldots, \bin(v_d)),p')$ to $T'$, where $\bin:\Z \to \{-1,0,1\}$ maps an $n$-bit number to its binary representation, while keeping the sign.
  \item For every $i \in [1,d]$, every $j \in [1,n-1]$ and every state $p \in Q$
  add the loops $(p,\varepsilon,(u_1,\ldots,u_{dn}),p)$ and
  $(p,\varepsilon,(v_1,\ldots,v_{dn}),p)$ to $T'$,
  where $u_{(i-1)n + j} = -2$, $u_{(i-1)n + j + 1} = 1$, $v_{(i-1)n + j} = 2$, $v_{(i-1)n + j + 1} = -1$,
  and $u_k = v_k = 0$ for every $k \notin \{(i-1)n + j, (i-1)n + j + 1\}$.
\end{itemize}
Here the first set of transitions applies the binary representations as counter effects,
while the loops of the second set ensure the correct relationship among each set of $n$
counters simulating a single counter of $\cV$.
In particular, the loops make sure that each such counter represents one of the $n$ bits,
with each successive bit being twice as significant as the previous one.

Let us now briefly talk about the construction for $\Z$-grammars.
The idea is the same, but instead of replacing transitions, we replace productions.
Then instead of the loops, we add for each non-terminal $A$ productions that replace $A$ by itself.
In both cases the counter updates are exactly the same as in the $\Z$-VASS construction,
and the proof is very similar as well.

To prove the construction correct in the $\Z$-VASS case we show
by induction on run length that there is a run
$(q,u_1,\ldots,u_n) \xrightarrow{w}(p,v_1,\ldots,v_n)$ of $\cV$
if and only if there is a run
$(q,u_1',\ldots,u_{dn}') \xrightarrow{w}(p,v_1',\ldots,v_{dn}')$ of $\cV'$
with $u_i = \sum_{j = 1}^{n} 2^{j-1} u_{(i-1)n + j}'$ and
$v_i = \sum_{j = 1}^{n} 2^{j-1} v_{(i-1)n + j}'$.
The base case is immediate from the starting configuration in both $\Z$-VASS.

For the inductive case let us firstly assume that there is a run
$(q,u_1,\ldots,u_n)$ $\xrightarrow{w}(p,v_1,\ldots,v_n)
\xrightarrow {w'} (r,x_1,\ldots,x_n)$ of $\cV$
such that the second part is due to a single transition
$(p,w',\vec x - \vec v,r) \in T$.
Then by induction hypothesis there is a run
$(q,u_1',\ldots,u_{dn}') \xrightarrow{w}(p,v_1',\ldots,v_{dn}')$
of $\cV'$ with $u_i = \sum_{j = 1}^{n} 2^{j-1} u_{(i-1)n + j}'$ and
$v_i = \sum_{j = 1}^{n} 2^{j-1} v_{(i-1)n + j}'$.
Furthermore we have
$(p,w',(\bin(x_1 - v_1), \ldots, \bin(x_d - v_d)),r) \in T'$ by construction.
Appending this transition to the run of $\cV'$ ending in $(p,\vec v')$
leads to the configuration $(r,y_1',\ldots,y_{dn}')$, where
$y_{(i-1)n + j}' = v_{(i-1)n + j}' + \bin(x_i - v_i)_j$
for $i \in [1,d]$ and $j \in [1,n]$.
Here $\bin(-)_j$ refers to the $j$th digit in the binary representation,
which corresponds to the $(j-1)$th power of $2$.
This gives us the following relationship between $\vec x$ and $\vec y'$,
as required:
\begin{align*}
  \sum_{j = 1}^{n} 2^{j-1} y_{(i-1)n + j}' &= \sum_{j = 1}^{n} 2^{j-1} v_{(i-1)n + j}' + 2^{j-1} \bin(x_i - v_i)_j\\
  &= (x_i - v_i) + \sum_{j = 1}^{n} 2^{j-1} v_{(i-1)n + j}' = (x_i - v_i) + v_i = x_i.
\end{align*}

Secondly, let us now assume that there is a run
$(q,u_1',\ldots,u_{dn}') \xrightarrow{w}$ \linebreak $(p,v_1',\ldots,v_{dn}')
\xrightarrow {w'} (r,y_1',\ldots,y_{dn}')$ of $\cV'$
such that the second part is due to a single transition
$t' = (p,w',\vec y' - \vec v',r) \in T'$.
Then by induction hypothesis there is a run
$(q,u_1,\ldots,u_n) \xrightarrow{w}(p,v_1,\ldots,v_n)$ of $\cV$
with $u_i = \sum_{j = 1}^{n} 2^{j-1} u_{(i-1)n + j}'$ and
$v_i = \sum_{j = 1}^{n} 2^{j-1} v_{(i-1)n + j}'$.
Now we need to distinguish two cases:
\begin{itemize}
  \item If $t'$ was added to $T'$ based on a transition $t \in T$, then we have
  $t = (p,w',\Big(\sum_{j = 1}^{n} 2^{j-1}(y_j' - v_j'), \ldots,
  \sum_{j = 1}^{n} 2^{j-1}(y_{(d-1)n + j}' - v_{(d-1)n + j}')\Big),r)$ by construction.
  Appending this transition to the run of $\cV$ ending in $(p,\vec v)$
  leads to the configuration $(r,x_1,\ldots,x_d)$, where
  $x_i = v_i + \sum_{j = 1}^{n} 2^{j-1}(y_{(i-1)n + j}' - v_{(i-1)n + j}')$
  for $i \in [1,d]$.
  This gives us the following relationship between $\vec x$ and $\vec y'$,
  as required:
  \begin{align*}
    x_i &= v_i + \sum_{j = 1}^{n} 2^{j-1}(y_{(i-1)n + j}' - v_{(i-1)n + j}')\\
    &= (v_i - v_i) + \sum_{j = 1}^{n} 2^{j-1}y_{(i-1)n + j}'
    = \sum_{j = 1}^{n} 2^{j-1}y_{(i-1)n + j}'.
  \end{align*}
  
  \item If $t'$ is one of the loops in $T'$ that was no based on a transition of $T$,
  then we have $w' = \varepsilon$, $p = r$, and the counter effect $\vec y' - \vec v'$
  changes the value $v_{(i-1)n + j}'$ by $2$, and the value $v_{(i-1)n + j + 1}'$ by $1$,
  for some $i \in [1,d]$, $j \in [1,n-1]$.
  Since all other values are not changed by this effect, it only affects the sum
  $\sum_{j = 1}^{n} 2^{j-1} v_{(i-1)n + j}'$. Since the change by $2$ and the change by
  $1$ always have opposite signs, this sum is also not affected.
  Therefore the run of $\cV$ ending in $(p,\vec v)$ is already as required.
\end{itemize}
For language equivalence between $\cV$ and $\cV'$, we still need to show that if $\cV$
reaches $\vec 0$ then $\cV'$ can simulate this by reaching exactly $\vec 0$ as well.
Formally we show that there is a run $(q_0,\vec 0) \xrightarrow{w}(p,\vec 0)$ of $\cV$
if and only if there is a run $(q_0,\vec 0) \xrightarrow{w}(p,\vec 0)$ of $\cV'$.

The ``only if''-direction follows directly from the statement we just proved.
For the ``if''-direction we get that there is a run
$(q_0,\vec 0) \xrightarrow{w}(p,\vec v')$ of $\cV'$
with $\sum_{j = 1}^{n} 2^{j-1} v_{(i-1)n + j}' = 0$ for every $i \in [1,d]$.
Assume that for some $i \in [1,d]$ there is a maximal $j \in [1,n]$ such that
$v_{(i-1)n + j}' \neq 0$.
Then $j > 1$, because the above sum must evaluate to $0$, so there must be a
smaller $j' \in [1,n]$ with $v_{(i-1)n + j'}' \neq 0$ as well.
Consider the loop
$\ell = (p,\varepsilon,(u_1,\ldots,u_{dn}),p) \in T'$
where $u_{(i-1)n + j - 1} = -2$, $u_{(i-1)n + j} = 1$ if $v_{(i-1)n + j} < 0$,
or $u_{(i-1)n + j - 1} = 2$, $u_{(i-1)n + j} = -1$ if $v_{(i-1)n + j} < 0$.
Appending $\ell$ $|v_{(i-1)n + j}|$ many times to the run ending in $(p,\vec v')$
will change the value $v_{(i-1)n + j}$ to $0$, and otherwise only changes
values $v_{(i-1)n + j''}'$ with $j'' \in [1,n]$ and $j'' < j$.
However, this run extension does not alter the sum
$\sum_{j = 1}^{n} 2^{j-1} v_{(i-1)n + j}'$. Meaning if we do this iteratively for
every $j$ and $i$, we eventually arrive at $\vec 0$.

\subsection{Unboundedness predicates} \label{appendix-unboundedness-predicates}

We comment here on the relationship between the notion of
one-dimensional unboundedness predicates in~\cite{DBLP:conf/icalp/CzerwinskiHZ18}
and the notion used here. The set of axioms in \cite{DBLP:conf/icalp/CzerwinskiHZ18}
is as follows, for any languages $K,L\subseteq\Sigma^*$:
\begin{enumerate}[leftmargin=*, label=(O\arabic*)]
\item If $\pP(K)$ and $K\subseteq L$, then $\pP(L)$.
\item If $\pP(K\cup L)$, then $\pP(K)$ or $\pP(L)$.
\item If $\pP(F(K\cdot L))$, then $\pP(F(K))$ or $\pP(F(L))$.
\end{enumerate}
Moreover, in \cite{DBLP:conf/icalp/CzerwinskiHZ18}, the resulting decision problem is defined as:
\begin{description}
\item[Input] Language $L\subseteq\Sigma^*$.
\item[Question] Does $\pP(F(L))$ hold?
\end{description}
Note that in this definition, one decides whether $\pP(F(L))$, in contrast to
the definition in \cref{sec:results}, where we decide whether $\pP(L)$.
\begin{enumerate}
\item The definition (U1)--(U4) is slightly stronger. Observe that a predicate
$\pP$ that satisfies (U1)--(U4) also satisfies (O1)--(O3). However, there are some (pathological) examples of predicates that satisfy (O1)--(O3), but not (U1)--(U4). For example, the predicate $\pP_{\neg a}$ with
\[ \pP_{\neg a}(L) \iff \text{$L$ is non-empty and $L\ne\{a\}$} \]
satisfies (O1)--(O3), but not (U4). The latter is because $F(\{a\})=\{\varepsilon,a\}$
satisfies $\pP_{\neg a}$, but $\{a\}$ itself does not.
\item Although (U1)--(U4) are slightly stronger, they yield the same set of
decision problems. Indeed, suppose $\pP$ is a predicate that satisfies
(O1)--(O3). Define the predicate $\pP'$ with $\pP'(L)\iff \pP(F(L))$. Then,
$\pP'$ satisfies (U1)--(U4). Moreover, the decision problem induced by $\pP$
and $\pP'$ is the same: In each case, we decide whether $\pP(F(L))$.
\item Thus, the set of decision problems induced by the axiom sets (O1)--(O3)
and (U1)--(U4) is the same, and hence \cref{main-unboundedness} is independent
of which set of axioms is used.
\end{enumerate}

The reason the axioms (O1)--(O3) were used in
\cite{DBLP:conf/icalp/CzerwinskiHZ18} is that if one also introduces
multi-dimensional unboundedness predicates, the axioms (O1)--(O3) fit more
nicely with the general case. The slight disadvantage of the definition in
\cite{DBLP:conf/icalp/CzerwinskiHZ18} is that one has to decide $\pP(F(L))$
instead of $\pP(L)$, which is why here, we opted for (U1)--(U4).

\section{Additional material on \cref{sec:rbc-zvass-conversion}}
\label{app:1-reversal-construction}
\subsubsection{Translating transitions of $(k, r)$-RBCA to $(k', 1)$-RBCA.}

We provide an explicit construction for the four transition types $\mathsf{inc}_i$, $\mathsf{dec}_i$, $\mathsf{zero}_i$, and $\mathsf{nz}_i$.

Let us first consider a transition $(p, a, \mathsf{inc}_i, q)$ (\cref{fig:inc-gadget}). We start with a sequence of tests (marked in blue) for $\pctr_{i,j}$ to identify which phase we are currently in. If $\pctr_{i,j}$ is non-zero for an odd $j$, we know we are in an positive phase and simply increment the corresponding program counter $\ctr_{i,\frac{j+1}{2}}$. Otherwise we are in a negative phase. Thus we need to switch to an positive phase, decrementing $\pctr_{i,j}$ and incrementing $\pctr_{i,j+1}$, before incrementing the corresponding program counter $\ctr_{i,\frac{j}{2}+1}$.

Observe that the state $f$ can never be reached in an $r$-reversal bounded run. As the phase counters $\pctr_{i,1}$ to $\pctr_{i,2m-1}$ have all tested negative, we must be in phase $r = 2m$. As this is an even number, it must be descending, and incrementing would mean adding another reversal, breaking the bound.

\begin{figure}
	\begin{center}
			\begin{tikzpicture}[state/.style={draw,shape=circle,minimum width=1em},auto, node distance=5em]
			\node [state] (p) {$p$};
			\node [state, right=2em of p, fill=blue!10!white] (r1) {};
			\node [state, below=10em of r1] (s1) {};
			\node [state, right=of r1, fill=blue!10!white] (r2) {};
			\node [state, below=5em of r2] (s2) {};
			\node [state, right=of s2] (s3) {};
			\node [state, right=of s3] (s4) {};
			\node [right=of r2, fill=blue!10!white] (r4) {$\ldots$};
			\node [state, right=of r4,fill=red!10!white] (r5) {$f$};
			\node [state, right=6em of r5] (q) {$q$};
			
			\draw [->] (p) edge node {$a$} (r1);
			\draw [->] (r1) edge node [yshift=12,xshift=-5,rotate=-45] {\texttt{$\pctr_{i,1}$ = 0?}} (r2);
			\draw [->] (r2) edge node [yshift=12,xshift=-5,rotate=-45] {\texttt{$\pctr_{i,2}$ = 0?}} (r4);
			\draw [->] (r4) edge node [yshift=16,xshift=-10,rotate=-45] {\texttt{$\pctr_{i,2m-1}$ = 0?}} (r5);
			
			\draw [->] (r1) edge node [left,xshift=-3,yshift=-15,rotate=-45,] {\texttt{$\pctr_{i,1}$} $\neq$ 0?} (s1);
			\draw [->] (s1)  -| node [near start]{\texttt{$\ctr_{i,1}$ += 1}} (q);
			
			\draw [->] (r2) edge node [left,xshift=-3,yshift=-15,rotate=-45,] {\texttt{$\pctr_{i,2}$} $\neq$ 0?} (s2);
			\draw [->] (s2) edge node [yshift=16,xshift=-10,rotate=-45] {\texttt{$\pctr_{i,2}$ -= 1}} (s3);
			\draw [->] (s3) edge node [yshift=16,xshift=-10,rotate=-45] {\texttt{$\pctr_{i,3}$ += 1}} (s4);
			\draw [->] (s4) -| node [near start,yshift=16,xshift=-10,rotate=-45] {\texttt{$\ctr_{i,2}$ += 1}} (q);
		\end{tikzpicture}
		\caption{Transition structure to replace an incrementing transition $(p, a, \mathsf{inc}_i, q)$.}
		\label{fig:inc-gadget}
	\end{center}
\end{figure}
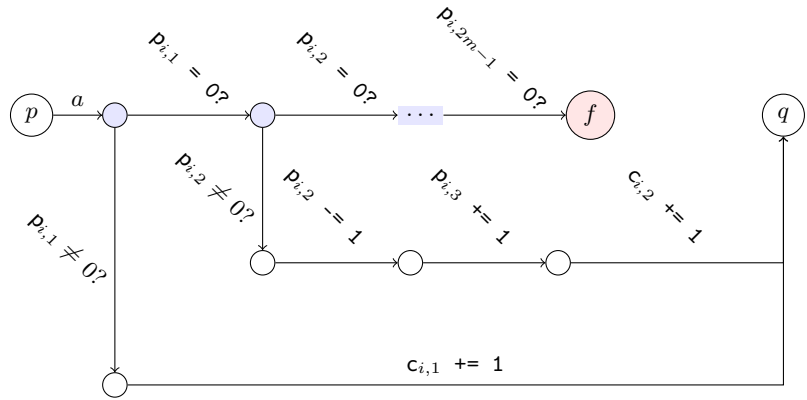

Next, let us consider a transition $(p, a, \mathsf{dec}_i, q)$ (\cref{fig:dec-gadget}). Similarly to the previous construction, we again first identify the current phase in the state sequence marked in blue. This time, we change the phase counter if we are currently in a positive (odd) phase and move to a negative phase. Once the phase counter is updated, we change to the states marked in green, which allow us to non-deterministically decrement one of the program counters (recall that if we are in phase $j$, we choose one of the counters $\ctr_{i,1}, \ldots, \ctr_{i,\lceil\frac{j}{2}\rceil}$ to decrement).

\begin{figure}
	\begin{center}
		\begin{tikzpicture}[state/.style={draw,shape=circle,minimum width=1em},auto, node distance=5em]
			\node [state] (p) {$p$};
			\node [state, right=2em of p, fill=blue!10!white] (r1) {};
			\node [state, below=of r1] (s1) {};
			\node [state, below=of s1] (s2) {};
			\node [state, below=of s2, fill=green!10!white] (s3) {};
			\node [state, right=of r1, fill=blue!10!white] (r2) {};
			\node [state, below=5em of r2] (s4) {};
			\node [right=of r2, fill=blue!10!white] (r4) {$\ldots$};
			\node [state, right=of r4, fill=blue!10!white] (r5) {};
			\node [state, below=of r5] (r51) {};
			\node [state, below=of r51] (r52) {};
			\node [state, below=of r52, fill=green!10!white] (dm) {};
			\node [state, right=of r5, fill=blue!10!white] (r6) {};  
			\node [state, below=6em of s3] (q) {$q$};
			\node [state,right=of s3, fill=green!10!white] (d2) {};
			\node [state,left=of dm, fill=green!10!white] (dm1) {};
			\node [left=2.5em of dm1, fill=green!10!white]  (t2) {$\ldots$};
			
			\draw [->] (p) edge node {$a$} (r1);
			\draw [->] (r1) edge node [yshift=12,xshift=-5,rotate=-45] {\texttt{$\pctr_{i,1}$ = 0?}} (r2);
			\draw [->] (r2) edge node [yshift=12,xshift=-5,rotate=-45] {\texttt{$\pctr_{i,2}$ = 0?}} (r4);
			\draw [->] (r4) edge node [yshift=16,xshift=-10,rotate=-45] {\texttt{$\pctr_{i,2m-2}$ = 0?}} (r5);
			\draw [->] (r5) edge node [yshift=16,xshift=-10,rotate=-45] {\texttt{$\pctr_{i,2m-1}$ = 0?}} (r6);
			
			\draw [->] (r1) edge node [left,xshift=-3,yshift=-15,rotate=-45,] {\texttt{$\pctr_{i,1}$} $\neq$ 0?} (s1);
			\draw [->] (s1) edge node [left,xshift=-3,yshift=-15,rotate=-45,] {\texttt{$\pctr_{i,1}$ -= 1}} (s2);
			\draw [->] (s2) edge node [left,xshift=-3,yshift=-15,rotate=-45,] {\texttt{$\pctr_{i,2}$ += 1}} (s3);
			\draw [->] (s3) edge node [left,xshift=-3,yshift=-15,rotate=-45,] {$\ctr_{i,1}$ -= 1} (q);

			\draw [->] (r2) edge node [left,xshift=-3,yshift=-15,rotate=-45,] {\texttt{$\pctr_{i,2}$} $\neq$ 0?} (s4);
			\draw [->] (s4) edge node {$\varepsilon$} (s3);
			
			\draw [->] (r5) edge node [left,xshift=-3,yshift=-15,rotate=-45,] {\texttt{$\pctr_{i,2m-1}$} $\neq$ 0?} (r51);
			\draw [->] (r51) edge node [left,xshift=-3,yshift=-15,rotate=-45,] {\texttt{$\pctr_{i,2m-1}$ -= 1}} (r52);
			\draw [->] (r52) edge node [left,xshift=-3,yshift=-15,rotate=-45,] {\texttt{$\pctr_{i,2m}$ += 1}} (dm);
			
			\draw [->] (r6) edge node {$\varepsilon$} (dm);
			
			\draw [->] (dm) |- node [near start,left,xshift=-3,yshift=-15,rotate=-45,] {$\ctr_{i,m}$ -= 1} (q);
			\draw [->] (dm) edge node [above] {$\varepsilon$} (dm1);
			
			\draw [->] (d2) |- node [near start,left,xshift=-3,yshift=-15,rotate=-45,] {$\ctr_{i,2}$ -= 1} (q);
			\draw [->] (d2) edge node [above] {$\varepsilon$} (s3);
			
			\draw [->] (dm1) |- node [near start,left,xshift=-3,yshift=-15,rotate=-45,] {$\ctr_{i,m-1}$ -= 1} (q);
			\draw [->] (dm1) edge node [above] {$\varepsilon$} (t2);
			
			\draw [->] (t2) edge node [above] {$\varepsilon$} (d2);

		\end{tikzpicture}
		\caption{Transition structure to replace a decrementing transition $(p, a, \mathsf{dec}_i, q)$.}
		\label{fig:dec-gadget}
	\end{center}
\end{figure}
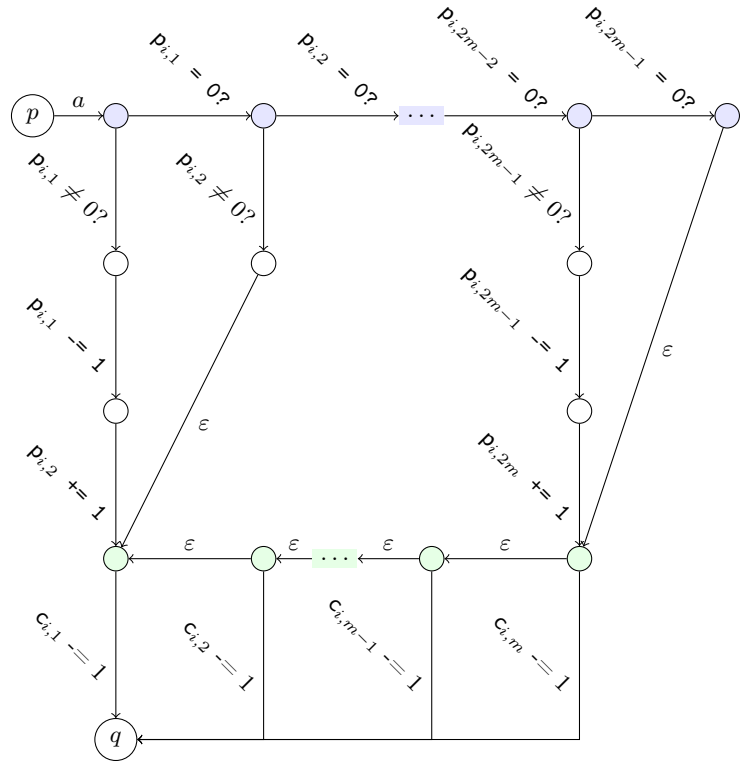

Finally, we consider the zero- (\cref{fig:zero-gadget}) and non-zero-tests (\cref{fig:nonzero-gadget}). We test a counter $\ctr_i$ for zero by making sure \emph{all} partial counters $\ctr_{i,j}$ are zero: we test a counter $\ctr_{i}$ for non-zero by testing if \emph{any} partial counter $\ctr_{i,j}$ is non-zero.

\begin{figure}
	\begin{center}
		\begin{tikzpicture}[state/.style={draw,shape=circle,minimum width=1em},auto, node distance=5em]
			\node [state] (p) {$p$};
			\node [state,right=of p] (p1) {};	
			\node [state,right=of p1] (p2) {};
			\node [right=of p2] (t) {$\ldots$};
			\node [state,right=of t] (p3) {};
			\node [state,right=of p3] (q) {$q$};
			
			\draw [->] (p) edge node {$a$} (p1) {};
			\draw [->] (p1) edge node [yshift=12,xshift=-5,rotate=-45] {$\ctr_{i,1}$ = 0?} (p2);
			\draw [->] (p2) edge node [yshift=12,xshift=-5,rotate=-45] {$\ctr_{i,2}$ = 0?} (t);
			\draw [->] (t) edge node [yshift=16,xshift=-10,rotate=-45] {$\ctr_{i,m-1}$ = 0?} (p3);
			\draw [->] (p3) edge node [yshift=12,xshift=-5,rotate=-45] {$\ctr_{i,m}$ = 0?} (q);
		\end{tikzpicture}
		\caption{Transition structure to replace a zero-test $(p, a, \mathsf{zero}_i, q)$.}
		\label{fig:zero-gadget}
	\end{center}
\end{figure}
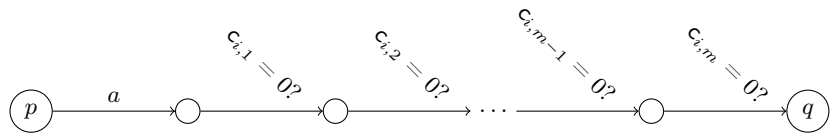

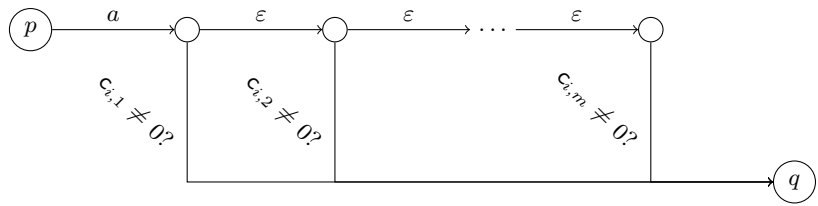
\begin{figure}
	\begin{center}
		\begin{tikzpicture}[state/.style={draw,shape=circle,minimum width=1em},auto, node distance=5em]
			\node [state] (p) {$p$};
			\node [state,right=of p] (p1) {};	
			\node [state,right=of p1] (p2) {};
			\node [right=of p2] (t) {$\ldots$};
			\node [state,right=of t] (p3) {};
			\node [right=of p3] (t') {};
			\node [state,below=of t'] (q) {$q$};
			
			\draw [->] (p) edge node {$a$} (p1) {};
			\draw [->] (p1) |- node [left,near start, yshift=-15,xshift=-5,rotate=-45] {$\ctr_{i,1}$ $\neq$ 0?} (q);
			\draw [->] (p2) |- node [left,near start, yshift=-15,xshift=-5,rotate=-45] {$\ctr_{i,2}$ $\neq$ 0?} (q);
			\draw [->] (t) edge node [] {$\varepsilon$} (p3);
			\draw [->] (p3) |- node [left,near start,yshift=-15,xshift=-5,rotate=-45] {$\ctr_{i,m}$ $\neq$ 0?} (q);
			\draw [->] (p1) edge node {$\varepsilon$} (p2);
			\draw [->] (p2) edge node {$\varepsilon$} (t);
		\end{tikzpicture}
		\caption{Transition structure to replace a non-zero-test $(p, a, \mathsf{nz}_i, q)$.}
		\label{fig:nonzero-gadget}
	\end{center}
\end{figure}

\section{Additional material on \cref{sec:unboundedness-reduction}}
\label{appendix-unboundedness-reduction}

Let us first observe that one can, given a subset $Q$, compute in logspace a context-free grammar for $L_{A,Q}$.
To compute this language, we construct two grammars (one for the $u$ part of $A \derivs[Q] uAv$, 
one for the $v$ part). Then $L_{A,Q}$ is the union of those two grammars. We present 
the construction of the $u$-grammar here, the other can be constructed symmetrically.

Intuitively, for each production in $Q$, we will attempt to guess which of its 
resulting non-terminals will produce the subtree with the dangling $A$. All 
non-terminals to the left will be evaluated normally, everything to the right 
(including terminals) will be discarded, and the selected non-terminal will continue 
guessing which of \emph{its} children will be responsible for producing $A$.

Formally, we construct a grammar as follows: 
\begin{enumerate}
	\item The set of non-terminals is duplicated, $N \coloneqq M \cup \{\hat{X} \mid X \in M\}$
	\item All the original productions are kept, and additionally, we introduce the following productions: If $X \to w$ is a production from $Q$ with at least one non-terminal in $w$, then for all ways $uYv$ to split $w$ along a non-terminal, we introduce the production $\hat{X} \to u\hat{Y}$.
	\item Introduce a production $\hat{A} \to \varepsilon$.
	\item The start symbol is $\hat{A}$.
\end{enumerate}

Observe that any marked non-terminal $\hat{X}$ can only ever occur as the right-most symbol in any (partial) derivation. It can only be removed by the production $\hat{A} \to \varepsilon$. Therefore, a word $u$ is produced by this new grammar if and only if there is a derivation $A \derivs[Q] uAv$ in $Q$.

We also remark the following fact about the structure of the new grammar:

\begin{remark}
	\label{cor:linear-regular-bound}
	If all productions of the original grammar are left-linear (or all are right-linear) then we also get left-linear (respectively right-linear) grammars for the languages $L_{A,Q}$. We can therefore construct NFAs for $L_{A,Q}$, which yields better complexity results for $\Z$-VASS.
\end{remark}

\realizableCancelable*
\begin{proof}
	We begin by showing that we can decide realizability of a set
	$M\subseteq N$ in $\NP$.  In \cite{DBLP:conf/cade/VermaSS05}, it was shown that
	one can compute in polynomial time an existential Presburger formula $\psi$
	such that a vector $\vec{u} \in \N^P$ satisfies $\psi$ if and only if $\vec{u}$
	corresponds to a terminal derivation of a context-free grammar. We augment this
	formula with constraints that state that for all $A \in M$ some production $A
	\to w$ occurs, and no productions $A \to w$ with $A \notin M$ occur; as well as
	a constraint that the total counter effect should be zero, then check for
	satisfiability.
	
	In order to check $M$-realizability of a production $p$, we first
	construct a second grammar $G'$. The grammar $G'$ simulates pumps of $G$. More
	precisely, $G'$ simulates some pump $A_1\derivs u_1A_1v_1$, then some pump
	$A_2\derivs u_2A_2v_2$, etc, and stops after simulating finitely many pumps.
	However, instead of producing the terminals that these pumps generate, it uses
	$P$ as its set of terminals and produces the letter $p'\in P$ whenever it
	applies the production $p'$. Then, since $P$ is the terminal alphabet of $G'$, we have for every $\vec{u}\in\N^P$:
	\[ \vec{u}\in \Parikh(L(G')) ~~\iff~~ \exists k\in\N,~\text{pumps}~\tau_1,\ldots,\tau_k\colon \vec{u}=\Parikh(\tau_1)+\cdots+\Parikh(\tau_k) \]
	Such a grammar $G'$ can clearly be computed in polynomial time. Given
	the grammar $G'$, we apply again the result of \cite{DBLP:conf/cade/VermaSS05}
	to compute an existential Presburger formula $\psi$ that defines
	$\Parikh(L(G'))$. With this, we observe that $p$ is $M$-realizable if and only if
	there exists some $\vec{u}\in \Parikh(L(G'))$ such that
	\begin{enumerate}
	\item $\vec{u}(p)>0$,
	\item $\varphi(\vec{u})=\bzero$, and
	\item all productions occurring in $\vec{u}$ belong to $M$.
	\end{enumerate}
	Since with $\psi$, the existence of such a $\vec{u}$ can easily
	expressed in another existential Presburger formula $\psi'$, it remains to
	check satisfiability of $\psi'$, which can be done in $\NP$.
\end{proof}

\section{Additional material on \cref{sec:growth}}
\label{appendix-growth}

\subsubsection{The relation $\linembed$ and growth}
Let us show here the following fact used in the main text:
\begin{lemma}
If $L_1\linembed L_2$ and $L_1$ has exponential growth, then $L_2$ has exponential growth as well.
\end{lemma}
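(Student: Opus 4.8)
The plan is to compare the counting functions of $L_1$ and $L_2$ directly via the embedding. Fix the constant $c\in\N$ witnessing $L_1\linembed L_2$, together with a map $f\colon L_1\to L_2$ such that $|f(w)|\le c\cdot|w|$ and $w$ is a factor of $f(w)$ for every $w\in L_1$. The key observation is that although $f$ need not be injective, it is only mildly non-injective: a fixed word $u$ of length $\ell$ has at most $\binom{\ell+1}{2}\le\ell^2$ nonempty factors, so at most that many words of $L_1$ can be mapped by $f$ to $u$.

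Using this, I would first bound $|L_1\cap\Sigma^{\le m}|$ for each $m$. Every $w\in L_1\cap\Sigma^{\le m}$ is a factor of $f(w)\in L_2\cap\Sigma^{\le cm}$, and each word in $L_2\cap\Sigma^{\le cm}$ has at most $(cm)^2$ factors, hence
\[ |L_1\cap\Sigma^{\le m}| \;\le\; (cm)^2\cdot|L_2\cap\Sigma^{\le cm}|. \]
Now invoke the hypothesis: there is $r>1$ with $|L_1\cap\Sigma^{\le m}|\ge r^m$ for infinitely many $m$. For each such $m$ this gives $|L_2\cap\Sigma^{\le cm}|\ge r^m/(cm)^2$. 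Writing $n=cm$ and $s=r^{1/c}>1$, we obtain $|L_2\cap\Sigma^{\le n}|\ge s^n/n^2$ for infinitely many $n$ (the multiples of $c$ arising from the infinite set of good $m$). Finally fix any real $r'$ with $1<r'<s$; since $(s/r')^n\to\infty$ exponentially, $s^n/n^2\ge (r')^n$ for all sufficiently large $n$, so $|L_2\cap\Sigma^{\le n}|\ge (r')^n$ for infinitely many $n$, i.e.\ $L_2$ has exponential growth.

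The only step requiring care is handling the non-injectivity of the embedding map, which is precisely why the polynomial factor $(cm)^2$ must be carried along; everything else is the routine fact that an exponential dominates a polynomial. In particular, the constant $c$ affects only the base of the exponential ($r\mapsto r^{1/c}$), not whether the growth is exponential, so the argument goes through cleanly.
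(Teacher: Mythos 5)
Your proof is correct and follows essentially the same route as the paper: both establish a polynomial-factor inequality $|L_1\cap\Sigma^{\le m}|\le \mathrm{poly}(m)\cdot|L_2\cap\Sigma^{\le cm}|$ (the paper via an explicit injection recording the position and length of the factor, you via bounding the fibers of the embedding map by the number of factors of a word — the same double counting) and then absorb the polynomial into a slightly smaller exponential base. The only nitpick is that a word of length $\ell$ has up to $\ell^2+1$ factors once $\varepsilon$ is included, but this changes nothing.
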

\begin{proof}
We first claim that there exists $c>0$
with $|L_1\cap\Sigma^{\le m}|\le cm(m+1)\cdot |L_2\cap \Sigma^{\le cm}|$.  Since
$L_1\linembed L_2$, there exists a $c>0$ such that for every word $w_1\in L_1$,
there exists a word $w_2\in L_2$ such that $|w_2|\le c\cdot |w_1|$ and $w_1$ is
a factor of $w_2$. This allows us to construct an injection
\[ f_m\colon L_1\cap\Sigma^{\le m} \to (L_2\cap\Sigma^{\le cm})\times\{1,\ldots,cm\}\times\{0,\ldots,m\}, \]
which sends each word $w_1\in L_1\cap\Sigma^{\le m}$ to $(w_2,r,|w_1|)$, where
$w_2\in L_2\cap\Sigma^{\le cm}$ is the word existing due to $L_1\linembed L_2$ and $r\in\{1,\ldots,cm\}$
is a position in $w_2$ at which $w_1$ appears. Then $f_m$ must clearly be
injective. Since the set on the right-hand side has cardinality
$cm(m+1)\cdot|L_2\cap \Sigma^{\le cm}|$, this proves the claim.


We can now deduce that $L_2$ has exponential growth: Since $L_1$ has
exponential growth, there is some real $r>1$ such that $|L_1\cap \Sigma^{\le
m}|\ge r^m$ for infinitely many $m$.  By our claim, this implies $|L_2\cap
\Sigma^{\le cm}|\ge \tfrac{r^m}{cm(m+1)}$ for infinitely many $m$.  Since
$\tfrac{r^m}{cm(m+1)}\ge (\sqrt[2c]{r})^{cm}$ for almost all $m$, we have in
particular $|L_2\cap\Sigma^{\le cm}|\ge (\sqrt[2c]{r})^{cm}$ for infinitely
many $m$.  Thus, $L_2$ has exponential growth.
\end{proof}

\subsubsection{Proof of \cref{unboundedness-factors-stronger}}
\begin{proof}
We proceed as in the proof of \cref{unboundedness-factors}.  However, we need
to show that each word $u$,$v$ can be found as a factor of some word $w\in
L(G)$ that is at most linear in $|u|$ or $|v|$, respectively. We show this for
$u$, the case of $v$ is analogous.  First, choose $\ell\in\N$ so that for every
non-terminal $B\in M$, there is a word $x\in\Sigma^*$ with $B\derivs[Q] x$ and
$|x|\le \ell$. Then since $A\derivs[Q] uAv$, we can also find a derivation
$A\derivs[Q] uAv'$, where $|v'|\le \ell\cdot|N|\cdot(|u|+1)$. This is because
in the derivation $A\derivs[Q] uAv$, the path from $A$ to the $A$-leaf has at
most $|u|$ nodes that are \emph{left-branching} (meaning: branch to the left).
By cutting out pumps, we may assume that between any two left-branching nodes
(and above the first, and below the last), there are at most $|N|$ nodes. Thus,
we arrive at a derivation tree for $A\derivs[Q] uAv''$ with at most
$|N|\cdot(|u|+1)$ right-branching nodes.  Then, replacing each subtree under a
right-branching node by a tree deriving a word of length $\le\ell$ yields the
derivation $A\derivs[Q] uAv'$.

Now we apply the proof of \cref{unboundedness-factors} to the derivation $A\derivs[Q]uAv'$.
Observe that the total number of productions used in $\tau_0$, $\tau'_1, \dots, \tau'_m$ and $\tau$ is bounded by $O(|u|+|v'|)=O(|u|)$:
First, note that $\tau_0$ and the flows $\vec{f}_p$ are independent of $u$ and $v'$.
The total number of productions in $\bmf_\tau=\Parikh(\tau'_1)+\cdots+\Parikh(\tau'_m)$
is bounded linearly in the number of productions of $\tau$ by~\eqref{eq:ftau}.
The latter in turn is bounded by $|u|+|v'|$ if $G$ is in Chomsky normal form.
\end{proof}

\subsubsection{Notions of polynomial and exponential growth}
In \cite{DBLP:journals/ijfcs/GawrychowskiKRS10}, polynomial growth for a
language $L\subseteq\Sigma^*$ is defined as the existence of a polynomial
$p(x)$ such that $|L\cap \Sigma^m|\le p(m)$ for every $m\ge 0$. Moreover,
exponential growth is defined as the existence of a real $r>1$ such that
$|L\cap \Sigma^m|\ge r^m$ for infinitely many $m\ge 0$. This differs slightly
from our definition, which is also sometimes called \emph{cumulative growth} \cite{ciobanu2014conjugacy}.
One can show that in fact both definitions are equivalent with respect to polynomial and exponential growth,
see e.g.~\cite[Lemma~2.3]{DBLP:phd/dnb/Ganardi19}.
In the following we provide a proof for completeness.

First, if there is a polynomial $p(x)$ with $|L\cap\Sigma^m|\le p(m)$ for all
$m$, then we also have $|L\cap \Sigma^{\le m}|\le \sum_{i=0}^m
|L\cap\Sigma^i|\le \sum_{i=0}^m p(i)\le (m+1)\cdot p(m)$, which is also a
polynomial bound. (Note that we may clearly assume that $x\mapsto p(x)$ is a
monotone function). Thus, $L$ has polynomial growth by our definition. The
converse is trivial.

Second, suppose there is a real $r>1$ such that $|L\cap \Sigma^{\le m}|\ge r^m$ for
infinitely many $m$. We claim that then $|L\cap\Sigma^{m}|\ge \sqrt{r}^m$
for infinitely many $m$. Towards a contradiction, suppose that there is a $k\ge
0$ such that for all $m\ge k$, we have $|L\cap \Sigma^{m}|<\sqrt{r}^m$. This implies
\begin{equation}
\begin{aligned}
	r^m&\le |L\cap \Sigma^{\le m}|=\sum_{i=0}^m |L\cap \Sigma^i|<M+\sum_{i=0}^m \sqrt{r}^i=M+\frac{\sqrt{r}^{m+1}-1}{\sqrt{r}-1}\label{growth-inequality}
\end{aligned}
\end{equation}
for all $m\ge k$, where $M$ is the constant $\sum_{i=0}^k
|L\cap\Sigma^i|-\sqrt{r}^i$.
The last equality follows from the induction
\[
  \sqrt{r}^0 = 1 = \frac{\sqrt{r} - 1}{\sqrt{r} -1}
  \quad \text{and} \quad
  \sum_{i=0}^{m+1} \sqrt{r}^i \mathrel{\overset{\text{IH}}{=}}
  \frac{\sqrt{r}^{m+1}-1}{\sqrt{r}-1} + \sqrt{r}^{m+1} =
  \frac{\sqrt{r}^{m+2}-1}{\sqrt{r}-1}.
\]
However, \cref{growth-inequality} is clearly
violated for large enough $m$. Again, the converse is obvious: If
$|L\cap\Sigma^m|\ge r^m$, then clearly also $|L\cap\Sigma^{\le m}|\ge r^m$.

\newoutputstream{todos}
\openoutputfile{main.todos.ctr}{todos}
\addtostream{todos}{\arabic{@todonotes@numberoftodonotes}}
\closeoutputstream{todos}

\end{document}